\definecolor{mdarkred}{RGB}{165,0,38}
\definecolor{mred}{RGB}{215,25,28}
\definecolor{mdarkorange}{RGB}{244,109,67}
\definecolor{morange}{RGB}{253,174,97}
\definecolor{myellow}{RGB}{254,224,144}
\definecolor{mverylightblue}{RGB}{224,243,248}
\definecolor{mlightblue}{RGB}{171,217,233}
\definecolor{mblue}{RGB}{116,173,209}
\definecolor{mdarkblue}{RGB}{69,117,180}
\definecolor{mverydarkblue}{RGB}{49,54,149}
\newcommand{\problem}[1]{\textnormal{#1}\xspace}
\newcommand{\rar}[1]{\textnormal{RAR}$(#1)$}
\newcommand{\lrs}[1]{\textnormal{LRS}$(#1)$}
\newcommand{\rai}{\problem{RAI}}
\newcommand{\tailoredsat}{\problem{$3$-SAT$^*$}}
\newcommand{\jobs}{\mathcal{J}}
\newcommand{\machs}{\mathcal{M}}
\newcommand{\emachs}{\machs}
\newcommand{\ress}{\mathcal{R}}
\newcommand{\Cmax}{C_{\max}}
\newcommand{\Cmin}{C_{\min}}
\newcommand{\Opt}{\operatorname{\text{\textsc{opt}}}}
\newcommand{\bigBorder}{\xi}
\newcommand{\largejobs}{\mathcal{L}}
\newcommand{\hugejobs}{\mathcal{H}}
\newcommand{\smalljobs}{\mathcal{S}}
\newcommand{\bordermachs}{\mathcal{B}}
\newcommand{\hugemachs}{\mathcal{X}}
\newcommand{\candidatemachines}{\mathcal{C}}
\newcommand{\true}{\top}
\newcommand{\false}{\bot}
\newcommand{\Oh}{\mathcal{O}}
\newcommand{\clausemach}{\mathtt{CMach}}
\newcommand{\truthmach}{\mathtt{TMach}}
\newcommand{\sortmach}{\mathtt{SMach}}
\newcommand{\clausejob}{\mathtt{CJob}}
\newcommand{\truthjob}{\mathtt{TJob}}
\newcommand{\variablejob}{\mathtt{VJob}}
\newcommand{\bridgejob}{\mathtt{BJob}}
\newcommand{\gatejob}{\mathtt{GJob}}
\newcommand{\sortjob}{\mathtt{SJob}}
\newcommand{\ampshiftjob}{\mathtt{ASJob}}
\newcommand{\ampbridgejob}{\mathtt{ABJob}}
\newcommand{\fgatemach}{\mathtt{FGMach}}
\newcommand{\bgatemach}{\mathtt{BGMach}}
\newcommand{\fsortmach}{\mathtt{FSMach}}
\newcommand{\bsortmach}{\mathtt{BSMach}}
\newcommand{\ampmach}{\mathtt{AMach}}
\newcommand{\tblock}{\mathcal{T}}
\newcommand{\sblock}{\mathcal{S}}
\newcommand{\cblock}{\mathcal{C}}
\newcommand{\eps}{\varepsilon}
\DeclarePairedDelimiter\floor{\lfloor}{\rfloor}
\DeclarePairedDelimiter\ceil{\lceil}{\rceil}
\DeclarePairedDelimiter\set{\lbrace}{\rbrace}
\DeclarePairedDelimiterX\sett[2]{\lbrace}{\rbrace}{ #1 \,\delimsize| \,\mathopen{} #2 }
\title{(In-)Approximability Results for Interval, Resource Restricted, and Low Rank Scheduling} 
\titlerunning{Interval, Resource Restricted, and Low Rank Scheduling} 
\author{Marten Maack}{Heinz Nixdorf Institute \& Department of Computer Science, Paderborn University, Paderborn, Germany}{marten.maack@hni.uni-paderborn.de}{https://orcid.org/0000-0001-7918-6642}{}
\author{Simon Pukrop}{Heinz Nixdorf Institute \& Department of Computer Science, Paderborn University, Paderborn, Germany}{simonjp@mail.uni-paderborn.de}{https://orcid.org/0000-0002-4473-5215}{}
\author{Anna Rodriguez Rasmussen}{Department of Mathematics, Uppsala University, Uppsala, Sweden}{anna.rodriguez-rasmussen@math.uu.se}{}{}
\authorrunning{M. Maack and S. Pukrop and A. Rodriguez Rasmussen} 
\keywords{Scheduling, Restricted Assignment, Approximation, Inapproximability} 
\begin{document}

\maketitle

\begin{abstract}
We consider variants of the restricted assignment problem where a set of jobs has to be assigned to a set of machines, for each job a size and a set of eligible machines is given, and the jobs may only be assigned to eligible machines with the goal of makespan minimization.
For the variant with interval restrictions, where the machines can be arranged on a path such that each job is eligible on a subpath, we present the first better than $2$-approximation and an improved inapproximability result.
In particular, we give a $(2-\frac{1}{24})$-approximation and show that no better than $9/8$-approximation is possible, unless P=NP. 
Furthermore, we consider restricted assignment with $R$ resource restrictions and rank $D$ unrelated scheduling.
In the former problem, a machine may process a job if it can meet its resource requirements regarding $R$ (renewable) resources.
In the latter, the size of a job is dependent on the machine it is assigned to and the corresponding processing time matrix has rank at most $D$.
The problem with interval restrictions includes the 1 resource variant, is encompassed by the 2 resource variant, and regarding approximation the $R$ resource variant is essentially a special case of the rank $R+1$ problem.
We show that no better than $3/2$, $8/7$, and $3/2$-approximation is possible (unless P=NP) for the 3 resource, 2 resource, and rank 3 variant, respectively.
Both the approximation result for the interval case and the inapproximability result for the rank 3 variant are solutions to open challenges stated in previous works.
Lastly, we also consider the reverse objective, that is, maximizing the minimal load any machine receives, and achieve similar results. 
\end{abstract}

\section{Introduction}\label{sec:introduction}

Makespan minimization on unrelated parallel machines, or unrelated scheduling for short, is considered a fundamental problem in approximation and scheduling theory.
In this problem, a set $\jobs$ of jobs has to be assigned to a set $\machs$ of machines via a schedule $\sigma: \jobs \rightarrow \machs$.
Each job $j$ has a processing time $p_{ij}$ depending on the machine $i$ it is assigned to and the goal is to minimize the makespan $\Cmax(\sigma) = \max_{i\in\machs} \sum_{j\in\sigma^{-1}(i)}p_{ij}$. 
In 1990, Lenstra, Shmoys, and Tardos \cite{DBLP:journals/mp/LenstraST90} presented a 2-approximation for this problem and further showed that no better than $1.5$-approximation can be achieved (unless P$=$NP) already for the restricted assignment problem, where each job $j$ has a size $p_j$ and $p_{ij} \in \set{p_j,\infty}$ for each machine $i$.
For each job $j$ we denote its set of eligible machines by $\emachs(j) = \sett{i\in\machs}{p_{ij} = p_j}$.
Closing or narrowing the gap between 2-approximation and $1.5$-inapproximability is a famous open problem in approximation \cite{DBLP:books/daglib/0030297} and scheduling theory \cite{SchuurmanW99}.
The present paper deals with certain subproblems of both unrelated scheduling and restricted assignment.

\subparagraph{Interval Restrictions.}

In the variant of restricted assignment with interval restrictions, denoted as \rai in the following, there is a total order of the machines and each job $j$ is eligible on a discrete interval of machines, i.e., $\machs = \set{M_1,M_2,\dots,M_m}$ and $\emachs(j) = \set{M_{\ell},M_{\ell+1},\dots M_r}$ for some $\ell,r\in[m]$.
There are several variants and special cases of this problem that are known to admit a polynomial time approximation scheme (PTAS), see \cite{DBLP:journals/eor/LiW10a,DBLP:journals/orl/MuratoreSW10,EpsteinL11,DBLP:journals/tcs/JansenMS20,DBLP:phd/dnb/Schwarz10,DBLP:journals/corr/KhodamoradiKRS16}, the most prominent of which is probably the hierarchical case \cite{DBLP:journals/eor/LiW10a} in which each job is eligible on an interval of the form $\set{M_{1},M_{2},\dots M_r}$, i.e., the first machine is eligible for each job. 
For \rai, on the other hand, there is an $(1+\delta)$-inapproximability result for some small but constant $\delta>0$ \cite{DBLP:conf/stacs/MaackJ20}.
Furthermore, Schwarz \cite{DBLP:phd/dnb/Schwarz10} designed a $(2-2/(\max_{j\in\jobs}p_j))$-approximation (assuming integral processing times); and Wang and Sitters \cite{DBLP:journals/ipl/WangS16} studied an LP formulation that provides an optimal solution for the special case with two distinct processing times and some additional assumption.

\subparagraph{Resource Restrictions.}

In the restricted assignment problem with $R$ resource restrictions, or \rar{R}, a set $\ress$ or $R$ (renewable) resources is given, each machine $i$ has a resource capacity $c_{r}(i)$ and each job $j$ has a resource demand $d_{r}(j)$ for each $r\in\ress$.
The eligible machines are determined by the corresponding resource constraints, i.e., $\emachs(j) =\sett[\big]{i\in\machs}{\forall r\in\ress:d_{r}(j) \leq c_{r}(i)}$ for each job $j$.
It is easy to see, that \rar{1} corresponds to the mentioned hierarchical case which admits a PTAS \cite{DBLP:journals/eor/LiW10a}.
On the other hand, there can be no approximation algorithm with ratios smaller than $48/47 \approx 1.02$ or $1.5$ for \rar{2} and \rar{4}, respectively, unless P=NP, see \cite{DBLP:conf/stacs/MaackJ20}.
The same paper also points out that the case with one resource is a special case of the interval case which in turn is a special case of the two resource case, i.e., \rar{1} $\subset$ \rai $\subset$ \rar{2}.
While the hierarchical case, i.e. \rar{1}, has been studied extensively before, \rar{R} was first introduced in a work by Bhaskara et al.~\cite{DBLP:conf/soda/BhaskaraKTW13}, who mentioned it as a special case of the next problem that we consider.

\subparagraph{Low Rank Scheduling.}

In the rank $D$ version of unrelated scheduling, or \lrs{D}, the processing time matrix $(p_{ij})$ has a rank of at most $D$.
Alternatively (see \cite{DBLP:conf/stacs/0011MYZ17}), we can assume that each job $j$ has a $D$ dimensional size vector $s(j)$ and each machine $i$ a $D$ dimensional speed vector $v(i)$ such that $p_{ij} = \sum_{k=1}^{D}s_k(j)v_k(i)$.
Now, \lrs{1} is exactly makespan minimization on uniformly related parallel machines, which is well known to admit a PTAS \cite{DBLP:journals/jacm/HochbaumS87}.
Bhaskara et al.~\cite{DBLP:conf/soda/BhaskaraKTW13}, who introduced \lrs{D}, presented a QPTAS for \lrs{2} along with some initial inapproximability results for $D>2$.
Subsequently, Chen et al.~\cite{DBLP:journals/orl/ChenYZ14} showed that there can be no better than $1.5$-approximation for \lrs{4} unless P=NP, and for \lrs{3} the same authors together with Marx \cite{DBLP:conf/stacs/0011MYZ17} ruled out a PTAS.
On an intuitive level, resource restrictions can be seen as a restricted assignment version of low rank scheduling.
However, there is a more direct relationship between the two problems: 
for each \rar{R} instance there exist \lrs{R+1} instances that are arbitrarily good approximations of the former (see \cite{DBLP:conf/stacs/MaackJ20}).
Hence, any approximation algorithm for \lrs{R+1} can also be used for \rar{R}, and any inapproximability result for \rar{R} carries over to \lrs{R+1}.
In fact many (but not all) inapproximability results for low rank scheduling essentially have this form.

\subparagraph{Results.}

We present improved approximation and inapproximability results for this family of problems.
In particular:
\begin{itemize}
\item An approximation algorithm for \rai with ratio $2-\frac{1}{24}\approx 1.96$ presented in \cref{sec:algorithms};
\item a reduction that rules out a better than $1.5$ approximation unless P=NP, i.e., a $1.5$-inapproximability result, for \rar{3} presented in \cref{sec:rar3_reduction}; 
\item a $8/7$-inapproximability result for \rar{2} presented in \cref{sec:rar2_reduction};
\item a $9/8$-inapproximability result for \rai presented in \cref{sec:rai_reduction};
\item and a $1.5$-inapproximability result for \lrs{3} presented in \cref{sec:rank3}.
\end{itemize}
The positive result for \rai can be considered the first of the two main contributions of this paper.
Finding a better than 2-approximation for \rai was posed as an open challenge in previous works \cite{DBLP:journals/ipl/WangS16,DBLP:journals/tcs/JansenMS20,DBLP:phd/dnb/Schwarz10}.
When considering the respective results in \cite{DBLP:journals/ipl/WangS16} and \cite{DBLP:phd/dnb/Schwarz10}, in particular, it seems highly probable that the actual goal of the research was to address exactly that challenge. 
The presented approximation algorithm follows the approach of solving and rounding a relaxed linear programming formulation of the problem, which has been used in the classical work by Lenstra et al. \cite{DBLP:journals/mp/LenstraST90} and many of the results thereafter.
In particular, we extend the so called assignment LP due to Lenstra et al. \cite{DBLP:journals/mp/LenstraST90} and design a customized rounding approach.
Both the linear programming extension and the rounding approach utilize extensions and refinements of ideas from \cite{DBLP:phd/dnb/Schwarz10} and \cite{DBLP:journals/ipl/WangS16}.
Our result joins the relatively short list of special cases of the restricted assignment problem that do not allow a PTAS and for which an approximation algorithm with rate smaller than 2 is known. 
Other notable entries are the restricted assignment problem with only two processing times \cite{DBLP:conf/soda/ChakrabartyKL15} and the so-called graph balancing case \cite{DBLP:journals/algorithmica/EbenlendrKS14}, where each job is eligible on at most two machines.

The inapproximability results directly build upon the results presented in the paper \cite{DBLP:conf/stacs/MaackJ20}, which in turn utilizes many of the previously published ideas, e.g., from \cite{DBLP:journals/algorithmica/EbenlendrKS14,DBLP:conf/soda/BhaskaraKTW13,DBLP:journals/orl/ChenYZ14,DBLP:conf/stacs/0011MYZ17,DBLP:journals/mp/LenstraST90}.
We use the satisfiability problem presented in \cite{DBLP:conf/stacs/MaackJ20} as the starting point for all of our reductions.
For the \rai result in particular, we refine and restructure the respective results from \cite{DBLP:conf/stacs/MaackJ20} aiming for a significantly better ratio.
The respective reduction involves a sorting process and curiously the main improvement in the reduction involves changing a sorting process resembling insertion sort into one resembling bubble sort.
Due to this change, the construction becomes locally less complex enabling the use of smaller processing times and hence a stronger inapproximability result.
Furthermore, the simplified construction in the result enables us to use the basic structure of the reduction as a starting point for the second main result of the paper, namely, the $1.5$-inapproximability result for \rar{3}.
For this reduction several additional considerations and gadgets are needed, arguably making it the most elaborate of the presented results.
The search for an inapproximability result with a reasonably big ratio for \rar{3} was stated as an open challenge in the long version of \cite{DBLP:conf/stacs/0011MYZ17}.
Adding the new result yields a very clear picture regarding the approximability of low rank makespan minimization: 
There is a PTAS for \lrs{1}, a QPTAS for \lrs{2}, and a 1.5-inapproximability result for \lrs{D} with $D\geq 3$. 
The last two reductions regarding \rar{2} and \rar{3} yield much improved inapproximability results for the respective problems using comparatively simple and elegant reductions.
The result regarding \rar{3}, in particular, closes a gap in the results of \cite{DBLP:conf/stacs/MaackJ20} and also yields an (arguably) easier, alternative proof for the result of \cite{DBLP:journals/orl/ChenYZ14}.
Finally, we note that all of the inapproximability results regarding restricted assignment with resource restrictions can be directly applied to the so called fair allocation or santa claus versions of the problems.
In these problem variants, we maximize the minimum load received by the machines rather than minimization of the maximum load, i.e., the objective function is given by $\Cmin(\sigma) = \min_{i\in\machs} \sum_{j\in\sigma^{-1}(i)}p_{ij}$ in this case.

\subparagraph{Related Work.}

We refer to \cite{DBLP:conf/stacs/MaackJ20}, the corresponding long version \cite{DBLP:journals/corr/abs-1907-03526}, and the references therein for a more detailed discussion of related work and only briefly discuss some further references.
Regarding the problem of closing the gap between the $2$-approximation and $1.5$-approximability, there has been a promising line of research (e.g. \cite{DBLP:journals/siamcomp/Annamalai19,DBLP:journals/siamcomp/JansenR20}) in the last decade, starting with a breakthrough result due to Svensson \cite{DBLP:journals/siamcomp/Svensson12}, which in turn was preceded by corresponding results for the fair allocation (santa claus) version of the problem, see, e.g.,~\cite{DBLP:conf/stoc/BansalS06,DBLP:conf/soda/Feige08}.
These results are based on local search algorithms that usually do not run in polynomial time, but can be used to prove a small integrality gap for a certain linear program, which therefore can be used to approximate the optimum objective value in polynomial time without actually producing a schedule.
In the online setting, versions of restricted assignment with different types of restrictions, and variants of \rar{1} in particular, have been intensively studied.
We refer to the surveys \cite{LeungL08survey,LeungL16update,DBLP:journals/anor/LeeLP13} for an overview.
Lastly, we note that the low rank scheduling has also been considered from the perspective of fixed-parameter tractable algorithms \cite{DBLP:conf/stacs/0011MYZ17}.

\section{Approximation Algorithm for Makespan Minimization with Interval Restrictions}\label{sec:algorithms}

In this section, we establish the first approximation for \rai with an approximation factor better than $2$:
\begin{theorem}
	\label{the:approx2minusgamma}
	There is a $(2-\gamma)$-approximation for \rai with $\gamma = \frac{1}{24}$.
\end{theorem}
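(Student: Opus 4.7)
The strategy is to follow the LP-rounding paradigm of Lenstra, Shmoys, and Tardos~\cite{DBLP:journals/mp/LenstraST90}, but to both strengthen the relaxation and customize the rounding so as to exploit the interval structure. Via parametric search we fix a guess $T$ for $\Opt$ and aim to produce a schedule of makespan at most $(2-\gamma)T$; the smallest successful $T$ then bounds the result. The classical assignment LP only guarantees makespan $T + p_{\max} \leq 2T$ after rounding, so the goal is to shave a constant fraction off the $p_{\max}$ term by using the fact that on \rai instances the eligible set $\emachs(j)$ of every job is contiguous.

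The first step is to extend the assignment LP with additional inequalities that capture the interval geometry, for instance by bounding the total fractional load placed on any contiguous window $[M_\ell,M_r]$ by jobs whose intervals are contained in that window, refining the formulations of Schwarz~\cite{DBLP:phd/dnb/Schwarz10} and Wang--Sitters~\cite{DBLP:journals/ipl/WangS16}. Jobs are then classified by size relative to $T$ into \emph{huge} ($p_j > (1-\gamma)T$), \emph{medium}, and \emph{small} ($p_j \leq \gamma T$) classes. Since two huge jobs cannot share a machine under makespan $T$, the strengthened LP together with an exchange argument along the machine ordering should pre-commit the huge jobs to machines in a canonical way (e.g.\ by interval left endpoint) while still leaving a feasible residual LP on the remaining machines.

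The rounding is then applied to the residual LP. The crucial property of interval restrictions is that the fractional support of every job is contiguous, which endows the bipartite support graph of an extreme point with a path-like structure along the machine ordering, considerably more rigid than the arbitrary pseudo-forests arising in the general LST rounding. My plan is to process this support graph from left to right along the machine path, rounding jobs integrally so that each machine receives at most one rounded-up extra job, and — crucially — whenever a machine already carries a huge job or a heavy integral load, the extra job assigned to it is selected from the small class.

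The main obstacle is guaranteeing this last property. When a machine's current load already exceeds $(1-\gamma)T$, the extra job sent to it during rounding must indeed be small in order to keep the total below $(2-\gamma)T$. The interval structure is essential here, because any conflicting medium job can be rerouted to a neighbouring machine that still lies in its interval, provided the strengthened LP certifies enough residual slack at that neighbour. The heart of the argument will therefore be to maintain a single global invariant along the machine path — tracking the accumulated ``debt'' that has been pushed to the right — and to show that the strengthened LP prevents this debt from growing too large. Tuning the two thresholds separating huge, medium, and small jobs so that this invariant is preserved is what yields the explicit constant $\gamma = \frac{1}{24}$.
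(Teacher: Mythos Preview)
Your high-level framework (binary search on $T$, strengthened assignment LP, job classification by size, interval-aware rounding) is in the right spirit, but what you have written is a plan, not a proof, and several of its load-bearing steps diverge from what actually makes the argument go through. First, the size thresholds are different: the paper classifies at $0.5T$ and $(0.5+\xi)T$ with $\xi=\tfrac{1}{24}$, so that \emph{every} machine receives at most one job from $\largejobs\cup\hugejobs$; your thresholds $(1-\gamma)T$ and $\gamma T$ give a huge class that is almost empty and a medium class containing essentially all interesting jobs, and there is no structural fact (``at most one per machine'') to exploit for that medium class. Second, the paper does \emph{not} use extreme-point structure of the LP or an LST-style pseudo-forest rounding; the rounding is a direct left-to-right ``least flexible first'' heuristic executed in three separate phases (huge jobs, then large jobs, then small jobs), and the analysis never appeals to vertex solutions. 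Your claim that the support graph of an extreme point is ``path-like'' in \rai is neither proved nor used.

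The central mechanism you are missing is the two-phase placement of big jobs. Huge jobs are placed one per machine whenever the running fractional huge count crosses an integer; then the machines are partitioned into \emph{regions} where each region carries roughly one unit of fractional large load, and at most two large jobs are placed per region on ``candidate'' machines that received fractional large/huge load but no integral huge job. This is precisely where the work happens that your ``rerouting to a neighbour'' and ``debt invariant'' sentences gesture at without content. The constant $\gamma=\tfrac{1}{24}$ is not obtained by unspecified tuning: it arises from the concrete system $\gamma\le\xi$, $\gamma+\xi\le\tfrac{1}{12}$, $8\xi+7\gamma\le\tfrac34$ that falls out of the case analysis in the small-job phase (specifically a split on $|L^*|\le 6$ versus $|L^*|\ge 7$), together with the strengthened LP constraint $\sum_{i\in\machs(\ell,r)}\sum_{j\in\hugejobs} x_{ij}\le \lfloor (T|\machs(\ell,r)| - p(\smalljobs(\ell,r)))/((0.5+\xi)T)\rfloor$ on every interval. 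None of this machinery is present in your proposal, and without it there is no argument that the rounding succeeds or that $\tfrac{1}{24}$ is the right constant.
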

The particular value of the parameter $\gamma$ is justified in the end.
To achieve this result, we first formulate a customized linear program based on the assignment LP due to Lenstra et al. \cite{DBLP:journals/mp/LenstraST90} and develop a rounding approach that places different types of jobs in phases.
Note that the placement of big jobs with size close to $\Opt$ (where $\Opt$ is the makespan of an optimal schedule) is often critical when aiming for an approximation ratio of smaller than $2$ for a makespan minimization problem. 
For instance, the classical 2-approximation \cite{DBLP:journals/mp/LenstraST90} for restricted assignment produces a schedule of length at most $\Opt + \max_{j\in\jobs} p_j$ where $\Opt$ is the makespan of an optimal schedule and hence the approximation ratio is better if $\max_{j\in\jobs} p_j$ is strictly smaller than $\Opt$.
This is also the case with our approach -- the main effort goes into the careful placement of such big jobs.
In particular, we place the largest jobs in a first rounding step and the remaining big jobs in a second.
All of these jobs have the property that each machine should receive at most one of them and they are placed accordingly. 
Moreover, the placement is designed to deviate not too much from the fractional placement due to the LP solution.
In a last step, the remaining jobs are placed.
Each rounding step is based on a simple heuristic approach that considers the machines from left to right and places the least flexible eligible jobs first, i.e., the jobs that have not been placed yet, are eligible on the current machine, and have a minimal last eligible machine in the ordering of the machines. 
Both the LP and the rounding approach reuse ideas from \cite{DBLP:phd/dnb/Schwarz10,DBLP:journals/ipl/WangS16}.
Hence, the main novelty lies in the much more elaborate approach for placing the mentioned big jobs in two phases. 

In the following, we first establish some preliminary considerations; then briefly discuss the least flexible first heuristic utilized in the rounding approach; next we formulate the LP and argue that it is indeed a relaxation of the problem at hand; and then discuss and analyze the different phases of the rounding procedure step by step.

\subparagraph{Preliminaries.}

For any integer $k$, we set $[k] = \set{0,\dots,k-1}$.
We apply the standard technique (see \cite{DBLP:journals/mp/LenstraST90}) of using a binary search framework to guess a candidate makespan $T$.
The goal is then to either correctly decide that no schedule with makespan $T$ exists, or to produce a schedule with makespan at most $(2-\gamma)T$.
Given this guess $T$, we divide the jobs $j$ into small ($p_j \leq 0.5T$), large ($0.5T < p_j \leq (0.5+\xi)T$) and huge ($(0.5+\xi)T<p_j$) jobs depending on some parameter $\xi = \frac{1}{24}$ which is justified later on.
We denote the sets of small, large, and huge jobs as $\smalljobs$, $\largejobs$, and $\hugejobs$, respectively.
Furthermore, we fix the (total) order of the machines such that each job is eligible on consecutive machines.
This is possible since we are considering \rai.
For the sake of simplicity, we assume $\machs = [m]$ with the ordering corresponding to the natural one and set $\machs(\ell,r) = \set{\ell,\dots, r}$ for each $\ell,r\in\machs$.
When considering the machines, we use a left to right intuition with predecessor machines on the left and successor machines on the right. 
Note, that for each job $j$ there exists a left-most and right-most eligible machine and we denote these by $\ell(j)$ and $r(j)$, respectively, i.e., $\emachs(j) = \machs(\ell(j),r(j))$.
For a set of jobs $J\subseteq\jobs$, we call a job $j\in J$ \emph{least flexible} in $J$ if $r(j)$ is minimal in $\sett{r(j')}{j'\in J}$, and a job $j$ is called \emph{less flexible} than a job $j'$ if $r(j)\leq r(j')$.
Lastly, we set $J(\ell,r) = \sett[\big]{j\in J}{\emachs(j) \subseteq \machs(\ell,r)}$ for each set of jobs $J\subseteq\jobs$ and pair of machines $\ell,r\in\machs$, and $p(J) = \sum_{j\in\jobs} p_j$.

\subparagraph{Least Flexible First.}

Consider the least flexible first heuristic for \rai:
The optimum makespan $\Opt$ is lower bounded by the maximum job size $\max_{j\in\jobs}p_j$ as well as the average load $p(\jobs(\ell,r))/|\machs(\ell,r)|$ of jobs that have to be placed in any given interval of machines $\machs(\ell,r)$ in a feasible schedule.
Let $L\leq\Opt$ be the maximum of all of the above lower bounds.  
Starting with the left-most machine in the ordering, the heuristic works as follows:
\begin{itemize}
\item Let $i^*$ be the current machine and $J$ the set of jobs that have not been placed yet and are eligible on $i^*$.
\item If $i^*$ has received a load of at most $L$ up to now and $J\neq \emptyset$, place a \emph{least flexible} job $j\in J$ on $i^*$, i.e., a job $j\in J$ with minimal $r(j)$, and consider $i^*$ again.
\item Otherwise, consider the next machine in the ordering or stop if there is none.
\end{itemize}
It is easy to see that this simple approach yields a 2-approximation:
\begin{lemma}
The least flexible first heuristic places each job and each machine receives a load of at most $L + \max_{j\in\jobs}p_j \leq 2\Opt$.
\end{lemma}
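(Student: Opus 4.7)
The statement splits into a load bound and the claim that every job is placed, and the first is immediate from the heuristic's stopping condition. Whenever a job $j$ is appended to the current machine $i^*$, the load of $i^*$ just before the append is at most $L$, so the load afterwards is at most $L + p_j \leq L + \max_{j'\in \jobs} p_{j'} \leq 2L \leq 2\Opt$, where the last two inequalities use $\max_{j'}p_{j'} \leq L$ and $L \leq \Opt$, both built into the definition of $L$.

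For the placement claim I would proceed by contradiction: suppose a job $j$ is unplaced, and set $r^* = r(j)$. First observe that every machine $i \in \machs(\ell(j), r^*)$ must be \emph{tight}, meaning final load strictly greater than $L$: while processing $i$, $j$ was an eligible candidate, so the algorithm's only reason to move on was the load cap being exceeded. Moreover, the least flexible first rule applied with $j$ in the candidate set forces every job placed on $i$ to have right endpoint at most $r(j) = r^*$. If in addition every job placed on $\machs(\ell(j), r^*)$ had left endpoint at least $\ell(j)$, then those placed jobs together with $j$ would all lie in $\jobs(\ell(j), r^*)$, and the total size would strictly exceed $L \cdot |\machs(\ell(j), r^*)|$, contradicting $L \geq p(\jobs(\ell(j), r^*))/|\machs(\ell(j), r^*)|$.

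The main obstacle is removing that assumption on left endpoints: a job $j^\circ$ with $\ell(j^\circ) < \ell(j)$ may have been placed on some $i^\circ \in \machs(\ell(j), r^*)$. I plan to handle this by iteratively extending the interval leftward. For such a $j^\circ$, on every machine in $\machs(\ell(j^\circ), i^\circ - 1)$ the same reasoning applies with $j^\circ$ playing the role of $j$: these machines are also tight, and every job placed on them has right endpoint at most $r(j^\circ) \leq r^*$. Repeating the extension produces a strictly decreasing sequence of left endpoints, which is integer valued and bounded below, so the process terminates at an interval $\machs(\ell^*, r^*)$ such that every job placed on this interval has eligibility entirely inside it, every machine in the interval is tight, and $j$ itself lies in $\jobs(\ell^*, r^*)$. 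Summing sizes yields $p(\jobs(\ell^*, r^*)) > L \cdot |\machs(\ell^*, r^*)|$, contradicting the definition of $L$ and completing the argument. The only part that requires care is checking that both invariants (tightness of all machines, and containment of all placed jobs in the current interval) are preserved by each extension step.
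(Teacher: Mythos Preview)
Your proposal is correct and follows essentially the same approach as the paper: both identify a maximal interval $\machs(\ell^*,r^*)$ of overloaded machines on which every placed job has its eligibility interval contained inside, and then derive the same averaging contradiction against the definition of $L$. The only difference is cosmetic---the paper defines $\ell^*$ in one stroke as the left-most machine with the two required properties and then verifies left-endpoint containment, whereas you build $\ell^*$ by iterated leftward extension maintaining both invariants; the key step (an eligible-but-not-yet-placed witness forces the preceding machine to be full of less flexible jobs) is identical in both.
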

\begin{proof}
The heuristic obviously never places a load greater than $L + \max_{j\in\jobs}p_j$ on any machine.
Now assume for the sake of contradiction that there exists a job that is not placed by this heuristic.
Let $j^*$ be a job that is not placed, i.e., after considering the right-most eligible machine $r^* = r(j^*)$ the job $j^*$ has not been placed by the algorithm.
Then $r^*$ did receive a load greater than $L$ and we have $r(j)\leq r^*$ for each job $j$ placed on $r^*$.
Let $\ell^*\leq r^*$ be the left-most machine with the properties that each machine in $\machs(\ell^*,r^*)$ did receive a load greater than $L$ and $r(j)\leq r^*$ for each job $j$ placed on $\machs(\ell^*,r^*)$.
Furthermore, let $J^*$ be the set of jobs placed on $\machs(\ell^*,r^*)$ by the algorithm together with $j^*$.
Then $\ell(j) \geq \ell^*$ for each $j\in J^*$ since otherwise there exists a machine $i$ directly preceding $\ell^*$ that could have received $j$ as well but did not.
This would imply that $i$ did receive a load greater than $L$ of jobs less flexible than $j$ yielding a contradiction to the choice of $\ell^*$. 
Hence, $J^*\subseteq \jobs(\ell^*,r^*)$ yielding the contradictory statement $p(\jobs(\ell^*,r^*)) \geq p(J^*) > |\machs(\ell^*,r^*)|L \geq p(\jobs(\ell^*,r^*))$ (considering the definition of $L$).
\end{proof}
We are not aware of this observation being published before, but consider it very likely that it was already known, in particular, since variants thereof are used in \cite{DBLP:journals/ipl/WangS16,DBLP:phd/dnb/Schwarz10}.

\subparagraph{Linear Program.}

The classical assignment LP (see \cite{DBLP:journals/mp/LenstraST90}) is given by assignment variables $x_{ij}\in [0,1]$ for each $i\in \machs$ and $j \in \jobs$ and the following constraints:
\begin{align}
\sum_{i\in \machs}x_{ij} &=1 & \forall j\in\jobs \label{lp:jobs} \\
\sum_{j\in \jobs} p_jx_{ij} &\leq T & \forall i\in\machs \label{lp:machs} \\
x_{ij} &=0 & \forall j\in\jobs, i\in\machs \setminus \machs(j) \label{lp:eligible} 
\end{align}
\cref{lp:jobs} guarantees that each job is (fractionally) placed exactly once;
\cref{lp:machs} ensures that each machine receives at most a load of $T$; 
and due to \cref{lp:eligible} jobs are only placed on eligible machines.
We add additional constraints that have to be satisfied by any integral solution.
In particular, we add the following constraints using parameters $UB(\ell,r)$ for each $\ell,r\in\machs$ with $\ell\leq r$, which will be properly introduced shortly:
\begin{align}
\sum_{j\in \largejobs\cup\hugejobs} x_{ij} &\leq 1 & \forall i\in\machs \label{lp:largeorhugepermach}\\
\sum_{i\in \machs(\ell,r)}\sum_{j \in \hugejobs}x_{ij} &\leq UB(\ell,r) & \forall \ell,r\in\machs, \ell\leq r \label{lp:hugeinterval} 
\end{align}
\cref{lp:largeorhugepermach} captures the simple fact that no machine may receive more than one job of size larger than $0.5T$ and was used in \cite{DBLP:journals/algorithmica/EbenlendrKS14} as well.
The bound $UB(\ell,r)$, on the other hand, is defined in relation to the total load of small jobs that has to be scheduled in the respective interval $\machs(\ell, r)$.
In particular, we consider the overall load of small jobs that have to be placed in the interval together with the load due to huge jobs with their sizes rounded down to their minimum size.
The respective load has to be bounded by $T$ times the number of machines in the interval, i.e., $\sum_{i\in\machs(l,r)}\sum_{ j\in \hugejobs}(0.5+\bigBorder)Tx_{ij} + p(\smalljobs(\ell,r)) \leq T|\machs(l,r)|$.
Since the number of huge jobs placed in an interval is integral for an integral solution, we can therefore set $UB(\ell,r) = \floor[\big]{(T|\machs(l,r)| - p(\smalljobs(\ell,r))) / ((0.5+\bigBorder)T) }$.
We note that a constraint similar to \cref{lp:hugeinterval} is also used in \cite{DBLP:journals/ipl/WangS16,DBLP:phd/dnb/Schwarz10}.
Summing up, we try to solve the linear program given by \cref{lp:jobs,lp:machs,lp:eligible,lp:hugeinterval,lp:largeorhugepermach} which is indeed a relaxation for~\rai.
If this is not successful, we reject $T$ and otherwise round the solution $x$ using the procedure described in the following and yielding a rounded solution $\bar{x}$.

\subparagraph{Placement of Huge Jobs.}

Starting with the first machine in the ordering, we place the huge jobs as follows:
\begin{itemize}
\item Let $i^*$ be the current machine and $H$ the set of huge jobs that have not been placed yet and are eligible on $i^*$.
\item If $\floor[\big]{\sum_{i\in\machs(0,i^*)} \sum_{j\in\hugejobs} x_{ij}} > \floor[\big]{\sum_{i\in\machs(0,i^*-1)} \sum_{j\in\hugejobs} x_{ij}}$ and $H\neq \emptyset$, place a least flexible job $j\in H$ on $i^*$, i.e., we set $\bar{x}_{i^*j} = 1$.
\item Consider the next machine in the ordering or stop if there is none.
\end{itemize}
We denote the set of machines that are considered by the above procedure as $\hugemachs $, i.e., $\hugemachs = \sett[\big]{i^*\in\machs}{\floor[\big]{\sum_{i\in\machs(0,i^*)} \sum_{j\in\hugejobs} x_{ij}}  > \floor[\big]{\sum_{i\in\machs(0,i^*-1)} \sum_{j\in\hugejobs} x_{ij}}}$.
This procedure indeed works and we preserve a connection to the original LP solution:
\begin{lemma}\label{lem:approx:hugeBound}
All of the huge jobs are placed (on eligible machines) by the above procedure and, for each $\ell,r\in\machs$ with $\ell\leq r$, we have $\sum_{i\in \machs(\ell,r)}\sum_{j \in \hugejobs}\bar{x}_{ij} \leq \ceil[\big]{ \sum_{i\in \machs(\ell,r)}\sum_{j \in \hugejobs}x_{ij} }$.
\end{lemma}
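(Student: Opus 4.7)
The argument hinges on a few simple properties of the cumulative fractional sum $S_i := \sum_{i'\in\machs(0,i)}\sum_{j\in\hugejobs} x_{i'j}$ (with the convention $S_{-1}:=0$). Constraint~\cref{lp:largeorhugepermach} yields $S_i - S_{i-1}\le 1$, and \cref{lp:jobs} gives $S_{m-1}=|\hugejobs|$, an integer. A non-decreasing sequence with unit-bounded increments and integer endpoint crosses each integer threshold exactly once, so $|\hugemachs|=|\hugejobs|$. Moreover, every placement made by the procedure is automatically on an eligible machine, since the procedure only draws from the current set $H$ of eligible unplaced huge jobs.

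The non-trivial task is to show that $H$ is non-empty whenever it is consulted, i.e., that all huge jobs are in fact placed. I argue by contradiction: assume some $j^*\in\hugejobs$ with $r^*=r(j^*)$ is never placed, and let $\ell^*$ be the smallest index such that on every machine in $\hugemachs\cap\machs(\ell^*,r^*)$ a huge job with right endpoint $\le r^*$ has been placed. First I show $\ell^*\le \ell(j^*)$: at every step $i\in\hugemachs\cap\machs(\ell(j^*),r^*)$ the job $j^*$ lies in $H$, so least-flexible-first indeed places some job with $r\le r^*$ there. Next, if $\ell^*>0$ then machine $\ell^*-1$ witnesses the failure of the defining property---either $H$ was empty at $\ell^*-1$ or a job with $r>r^*$ was placed there---and in both alternatives, invoking the least-flexible-first rule once more at $\ell^*-1$ forces $\ell(j)\ge \ell^*$ for every huge job $j$ placed inside $\hugemachs\cap\machs(\ell^*,r^*)$, and likewise $\ell(j^*)\ge \ell^*$.

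Consequently, the jobs placed on $\hugemachs\cap\machs(\ell^*,r^*)$ together with $j^*$ are all contained in $\hugejobs(\ell^*,r^*)$, giving $|\hugejobs(\ell^*,r^*)|\ge |\hugemachs\cap\machs(\ell^*,r^*)|+1 = \lfloor S_{r^*}\rfloor - \lfloor S_{\ell^*-1}\rfloor + 1$. On the other hand, each job of $\hugejobs(\ell^*,r^*)$ is by \cref{lp:eligible,lp:jobs} fractionally placed entirely inside $\machs(\ell^*,r^*)$, so $|\hugejobs(\ell^*,r^*)|\le S_{r^*}-S_{\ell^*-1}$. Combining forces $S_{r^*}-\lfloor S_{r^*}\rfloor \ge 1 + (S_{\ell^*-1}-\lfloor S_{\ell^*-1}\rfloor)\ge 1$, which contradicts that a fractional part is strictly less than $1$.

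For the interval bound, once each machine of $\hugemachs$ is known to receive exactly one huge job, the left-hand side equals $|\hugemachs\cap\machs(\ell,r)|=\lfloor S_r\rfloor-\lfloor S_{\ell-1}\rfloor$ while the right-hand side equals $\lceil S_r-S_{\ell-1}\rceil$, and the claim reduces to the standard integer identity $\lfloor b\rfloor-\lfloor a\rfloor\le \lceil b-a\rceil$ for $b\ge a$. The main obstacle throughout is the bookkeeping in the minimality argument for $\ell^*$: the least-flexible-first tie-breaking must be exploited to control the left and right endpoints of the competing huge jobs simultaneously, so that the contradiction takes the form of a fractional part exceeding one rather than a direct violation of \cref{lp:hugeinterval} (which concerns small jobs and is not needed here).
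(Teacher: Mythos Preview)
Your proposal is correct and follows essentially the same approach as the paper: a minimality argument on $\ell^*$ to trap all competing huge jobs inside $\hugejobs(\ell^*,r^*)$, followed by a counting contradiction against the fractional mass in $\machs(\ell^*,r^*)$. Your presentation is somewhat more explicit than the paper's---you work directly with the cumulative sums $S_i$, spell out $|\hugemachs|=|\hugejobs|$, and cast the final contradiction as a fractional part exceeding~$1$, whereas the paper simply asserts $|\machs(\ell^*,r^*)\cap\hugemachs|\ge |H^*|$ from the LP placement and observes the clash with $|\machs(\ell^*,r^*)\cap\hugemachs|<|H^*|$; but these are the same argument in different clothing. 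For the interval bound the paper gives a one-line justification, while you unpack it into the telescoping identity $|\hugemachs\cap\machs(\ell,r)|=\lfloor S_r\rfloor-\lfloor S_{\ell-1}\rfloor$ together with $\lfloor b\rfloor-\lfloor a\rfloor\le\lceil b-a\rceil$; again the same content, just made explicit.
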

\begin{proof}
The second statement directly follows from the fact that we only place a new job if the sum of fractional huge jobs placed up to the current machine in the LP solution reaches a new integer.
Regarding the first, assume for the sake of contradiction that there exists a huge job $j^*$ that is not placed.
We set $r^* = r(j^*)$.
Note that $\hugemachs\cap\emachs(j^*)\neq\emptyset$ since $j^*$ was placed fractionally by the LP and for the same reason the last such machine $r'\leq r^*$ or some predecessor did receive some of the fractional load of $j^*$ in the LP as well. 
Then $r'$ did receive a huge job $j$ with $r(j)\leq r(j^*)$.
Let $\ell^*$ be the left-most machine such that each machine in $\machs(\ell^*,r^*)\cap \hugemachs$ did receive a huge job $j$ with $r(j) \leq r(j^*)$ and let $H^*$ be the set of huge jobs placed by the procedure on $\machs(\ell^*,r^*)\cap \hugemachs$ together with $j^*$.
Then we have $\ell^*\leq \ell(j)$ for each $j\in H^*$ since otherwise there exist a machine $i\in\hugemachs$ directly preceding $\ell^*$ that may have received a job from $H^*$.
Since this did not happen it must have received a less flexible job, which is a contradiction to the choice of $\ell^*$.
Hence, $H^* \subseteq \hugejobs(\ell^*,r^*)$ but $|\machs(\ell^*,r^*)\cap \hugemachs| < |H^*|$.
This is a contradiction since each job in $H^*$ was completely placed in $\machs(\ell^*,r^*)$ by the LP which implies $|\machs(\ell^*,r^*)\cap \hugemachs| \geq |H^*|$.  
\end{proof}
We note that this first rounding step is very similar to the first rounding step in \cite{DBLP:journals/ipl/WangS16}.

\subparagraph{Mapping out the Regions.}

In the next step, we divide the machines into regions, where each region did receive fractional large load of (roughly) one.
To that end, we define a set of border machines $\bordermachs$ as the machines considered from left to right where the sum of fractionally placed large jobs hits a new integer, i.e.,
$\bordermachs = \sett[\big]{i'\in\machs}{\floor[\big]{\sum_{i\in\machs(0,i')} \sum_{j\in\largejobs} x_{ij}} > \floor[\big]{\sum_{i\in\machs(0,i'-1)} \sum_{j\in\largejobs} x_{ij}}}$.
Moreover, let $\bordermachs = \set{i_1,\dots,i_q}$ with $i_1<\dots<i_q$ and $i_0$ the left-most machine with $\sum_{j\in\largejobs}x_{i_0j} > 0$. 
For each $s\in [q] =\set{0,\dots, q-1}$, we may initially define the $s$-th region as $R^s = \machs(i_s, i_{s+1})$.
At this point consecutive regions overlap by one machine.
We want to change this, while guaranteeing that each region retains at least one \emph{candidate machine} that may receive a large job in the following.
In particular, a machine $i\in\machs$ is a candidate if it did receive some fractional large or huge job in the LP solution, i.e., $\sum_{j\in\hugejobs\cup\largejobs}x_{ij} > 0 $, but no huge job afterwards, i.e., $\sum_{j\in\hugejobs}\bar{x}_{ij} = 0 $.
We denote the set of candidate machines as $\candidatemachines$.
For each $s\in[q-1]$, we apply the following procedure in incremental order:
\begin{itemize}
\item Check whether region $R^s$ needs the last machine to have at least one candidate, i.e., $\machs(i_s,i_{s+1} -1)\cap \candidatemachines = \emptyset$.
\item If this is the case, we set $R^{s+1} = \machs(i_{s+1} +1,i_{s+2})$ and otherwise set $R^{s} = \machs(i_{s} ,i_{s+1} -1)$. 
\end{itemize}
After applying this procedure, we have:
\begin{lemma}
	\label{lem:approx:1MachPerRegion}
	The regions are non-overlapping and each contain at least one candidate.
\end{lemma}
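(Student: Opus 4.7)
The plan is to treat the two assertions separately. Non-overlapping is essentially immediate from the construction: at iteration $s$, the single machine $i_{s+1}$ shared between $R^s$ and $R^{s+1}$ is awarded to exactly one of the two regions, and this is done for every $s \in [q-1]$, so after the loop the regions are pairwise disjoint.

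For the candidate property I would first prove an auxiliary claim: every initial region $\machs(i_s, i_{s+1})$ contains at least one candidate. The argument is by contradiction and counting. If no candidate exists in $\machs(i_s, i_{s+1})$, then every machine in this interval with $\sum_{j \in \largejobs \cup \hugejobs} x_{ij} > 0$ must belong to $H := \machs(i_s, i_{s+1}) \cap \hugemachs$, since any non-$\hugemachs$ machine carrying positive fractional large-or-huge load would already be a candidate. The border definition at $i_s$ and $i_{s+1}$ forces the total fractional large load on $\machs(i_s, i_{s+1})$ to strictly exceed $1$, and this load is supported entirely on $H$. Summing constraint \cref{lp:largeorhugepermach} over $H$ then bounds the total fractional huge load on $H$ strictly below $|H| - 1$, so by \cref{lem:approx:hugeBound} the number of rounded huge jobs inside $\machs(i_s, i_{s+1})$ is at most $|H| - 1$. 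This contradicts that the huge-job rounding procedure assigns one huge job to each of the $|H|$ machines in $H \subseteq \hugemachs$.

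With the auxiliary claim in hand, the candidate property of the final regions follows by case analysis on iteration $s$. Case 1 explicitly certifies a candidate in $\machs(i_s, i_{s+1} - 1) \subseteq R^s$, while case 2 forces the auxiliary claim's candidate to be the machine $i_{s+1}$, which is retained in $R^s$ by the case-2 branch. The main obstacle, and the hardest step, is the coupling with iteration $s - 1$: if case 2 occurred there, the current $R^s$ starts at $i_s + 1$ rather than $i_s$, and the auxiliary claim alone does not place a candidate inside the current $R^s$. I plan to resolve this by strengthening the counting argument to the wider interval $\machs(i_{s-1}, i_{s+1})$, whose total fractional large load strictly exceeds $2$ by the border definition at $i_{s-1}$, $i_s$, and $i_{s+1}$. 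Under the assumption that both $\machs(i_{s-1}, i_s - 1)$ and $\machs(i_s + 1, i_{s+1})$ are candidate-free, every machine in the wider interval carrying positive fractional large-or-huge load is either $i_s$ itself or lies in $\hugemachs$, and the same combination of \cref{lp:largeorhugepermach} and \cref{lem:approx:hugeBound} yields an analogous contradiction. This guarantees a candidate inside $\machs(i_s + 1, i_{s+1})$ whenever case 2 was applied at iteration $s - 1$, which then survives either branch of iteration $s$.
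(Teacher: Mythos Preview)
Your auxiliary claim (every initial region $\machs(i_s,i_{s+1})$ contains a candidate) and your single-region counting argument are correct, and they are exactly the $k=1$ instance of the paper's argument. The gap is in the extension: your two-region step does not suffice, because it does not handle chains of consecutive case-2 iterations of length three or more.

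Concretely, you assert that case 2 at iteration $s-1$ implies $\machs(i_{s-1}, i_s - 1)$ is candidate-free. But for the non-overlapping claim to be ``immediate from the construction'' as you say, the procedure must be read as a trimming operation: the check at iteration $s-1$ tests the \emph{current} $R^{s-1}\setminus\{i_s\}$, which is $\machs(i_{s-1}+1,i_s-1)$ whenever case 2 was also applied at $s-2$. In that situation only $\machs(i_{s-1}+1,i_s-1)$ is certified candidate-free, and $i_{s-1}$ may well be a candidate. Your count on $\machs(i_{s-1},i_{s+1})$ then allows up to two candidates ($i_{s-1}$ and $i_s$), which exactly matches $\mathtt{fracLarge}\geq 2$, and no contradiction results. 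So the conclusion ``a candidate inside $\machs(i_s+1,i_{s+1})$ whenever case 2 was applied at iteration $s-1$'' is not established.

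The paper's proof is the natural generalization of your idea: assuming some $R^r$ is candidate-free, it traces the maximal chain of case-2 iterations back to the first index $\ell\leq r$ at which $R^\ell$ still owns its left border $i_\ell$ (so the check at $\ell$ really is on $\machs(i_\ell,i_{\ell+1}-1)$, certifying $i_\ell$ is not a candidate). The counting argument is then applied once on the whole interval $\machs(i_\ell,i_{r+1})$: with $k=r-\ell+1$ regions one gets $\mathtt{fracLarge}\geq k$ against at most $k-1$ possible candidates (the interior borders $i_{\ell+1},\dots,i_r$), forcing $|H|\geq \mathtt{fracHuge}+1$, which contradicts \cref{lem:approx:hugeBound}. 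Your single- and two-region arguments are precisely the cases $k=1$ and $k=2$ of this.
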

\begin{proof}
The first statement is obvious and we show the second via contradiction.
Assume that there is a region $R^{r}$ without a candidate machine.
After running the procedure the original left and right borders $i_{s}$ and $i_{s+1}$ of a region $R^s$ may or may not be included in $R^s$ and we set $\mathtt{inner}(R^s) = \machs(i_{s} + 1, i_{s+1} -1)$ for each $s\in[q]$.
Since $R^{r}$ does not contain candidates, we know that $\mathtt{inner}(R^{r})$ cannot contain candidates, $i_{r +1}$ was assigned to $R^{r}$ by the algorithm, and is not a candidate either.
Let $\ell\leq r$ be maximal with the property that $R^{\ell}$ did receive $i_{\ell}$ in the algorithm and $R^s$ did receive $i_{s+1}$ for each $s\in\set{\ell,\dots,r}$. 
Then the rules of the algorithm imply that $i_{\ell}$ is not a candidate and $\mathtt{inner}(R^{s})$ does not contain a candidate either for each $s\in\set{\ell,\dots,r}$. 
Hence, the only possible remaining candidate machines in the respective regions are the borders $\bordermachs \cap \machs(i_{\ell +1}, i_r)$.
Let $C$ be the set of candidates in the respective regions, i.e., $C= \candidatemachines\cap \bigcup_{s\in\set{\ell,\dots,r}}R_s$, and $k = |\set{\ell,\dots,r}|$. 
Then the above implies $|C| \leq k -1$.
For the remainder of the proof, we introduce some additional notation: the set of assigned huge jobs in $\bigcup_{s\in\set{\ell,\dots,r}}R_s$ is given by $H$ and the fractional number of large or huge jobs placed in these regions according to $x$ is denoted as $\mathtt{fracLarge}$ or $\mathtt{fracHuge}$, respectively, i.e., $\mathtt{fracLarge} = \sum_{i\in\machs(i_{\ell},i_{r+1})} \sum_{j\in\largejobs} x_{ij}$ and $\mathtt{fracHuge} = \sum_{i\in\machs(i_{\ell},i_{r+1})} \sum_{j\in\hugejobs} x_{ij}$.
Since $i_{\ell} \in R^{\ell}$ and $i_{r+1}\in R^r$, the definition of the borders yields $\mathtt{fracLarge} \geq k$.
Note that each machine in the regions that did receive a fractional large or huge job in the LP solution but is not a candidate subsequently received a huge job.
Hence, we have
\[|H| + |C| \overset{(\ref{lp:largeorhugepermach})}{\geq} \mathtt{fracHuge} + \mathtt{fracLarge} \geq \mathtt{fracHuge} + k\]
and therefore $|H| \geq \mathtt{fracHuge} + 1$.
However, \Cref{lem:approx:hugeBound} gives us $|H| \leq \lceil \mathtt{hugeLoad} \rceil < \mathtt{hugeLoad} + 1$. \lightning
\end{proof}
Before proceeding with the placement of the large jobs, we note the following technical observation:
\begin{lemma}\label{lem:intervals_regions_frac_load}
Let $\ell,r\in\machs$ with $\ell\leq r$, $\ell\in R^s$, $r\in R^t$, $k = |\set{s,\dots,t}|$, and $\mathtt{fracLarge} = \sum_{i\in \machs(\ell,r)}\sum_{j \in \largejobs}x_{ij}$. 
Then we have $ k-2 < \mathtt{fracLarge} < k+2$.
Furthermore, $\mathtt{fracLarge} < k+1$ if either $\ell >i_s$ or $r < i_{t+1}$ and $\mathtt{fracLarge} < k$ if both of these conditions hold.
\end{lemma}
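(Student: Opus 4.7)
The plan is to reduce the whole statement to a single non-decreasing cumulative quantity. Define $F(i) = \sum_{i'\in\machs(0,i)}\sum_{j\in\largejobs} x_{i'j}$ with the convention $F(-1)=0$, so that $\mathtt{fracLarge} = F(r) - F(\ell-1)$. Constraint~(\ref{lp:largeorhugepermach}) implies $F(i) - F(i-1) \leq 1$ for every $i$, so the floor of $F$ can jump by at most one per machine; combined with the defining property of $\bordermachs$ this forces $F(i_s-1) \in [s-1,s)$ and $F(i_s) \in [s,s+1)$ for every $s \geq 1$, with $F(i_0-1)=0$ playing the analogous role at the left end.

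The next step is to translate the region memberships into sandwich bounds on $F$. The region-adjustment procedure only ever removes one of the two border machines from an initial region, so $R^s \subseteq \machs(i_s, i_{s+1})$ always holds; thus $\ell \in R^s$ forces $i_s \leq \ell \leq i_{s+1}$, and monotonicity of $F$ immediately yields $F(\ell-1) \in [s-1,s+1)$. The analogous argument for $r \in R^t$ gives $F(r) \in [t,t+2)$. Subtracting these two sandwich bounds produces $k-2 < \mathtt{fracLarge} < k+2$.

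For the refined upper bounds I just tighten the two endpoint estimates. The hypothesis $\ell > i_s$ means $\ell - 1 \geq i_s$, hence $F(\ell-1) \geq F(i_s) \geq s$, which sharpens the upper bound on $\mathtt{fracLarge}$ to $(t+2) - s = k+1$. Symmetrically, $r < i_{t+1}$ yields $F(r) \leq F(i_{t+1}-1) < t+1$, again giving $\mathtt{fracLarge} < k+1$. Imposing both hypotheses simultaneously combines the two improvements into $\mathtt{fracLarge} < (t+1) - s = k$, as desired.

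The only real subtlety I anticipate is the bookkeeping at the left end $s=0$, where no border $i_{s-1}$ is available and the lower bound $F(\ell-1) \geq s-1 = -1$ must be obtained from $F \geq 0$ rather than from a border argument; an analogous remark applies at the right end, using that after $i_q$ the value of $F$ is fixed. These edge cases are harmless -- once $F$ is set up correctly, the whole proof reduces to monotonicity and the definition of $\bordermachs$ -- but they are the one place where a small amount of care is required.
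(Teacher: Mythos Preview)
Your proof is correct and uses the same underlying ingredients as the paper --- the per-machine bound from constraint~(\ref{lp:largeorhugepermach}) and the integer-crossing definition of the borders $i_1,\dots,i_q$ --- but packages them more cleanly. The paper instead introduces auxiliary quantities $\lambda_u\in[0,1)$ and $\rho_u\in(0,1]$ with $\lambda_u + \sum_{i\in\mathtt{inner}(R^u)}\sum_{j\in\largejobs}x_{ij} + \rho_u = 1$ and $\sum_{j\in\largejobs}x_{i_u j} = \rho_{u-1}+\lambda_u$, and then bounds $\mathtt{fracLarge}$ by adding up the relevant $\rho$'s, $\lambda$'s, and the $k$ unit contributions. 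Your cumulative function $F$ encodes exactly the same information (indeed $F(i_s)=s+\lambda_s$ and $F(i_s-1)=s-\rho_{s-1}$), but the sandwich bounds $F(\ell-1)\in[s-1,s+1)$ and $F(r)\in[t,t+2)$ make the three upper-bound cases fall out by a single subtraction rather than a case split on which borders are included. The edge cases you flag at $s=0$ and $t=q-1$ are handled correctly and match the paper's separate remark about those endpoints.
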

\begin{proof}
There are at least $k-2$ regions that are completely included in $\machs(\ell,r)$ including their original outer borders. 
Hence, the definition of the regions yields $k-2 < \mathtt{fracLarge} $.
For the remaining statements, we consider the definition of the regions more closely. 
Note that there exist numbers $\lambda_u\in[0,1)$ and $\rho_u \in (0,1]$ for each $u\in[q]$ such that $\lambda_u + \big(\sum_{i\in\mathtt{inner}(R^u)}\sum_{j\in\largejobs} x_{ij}\big) +\rho_u = 1$ (using the notation of the last proof); 
and furthermore $\sum_{j\in\largejobs} x_{i_0j} = \lambda_{0}$ if $i_0\neq i_1$, $\sum_{j\in\largejobs} x_{i_uj} = \rho_{u-1} + \lambda_{u}$ for $u \in\set{1,\dots,q-1}$, and $\sum_{j\in\largejobs} x_{i_qj} = \rho_{q-1}$.
We assume for now $s>0$ and $t<q-1$.
Then we have $\mathtt{fracLarge} \leq \rho_{s-1} + k + \lambda_{t+1} < k+2$, 
$\mathtt{fracLarge} \leq k + \lambda_{t+1} < k+1$ if $\ell >i_s$, 
$\mathtt{fracLarge} < \rho_{s-1} + k \leq k+1$ if $r < i_{t+1}$, and $\mathtt{fracLarge} <  k$ if both of the conditions hold.
If $s=0$ or $t=q$, we can prove the statement analogously.
\end{proof}

\subparagraph{Placement of Large Jobs.}

Using the regions, we place the large jobs via the following procedure starting with the first region:
\begin{itemize}
\item Let $R^*$ be the current region and $L$ the set of large jobs that have not been placed yet and are eligible on at least one candidate machine from $R^*$.
\item Do the following \emph{twice}: Pick a least flexible large job $j\in L$, place it on the leftmost eligible candidate machine $i\in R^*$, i.e. $\bar{x}_{ij} = 1$, and update $L$.
\item Consider the next region in the ordering or stop if there is none.
\end{itemize}
Observe that the placement of both the large and huge jobs guarantees that only machines that did receive fractional large or huge load in the LP solution may receive any large or huge job and each such machine receives at most one such job. 
We argue that this procedure works and also retains some connection to the original LP solution $x$.
\begin{lemma}\label{lem:approx:largeBound}
All large jobs are placed (on eligible machines) by the described procedure and, for each $\ell,r\in\machs$ with $\ell\leq r$, we have $\sum_{i\in \machs(\ell,r)}\sum_{j \in \largejobs}\bar{x}_{ij} < 2(\sum_{i\in \machs(\ell,r)}\sum_{j \in \largejobs}x_{ij}  + 2)$.
\end{lemma}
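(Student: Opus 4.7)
\medskip
\noindent\textbf{Proof plan.} The proof splits into two parts: the quantitative bound on placed large jobs per interval, and the guarantee that no large job goes unplaced.

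\smallskip
\noindent\emph{Part 1 (the bound).} The procedure processes each region independently and places at most two large jobs per region, using candidate machines inside that region. Hence, for any interval $\machs(\ell,r)$ the number of large jobs $\bar{x}$ places on machines in $\machs(\ell,r)$ is at most $2k$, where $k$ is the number of regions that contain a machine of $\machs(\ell,r)$. Lemma~\ref{lem:intervals_regions_frac_load} gives $k-2<\sum_{i\in \machs(\ell,r)}\sum_{j\in\largejobs}x_{ij}$, so $2k < 2\bigl(\sum_{i\in \machs(\ell,r)}\sum_{j\in\largejobs}x_{ij}+2\bigr)$, proving the bound.

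\smallskip
\noindent\emph{Part 2 (every large job is placed).} I would argue by contradiction in the style of the proof of Lemma~\ref{lem:approx:hugeBound}. Suppose $j^\ast\in\largejobs$ is not placed and set $r^\ast=r(j^\ast)$. First I would verify that $\emachs(j^\ast)$ contains a candidate machine: every machine on which $x$ places positive mass of $j^\ast$ either stays a candidate or later received a huge job, and pushing all the fractional mass of $j^\ast$ onto the latter category would force the huge-job count inside a small window to exceed the bound of Lemma~\ref{lem:approx:hugeBound}, contradiction. Let $R^{t}$ be the rightmost region containing a candidate eligible for $j^\ast$ at the moment $R^{t}$ is processed. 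At that moment $j^\ast\in L$, so the least-flexible-first rule forces the procedure to place two large jobs $j$ with $r(j)\le r^\ast$ in $R^{t}$ (if fewer than two jobs were placed in $R^{t}$, then $j^\ast$ itself would have been placed, since by choice of $R^{t}$ no later region can place it).

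\smallskip
\noindent\emph{Extension and counting.} Mirroring the walk in Lemma~\ref{lem:approx:hugeBound}, let $\ell^\ast$ be the leftmost machine such that for every region $R^{u}$ with $u\le t$ and $R^{u}\cap\machs(\ell^\ast,r^\ast)\ne\emptyset$ the procedure placed two jobs $j$ with $r(j)\le r^\ast$ and $\ell(j)\ge \ell^\ast$. Extremality: if $\ell^\ast$ could move further left, a preceding candidate would have been eligible for one of the jobs counted in the walk, but the least-flexible-first rule would then have placed a less flexible job there, producing a contradiction with the definition of $\ell^\ast$. Let $R^{s}$ be the region containing $\ell^\ast$, $k=|\{s,\dots,t\}|$, and $L^\ast$ the union of the placed jobs in those regions with $\{j^\ast\}$. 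By construction $L^\ast\subseteq\largejobs(\ell^\ast,r^\ast)$ and $|L^\ast|\ge 2k+1$. The LP constraint~(\ref{lp:jobs}) gives $|\largejobs(\ell^\ast,r^\ast)|\le \sum_{i\in \machs(\ell^\ast,r^\ast)}\sum_{j\in\largejobs}x_{ij}$, and the extremality of $\ell^\ast$ ensures that $\ell^\ast>i_{s}$ (and analogously, because $R^{t}$ was chosen rightmost with a candidate eligible for $j^\ast$, we obtain $r^\ast<i_{t+1}$ after handling the easy case $r^\ast=i_{t+1}$ separately). Lemma~\ref{lem:intervals_regions_frac_load} then yields $\sum_{i\in \machs(\ell^\ast,r^\ast)}\sum_{j\in\largejobs}x_{ij}<k$, contradicting $|L^\ast|\ge 2k+1$.

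\smallskip
\noindent\emph{Main obstacle.} The delicate step is the extension argument: I must make sure the definition of $\ell^\ast$ simultaneously (i) forces every relevant region to place exactly two jobs less flexible than $j^\ast$ and (ii) makes both endpoint conditions of Lemma~\ref{lem:intervals_regions_frac_load} strict, so that the tight bound on the fractional large load in $\machs(\ell^\ast,r^\ast)$ applies. Corner cases where a region in the walk contains only one candidate (so only one job is placed there) or where $r^\ast$ coincides with a border need to be treated separately but fall to the same load-counting argument.
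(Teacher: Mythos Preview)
Part~1 is correct and is exactly the paper's one-line argument.

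For Part~2, your overall scheme (contradiction via a leftward walk and counting against Lemma~\ref{lem:intervals_regions_frac_load}) is the paper's, but the walk you set up does not close. Your definition of $\ell^*$ demands that \emph{every} region touched by the walk placed \emph{two} jobs with $r(j)\le r^*$ and $\ell(j)\ge\ell^*$. A region with a single candidate machine places at most one large job in the procedure, so such a region can never enter your walk; this is not a removable corner case but obstructs both the initialisation (already $R^t$ may have only one candidate) and the extremality step (when the walk halts at such a region you can no longer deduce $\ell(j^*)\ge\ell^*$, hence not $L^*\subseteq\largejobs(\ell^*,r^*)$). A second issue is that your $t$ indexes the rightmost region containing a candidate eligible for $j^*$, whereas $r^*=r(j^*)$ may lie in a strictly later region; Lemma~\ref{lem:intervals_regions_frac_load} is stated for the region containing $r^*$, so the bound you invoke need not match the $k$ you count. (As an aside: if you \emph{could} secure $|L^*|\ge 2k+1$, the endpoint conditions would be superfluous, since already the generic bound $\mathtt{fracLarge}<k+2$ forces $k<1$.)

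The paper sidesteps both problems by taking a weaker invariant. It lets $\ell^*$ be the leftmost machine such that every large job placed in $\machs(\ell^*,r^*)$ is less flexible than $j^*$ \emph{and} every free candidate in that interval is free only because two machines in its region already received such a job. With this definition the extremality argument extends across non-candidate machines and single-candidate regions alike, and one only needs that each region intersecting the interval received at least one job from $J^*$ and that \emph{some} region received two. The latter comes from a separate counting step: constraint~(\ref{lp:largeorhugepermach}) together with Lemma~\ref{lem:approx:hugeBound} shows that $\machs(\ell^*,r^*)$ contains at least $|J^*|$ candidate machines, hence at least one free candidate, whose region then carries two placed jobs by the definition of $\ell^*$. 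The contradiction is finished by a three-way case split on whether $\ell^*=i_s$ and $r^*=i_{t+1}$, matched against the three cases of Lemma~\ref{lem:intervals_regions_frac_load}; no case needs two jobs per region.
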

\begin{proof}
Regarding the second statement note that we place at most 2 jobs in each region and hence \cref{lem:intervals_regions_frac_load} directly yields the proof.
As usual, we proof the first statement by contradiction.
To that end, assume that there exists a large job $j^*$ that is not placed by the procedure.
First note, that there is at least one eligible candidate machine for $j^*$.
To see this, consider the set $M$ of eligible machines $i\in\emachs(j)$ that either received fractional load of $j^*$ or some huge load, i.e., $x_{ij}>0$ for $j\in\set{j^*}\cup \hugejobs$.
Then \cref{lp:largeorhugepermach} implies $\sum_{i\in M} \sum_{j\in \hugejobs} x_{ij} \leq |M| - 1$.
Hence, at most $|M| - 1$ many huge jobs are placed on machines from $M$ due to \Cref{lem:approx:hugeBound} and therefore at least one of these machines is a candidate. 
There are two possibilities why $j^*$ was not placed on such a machine: either a less flexible job got placed on the machine, or two other less flexible jobs were already placed in the same region.
Let $r^* = r(j^*)$ and $\ell^*\leq \ell$ be minimal with the property that each large job that was placed in $\machs(\ell^*,r^*)$ is less flexible than $j^*$ and each free candidate machine in the interval is free because two other machines in the same respective region already received a large job less flexible than $j^*$.
Furthermore, let $J^*$ be the set of large jobs placed in $\machs(\ell^*,r^*)$ together with $j^*$.
We argue that $\ell(j)\geq \ell^*$ for each $j\in J^*$.
Otherwise, there exists a job $j\in J^*$ eligible on machine $\ell^* - 1$.
Then there are three possibilities regarding this machine.
It was not a candidate before the procedure; it was a candidate received a job less flexible then $j$ (and therefore also less flexible then $j^*$); or it was a candidate and did not receive a large job because two other machines in the same region received a job less flexible then $j$.
Each yields a contradiction to the definition of $\ell^*$.
Let $\mathtt{fracLarge} = \sum_{i\in\machs(\ell^*,r^*)} \sum_{j\in\largejobs} x_{ij}$ be the sum of fractional large jobs in $\machs(\ell^*,r^*)$ according to $x$.
Note that we did show $J^* \subseteq \jobs(\ell^*,r^*)$ and hence $\mathtt{fracLarge}\geq |J^*|$.

Let $M^*\subseteq\machs(\ell^*,r^*)$ be the set of machines that did receive a fraction of a job from $J^*\cup\hugejobs$.
Then \cref{lp:largeorhugepermach} implies $\sum_{i\in M^*} \sum_{j\in \hugejobs} x_{ij} \leq |M^*| - |J^*|$, and furthermore \Cref{lem:approx:hugeBound} yields that at most $|M^*| - |J^*|$ huge jobs are placed on machines from $|M^*|$.
Hence, there are at least $|J^*|$ candidate machines in $\machs(\ell^*,r^*)$.
Since not all of the jobs from $J^*$ have been placed by the procedure, there is therefore at least one free machine in $i^*\in\machs(\ell^*,r^*)$.
The definition of $\ell^*$ yields, that two jobs less flexible then $j^*$ have been placed in the same region as $i^*$ and these jobs have to be included in $J^*$ (and the machines they are placed on in $\machs(\ell^*,r^*)$).

We now take a closer look at the regions (partially) included in $\machs(\ell^*,r^*)$.
Let $\ell^* \in R^s$, $r^*\in R^t$, and $k = |\set{s,\dots,t}|$.
We consider three cases:
If we have $\ell^* = i_{s}$ and $r^* = i_{t+1}$, i.e., the borders of the interval correspond to the (original) outer borders of their regions, then each of the regions $R^s,\dots, R^t$ did receive at least one job from $J^*$ and one received at least two yielding $k \leq |J^*| - 2 \leq \mathtt{fracLarge} - 2$.
Moreover, if $\ell^* > i_{s}$ or $r^* < i_{t+1}$, then one of the regions $R^s,\dots, R^t$ may not have received a job from $J^*$ changing the inequality to $k \leq \mathtt{fracLarge} - 1$.
Lastly, if both $\ell^* > i_{s}$ and $r^* < i_{t+1}$, then the two outer regions may have received no job from $J^*$ yielding $k \leq \mathtt{fracLarge}$.
However, \cref{lem:intervals_regions_frac_load} considers the same three cases, yielding $\mathtt{fracLarge}< k + 2$, $\mathtt{fracLarge}< k + 1$, and $\mathtt{fracLarge}< k$, respectively. \lightning
\end{proof}

\subparagraph{Placement of Small Jobs.}

Lastly we place the small jobs.
Starting with the first machine, we do the following:
\begin{itemize}
\item Let $i^*$ be the current machine and $J$ the set of jobs that have not been placed yet and are eligible on $i^*$.
\item Successively place least flexible jobs $j$ on $i^*$, i.e., set $\bar{x}_{i^*j} = 1$, until either $J = \emptyset$ or placing the next job would raise the load of $i^*$ above $(2-\gamma)T$. 
\item Consider the next machine in the ordering or stop if there is none.
\end{itemize}
We argue that this procedure works under certain conditions:
\begin{lemma}
All small jobs are placed (on eligible machines) by the described procedure if $\gamma \leq \xi$, $\gamma + \xi \leq \frac{1}{12}$, and $8\xi + 7\gamma \leq 0.75$ hold.
In the resulting schedule, each machine has a load of at most $(2-\gamma)T$.
\end{lemma}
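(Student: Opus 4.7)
The second claim (the load bound) is immediate: the stopping condition of the procedure guarantees that we never place a job that would raise the load above $(2-\gamma)T$, so the final load of every machine is at most $(2-\gamma)T$.

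For the first claim, the plan is to argue by contradiction, mirroring the structure of \Cref{lem:approx:hugeBound} and \Cref{lem:approx:largeBound}. Assume some small job $j^*$ is not placed and set $r^* = r(j^*)$. I would define $\ell^*$ as the minimal index such that every machine $i\in \machs(\ell^*,r^*)$ has (i) final load strictly greater than $(1.5-\gamma)T$, and (ii) received only small jobs $j$ with $r(j)\leq r^*$ in the small-placement phase. To see that $\ell^*\leq \ell(j^*)$, observe that on each $i\in\emachs(j^*)$ the procedure terminated because of an overflow triggered by some job $j'\in J$ that is at least as least-flexible as $j^*$, so $r(j')\leq r^*$; combined with $p_{j'}\leq 0.5T$ this yields load $>(2-\gamma)T - 0.5T = (1.5-\gamma)T$, and the least-flexible-first rule implies that the small jobs already on $i$ all have $r\leq r(j')\leq r^*$.

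Let $J^*$ be the small jobs placed on $\machs(\ell^*,r^*)$ together with $j^*$. I would next show $J^*\subseteq\smalljobs(\ell^*,r^*)$: for $j\in J^*$ with $j\neq j^*$, if $\ell(j)<\ell^*$ then $j$ was eligible on $\ell^*-1$ and the only possible reason for non-placement is an overflow caused by some small $j''$ with $r(j'')\leq r(j)\leq r^*$; but then $\ell^*-1$ would satisfy both (i) and (ii), contradicting minimality of $\ell^*$. Set $m^* = |\machs(\ell^*,r^*)|$ and let $k$ be the number of regions that $\machs(\ell^*,r^*)$ intersects.

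The final step combines a two-sided bound on the load scheduled on $\machs(\ell^*,r^*)$. From below, this load exceeds $(1.5-\gamma)Tm^*$ by the defining property (i) of $\ell^*$. From above, it is bounded by
\[
\bigl(p(\smalljobs(\ell^*,r^*)) - p_{j^*}\bigr) \; + \; 2k(0.5+\xi)T \; + \; T\bigl\lceil\,\textstyle\sum_{i\in\machs(\ell^*,r^*)}\sum_{j\in\hugejobs}x_{ij}\bigr\rceil,
\]
using $J^*\subseteq\smalljobs(\ell^*,r^*)$, the fact that the large placement deposits at most two large jobs per region (each of size at most $(0.5+\xi)T$), and \Cref{lem:approx:hugeBound} for the huge jobs. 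Finally, \cref{lp:hugeinterval} gives $T\lceil\text{fracHuge}(\ell^*,r^*)\rceil \leq (Tm^*-p(\smalljobs(\ell^*,r^*)))/(0.5+\xi)$, while \cref{lem:intervals_regions_frac_load} relates $k$ to $m^*$ through three cases depending on whether the endpoints $\ell^*,r^*$ coincide with original region borders. After substitution the resulting inequality becomes infeasible, and the sharpest instance of infeasibility reduces exactly to the three conditions $\gamma\leq\xi$, $\gamma+\xi\leq 1/12$, and $8\xi+7\gamma\leq 3/4$.

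The main obstacle is the careful choice of $\ell^*$: it has to cover two distinct reasons why a small job may fail to be placed (overflow by the job itself versus overflow by a less flexible competitor), and it has to extend beyond $\emachs(j^*)$ while still guaranteeing that every small job in the interval is fully eligible. Additionally, the final algebraic step is delicate because three different cases of \cref{lem:intervals_regions_frac_load} each contribute a separate inequality, and the tightest among them pins down the constraints on $\gamma$ and $\xi$ stated in the lemma.
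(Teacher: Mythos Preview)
Your setup through the definition of $\ell^*$ and the containment $J^*\subseteq\smalljobs(\ell^*,r^*)$ matches the paper, but the final contradiction step is where the proposal breaks down.

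First, \cref{lem:intervals_regions_frac_load} does \emph{not} relate the number $k$ of intersected regions to $m^*=|\machs(\ell^*,r^*)|$; it relates $k$ to $\mathtt{fracLarge}$. There is no direct bound on $k$ in terms of $m^*$, so your combined inequality
\[
(1.5-\gamma)\,m^* \;<\; p(\smalljobs(\ell^*,r^*)) + 2k(0.5+\xi) + \frac{m^* - p(\smalljobs(\ell^*,r^*))}{0.5+\xi}
\]
need not be infeasible: the two upper bounds you are adding---``at most $2k$ large jobs'' and ``at most $(m^*-S)/(0.5+\xi)$ huge jobs via \cref{lp:hugeinterval}''---cannot both be tight at once (their sum can far exceed $m^*$), yet your argument treats them as independent. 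The three endpoint cases of \cref{lem:intervals_regions_frac_load} do not resolve this; they only sharpen the $k$-versus-$\mathtt{fracLarge}$ relation by $\pm 1$.

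The paper's actual mechanism is different. It splits on the \emph{number} $|L^*|$ of large jobs placed in the interval. For $|L^*|\le 6$ it uses exactly your \cref{lp:hugeinterval} bound on $|H^*|$ together with the direct estimate $p(L^*)\le (0.5+\xi)|L^*|$, and the algebra collapses to $6(\gamma+\xi)\le 0.5$, i.e.\ $\gamma+\xi\le\tfrac{1}{12}$ (with $\gamma\le\xi$ used to drop a term). For $|L^*|\ge 7$ it abandons \cref{lp:hugeinterval} and instead sums the basic load constraint \cref{lp:machs} over the interval, obtaining
\[
p(S^*)\;\le\; m^* - \tfrac12\,\mathtt{fracLarge} - (0.5+\xi)\,\mathtt{fracHuge},
\]
then converts $\mathtt{fracLarge}$ and $\mathtt{fracHuge}$ back to $|L^*|$ and $|H^*|$ via \cref{lem:approx:largeBound} and \cref{lem:approx:hugeBound}; this case yields $8\xi+7\gamma\le 0.75$. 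So the three hypotheses in the lemma come from a two-way split on $|L^*|$ and two distinct upper bounds on $p(S^*)$, not from the three endpoint cases of \cref{lem:intervals_regions_frac_load}. Your proposal is missing both the case distinction on $|L^*|$ and the second $p(S^*)$ bound coming from \cref{lp:machs}.
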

\begin{proof}
For the sake of easier presentation, we assume $T=1$ in the following (this can be established via scaling).
As usual, the second statement is easy to see and we prove the first via contradiction.
To that end, let $j^*$ be a small job we cannot place.
Let $\mathtt{load}(i) = \sum_{j\in\jobs} p_j\bar{x}_{ij}$ be the load machine $i\in\machs$ did receive.
We call a machine \emph{full} if we stop placing small jobs on it because placing another job would have caused a load of more than $2-\gamma$. 
Note that $\mathtt{load}(i) > 1.5 - \gamma$ for full machines $i\in\machs$.
Let $r^* = r(j^*)$.
Then $r^*$ is full since we were not able to place $j^*$ and we have $\mathtt{load}(i) + p_{j^*} > 2-\gamma$.
Moreover, all the small jobs placed on $r^*$ are less flexible than $j^*$.
Let $\ell^*$ be the left-most machine with the property that each machine in $\machs(\ell^*,r^*)$ is full and each small job placed on such a machine is less flexible then $j^*$, and let $S^*$ be the set of small jobs placed on $\machs(\ell^*,r^*)$ together with $j^*$.
We have $\ell(j)\geq \ell^*$ for each $j\in S^*$ since otherwise machine $\ell^* -1$ has to be full and each small job placed on this machine must be less flexible then $j$ yielding a contradiction to the choice of~$\ell^*$.
Hence, we have $S^* \subseteq \smalljobs(\ell^*,r^*)$.

We establish some further notation. 
Let $\mathtt{fracHuge}$, $\mathtt{fracLarge}$, and $\mathtt{fracSmall}$, be the summed up number of fractional huge, large, or small jobs, respectively, in $\machs(\ell^*,r^*)$, e.g., $\mathtt{fracHuge} = \sum_{i\in \machs(\ell^*,r^*)} \sum_{j\in \hugejobs} x_{ij}$.
Furthermore, let $H^*$ and $L^*$ be the sets of huge and large jobs placed in $\machs(\ell^*,r^*)$ by the rounding procedure, and $k = |\machs(\ell^*,r^*)|$ the length of the interval of machines.
Now, we already established:
\begin{align}
p(S^*) + p(L^*) + p(H^*) = p_{j^*} + \mkern-12mu \sum_{i\in \machs(\ell^*,r^*)}\mkern-12mu\mathtt{load}(i) > (k-1)(1.5 - \gamma) + (2-\gamma) \label{eq:small_job_lemma_all_loads}
\end{align}
Furthermore, we have $|H^*| \leq \ceil{\mathtt{fracHuge}} \leq \ceil{\floor{ (k - p(\smalljobs(\ell^*,r^*))) / (0.5 + \xi) }} \leq (k - p(\smalljobs(\ell^*,r^*))) / (0.5 + \xi)$ due to \cref{lem:approx:hugeBound} and \cref{lp:hugeinterval} yielding:
\begin{align}
p(S^*) \leq p(\smalljobs(\ell^*,r^*)) \leq k - (0.5 + \xi)|H^*| \label{eq:small_job_lemma_smallload_vs_huge}
\end{align}
On the other hand, already the classical assignment LP constraints upper bound the load in the interval by $k$ which implies:
\[\sum_{i\in \machs(\ell^*,r^*)} \sum_{j\in \smalljobs} p_j x_{ij} \leq k - \mkern-18mu  \sum_{i\in \machs(\ell^*,r^*)} \sum_{j\in \hugejobs\cup\largejobs} p_jx_{ij} < k - 0.5\cdot\mathtt{fracLarge} - (0.5 + \xi)\cdot\mathtt{fracHuge}\]
Hence, \cref{lem:approx:hugeBound} and \cref{lem:approx:largeBound} give us:
\begin{align}
p(S^*) \leq \mkern-9mu\sum_{i\in \machs(\ell^*,r^*)} \sum_{j\in \smalljobs} p_j x_{ij} \leq k - 0.5\cdot \frac{|L^*| - 4}{2} - (0.5 + \xi)\cdot(|H^*| - 1) \label{eq:small_job_lemma_smallload_vs_huge_and_large}
\end{align}
We conclude the proof considering two cases. 
In particular, if $|L^*| \leq 6$, we have:
\begin{align*}
(2-\gamma) & \overset{(\ref{eq:small_job_lemma_all_loads})}{<} p(S^*) + p(L^*) + p(H^*) - (k-1)(1.5 - \gamma)\\
 & \overset{(\ref{eq:small_job_lemma_smallload_vs_huge})}{\leq} k - (0.5 + \xi)|H^*| + (0.5 + \xi) |L^*| + |H^*| - (k-1)(1.5 -\gamma)\\
 & = (\gamma - 0.5) (k-|H^*| - |L^*|) - \xi|H^*| + \xi|L^*|+ \gamma|H^*| + \gamma |L^*|  + 1.5 -\gamma\\
& \leq \xi|L^*| + \gamma|L^*|  + 1.5-\gamma \leq 6\xi + 6\gamma  + 1.5-\gamma \leq 2-\gamma
\end{align*}
Note that we did use $\gamma \leq 0.5$, $\gamma \leq \xi$, and $\gamma + \xi \leq \frac{1}{12}$.
If $|L^*| \geq 7$, on the other hand, we have:
\begin{align*}
(2-\gamma) & \overset{(\ref{eq:small_job_lemma_all_loads})}{<} p(S^*) + p(L^*) + p(H^*) - (k-1)(1.5 - \gamma)\\
& \overset{(\ref{eq:small_job_lemma_smallload_vs_huge_and_large})}{<} k - \frac{|L^*| - 4}{4} - (0.5 + \xi)(|H^*| - 1) + (0.5 + \xi) |L^*| + |H^*| - (k-1)(1.5 -\gamma)\\
& = (\gamma - 0.5)(k-|H^*| - |L^*|) - \frac{|L^*|}{4} -\xi|H^*| + \xi + \xi |L^*| + \gamma|H^*| + \gamma |L^*|  + 3 -\gamma \\
& \leq (\xi+ \gamma - \frac{1}{4})|L^*| + \xi + 3 -\gamma \\
& \leq (\xi+ \gamma - \frac{1}{4})7 + \xi  + 3 -\gamma = 1.25 + 8\xi + 7\gamma - \gamma \leq 2- \gamma
\end{align*}
This time, we used $\gamma \leq 0.5$, $\gamma \leq \xi$, $\xi +\gamma \leq 0.25 $ and $8\xi + 7\gamma \leq 0.75$.
Since we did reach the contradiction $(2-\gamma) < (2-\gamma)$ in both cases, the proof is complete.
\end{proof}
Lastly, we choose values for $\xi$ and $\gamma$ which satisfy all the requirements of the above lemma and maximize $\gamma$.
The biggest $\gamma$ is achieved by setting $\gamma = \xi = \frac{1}{24}$.
This concludes the proof of \Cref{the:approx2minusgamma}.

\section{Complexity Results}\label{sec:complexity}

Remember that we use the notation $[n] = \set{0,\dots,n-1}$ for each integer~$n$. 
The complexity result in this work directly build upon the ones in \cite{DBLP:conf/stacs/MaackJ20}.
In that work, a satisfiability problem denoted as \tailoredsat was introduced and shown to be NP-hard, and all reductions in the present work start from this problem.
An instance of the problem \tailoredsat is a conjunction of clauses with exactly 3 literals each.  
Each of the clauses is either a 1-in-3-clause or a 2-in-3-clause, that is, they are satisfied if exactly one or two of their literals, respectively, evaluate to \emph{true} in a given truth assignment.
We denote a $k$-in-3-clause with literals $x$, $y$, and $z$ as $(x,y,z)_k$ and the truth values true and false are denoted as $\true$ and $\false$ in the following.
There are as many 1-in-3-clauses in a \tailoredsat instance as there are 2-in-3-clauses, and, furthermore, each literal occurs exactly twice.
Hence, a minimal example for a \tailoredsat instance is given by 
$(x_0, x_1, \neg x_2)_1 \wedge (\neg x_0, x_1, x_2)_1 \wedge (x_0,\neg x_1,\neg x_2)_2 \wedge (\neg x_0, \neg x_1, x_2)_2$:
We have two 1-in-3-clauses and two 2-in-3-clauses, and two occurrences of $x_i$ and $\neg x_i$ for each $i\in [3]$.
The formula is satisfied if we map every variable to $\false$.

In each reduction, we start with an instance $I$ of \tailoredsat with $m$ many 1-in-3-clauses $C_{0},\dots,C_{m-1}$, $m$ many 2-in-3-clauses $C_{m},\dots,C_{2m-1}$ and $n$ variables $x_0,\dots,x_{n-1}$.
Since there are $2m$ clauses with $3$ literals each and 4 occurrences for each variable, we have $6m = 4n$.
In the following, the precise positions of the occurrences of the variables are important and we have to make them explicit.
To this end, let for each $j\in[n]$ and $t\in[4]$ the pair $(j,t)$ correspond to the first or second positive occurrence of variable $x_j$ if $t=0$ or $t=1$, respectively, and to the first or second negative occurrence of variable $x_j$ if $t=2$ or $t=3$.
Furthermore, let $\kappa:[n]\times [4] \rightarrow [2m]\times [3]$ be the bijection that maps $(j,t)$ to the corresponding clause index and position in that clause. 
For instance, in the above example we have $\kappa(0,2) = (1,0)$ and $\kappa(2,1) = (3,2)$.

Next, we construct an instance $I'$ of the problem considered in the respective case.
For the restricted assignment type problems, all job sizes are integral and upper bounded by some constant $T$ such that the overall size of the jobs equals $|\machs| T $.
Hence, if a machine receives jobs with overall size more or less than $T$, the objective function value is worse than $T$ for both the makespan and fair allocation case.
The goal is to show, that there is a schedule with makespan $T$ for $I'$, if and only if $I$ is a yes-instance.
This rules out approximation algorithms with rate smaller than $(T+1)/T$ for the makespan problem, and with rate smaller than $T/(T-1)$ for the fair allocation variant since the overall load is $|\machs| T $.
For the low rank problem, we first design a restricted assignment reduction using the above approach and then show that there exist low rank scheduling instances that approximate the restricted assignment instance with arbitrary precision.

In the following, we will call a schedule that assigns a load of $T$ to each machine a $T$-schedule.

\subparagraph{Simple Reduction.}

We start with a simple reduction for the general restricted assignment problem (with arbitrary restrictions) introducing several ideas and gadgets relevant for all of the following reductions.
Note that the reduction is very similar to the one by Ebenlendr et al. \cite{DBLP:journals/algorithmica/EbenlendrKS14} and to a reduction in \cite{DBLP:conf/stacs/MaackJ20}.

We have three types of basic jobs and machines, namely, truth assignment machines and jobs that are used to assign truth values to variables, clause machines and jobs that model clauses being satisfied, and variable jobs that connect the first two types:
\begin{itemize}
\item There are truth assignment machines $\truthmach(j,q)$ with $j\in[n]$ and $q\in[2]$ and one truth assignment job $\truthjob(j)$ with size $2$ and eligible on $\set{\truthmach(j,0),\truthmach(j,1)}$.
\item There are clause machines $\clausemach(i,s)$ for each $i\in[2m]$ and $s\in[3]$ and three clause jobs $\clausejob(i,s)$ each eligible on $\sett{\clausemach(i,s')}{s'\in[3]}$. 
The job $\clausejob(i,0)$ has size $1$, $\clausejob(i,2)$ has size $2$, and $\clausejob(i,1)$ has size $2$ if clause $C_i$ is a 1-in-3-clause and size $1$ otherwise.
\item Lastly, there are variable jobs $\variablejob(j,t)$ for each $j\in[n]$ and $t\in [4]$ each of size $1$ and eligible on $\set{\truthmach(j,\floor{\frac{t}{2}}), \clausemach(\kappa(j,t))}$.
\end{itemize}
First note:
\begin{claim}
The overall job size $\sum_{j\in\jobs} p(j)$ is equal to $2|\machs|$.
\end{claim}
\begin{claimproof}
There are $2n + 6m = 6n$ machines, the truth assignment jobs have overall size $2n$, the variable jobs $4n$, and the clause jobs $ 3m + 6m = 9m = 6n$.
Hence, we have $\sum_{j\in\jobs} p(j) = 12n = 2\cdot 6n = 2|\machs|$.
\end{claimproof}
Consider the case that we have a satisfying truth assignment for instance $I$.
If variable $x_j$ is assigned to $\true$, we place $\truthjob(j)$ on $\truthmach(j,0)$, $\variablejob(j,0)$ and $\variablejob(j,1)$ on $\clausemach(\kappa(j,0))$ and $\clausemach(\kappa(j,1))$, respectively, together with local size 1 clause jobs.
Furthermore, $\variablejob(j,2)$ and $\variablejob(j,3)$ are placed on $\truthmach(j,1)$ and $\clausemach(\kappa(j,2))$ and $\clausemach(\kappa(j,3))$ each receive a local size 2 clause job.
If variable $x_j$ is assigned to $\false$, we place $\truthjob(j)$ on $\truthmach(j,1)$, and the placement strategy of the positive and negative variable jobs is reversed. 
Note that the placement of the clause jobs has to work out since the truth assignment is satisfying.
This approach yields a schedule with makespan $2$.
However, it is also easy to see that a schedule with makespan $2$ yields a satisfying truth assignment by basing the assignment of $x_j$ on the placement of $\truthjob(j)$, and hence we have:
\begin{lemma}\label{lem:basic_reduction}
There is a satisfying truth assignment for $I$, if and only if there is a schedule with makespan $2$ for $I'$.
\end{lemma}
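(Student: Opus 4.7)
The plan is to prove both implications separately. For the forward direction (satisfying assignment $\Rightarrow$ $2$-schedule), I would formalize the construction sketched in the excerpt: for each variable $x_j$, place $\truthjob(j)$ on $\truthmach(j,0)$ if $x_j=\true$ and on $\truthmach(j,1)$ otherwise; redirect the two variable jobs $\variablejob(j,t)$ whose value of $\lfloor t/2\rfloor$ corresponds to the ``false side'' onto the vacant truth machine, filling it to load $2$; and steer the remaining two variable jobs onto their clause machines $\clausemach(\kappa(j,t))$. Together with a suitable distribution of the three clause jobs of each clause over its three clause machines, every machine reaches load exactly $2$. The crux is a case split on the clause type: a 1-in-3-clause (resp.\ 2-in-3-clause) contributes clause jobs of total size $5$ (resp.\ $4$) to a combined capacity of $6$, leaving slack $1$ (resp.\ $2$), which matches the number of literals that evaluate to $\true$ under a satisfying assignment.

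For the backward direction, the starting observation is tightness: since the total job size equals $2|\machs|$ by the first claim, any schedule of makespan $2$ must load every machine to exactly $2$. The heart of the argument is then a forced-placement chain around each variable gadget. The size-$2$ job $\truthjob(j)$ occupies exactly one of $\truthmach(j,0),\truthmach(j,1)$, say $\truthmach(j,q)$, and saturates it on its own. By tightness, the sibling truth machine $\truthmach(j,1-q)$ still needs load $2$, but the only jobs eligible there are the two size-$1$ variable jobs $\variablejob(j,t)$ with $\lfloor t/2\rfloor = 1-q$, so both must land there. Consequently, the remaining two variable jobs $\variablejob(j,t)$ with $\lfloor t/2\rfloor = q$ are pushed off their truth machine and forced onto their unique remaining eligible target, a clause machine.

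Defining $\alpha(x_j) := \true$ iff $\truthjob(j)$ is placed on $\truthmach(j,0)$, the forced-placement chain says precisely that $\variablejob(j,t)$ lies on a clause machine if and only if the $t$-th occurrence of $x_j$ evaluates to $\true$ under $\alpha$. To conclude that $\alpha$ satisfies $I$, I would fix any clause $C_i$, observe that its three clause machines have combined capacity $6$ and must house all three clause jobs (which are eligible nowhere else), whose sizes sum to $5$ for a 1-in-3-clause and $4$ for a 2-in-3-clause; the remaining slack of $1$ or $2$ is thus filled by exactly the right number of size-$1$ variable jobs, which by the previous observation equals the number of literals of $C_i$ true under $\alpha$. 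Hence every clause is satisfied with its required number of true literals. The conceptual obstacle is really only the forced-placement step, since it relies on every machine being tight simultaneously; once that is established, the clause-satisfaction check reduces to the same bookkeeping already used in the forward direction.
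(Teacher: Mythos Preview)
Your proposal is correct and follows essentially the same approach as the paper. The paper's own treatment is very terse---it sketches the forward direction exactly as you describe and dismisses the backward direction as ``easy to see''---so your write-up simply makes explicit the tightness argument and the forced-placement chain that the paper leaves implicit.
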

We now adapt and extend this basic reduction to all the other cases considered in this work.

\subsection{Three Ressources}\label{sec:rar3_reduction}

In the \rar{3}\ case, we can use essentially the same construction as above. 
However, the sets of eligible machines are defined using the resources and are slightly different.
The resource demands and capacities are specified in \cref{table:res3_dem_cap}.
\begin{table}
\centering
\caption{Resource demands and capacities of the jobs and machines, respectively, for the case with $3$ resources.}
\begin{tabular}{llll}
\toprule
Job/Mach. & Res. 1 & Res. 2 & Res. 3 \\ \midrule
$\truthmach(j,0)$ & $4j + 1$  & $4n - 4j$ & 1\\
$\truthmach(j,1)$ & $4j + 3$  & $4n - 4j$ & 0\\
$\truthjob(j)$ & $4j$ & $4n - 4j$ & 0\\
$\clausemach(i,s)$, $\kappa^{-1}(i,s) = (j,t)$ & $4j + t$ & $4n - (4j + t)$ & $2 + i$\\
$\clausejob(i,s)$ & 0 & 0 & $2 + i$\\
$\variablejob(j,t)$ & $4j + t$ & $4n - (4j + t)$ & $ 1 - \floor{\frac{t}{2}}$\\
\bottomrule
\end{tabular}
\label{table:res3_dem_cap}
\end{table}
It is easy to see that the choice of resources implies:
\begin{claim}
We have $\emachs(\truthjob(j)) = \set{\truthmach(j,0),\truthmach(j,1)}$ for each $j\in[n]$ and $\emachs(\variablejob(j,t)) = \set{\truthmach(j,\floor{\frac{t}{2}}), \clausemach(\kappa(j,t))}$ for each $j\in[n]$ and $t\in[4]$. 
\end{claim}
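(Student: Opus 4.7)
The plan is a direct verification from the resource-demand table, split into the two inclusions.

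For the $\supseteq$ direction, I would simply substitute the demands of each named job into the capacities of each named candidate machine and check the three resource inequalities individually. For instance, for $\truthjob(j)$ on $\truthmach(j,0)$ the checks are $4j \leq 4j+1$, $4n-4j \leq 4n-4j$ and $0 \leq 1$; and for $\variablejob(j,t)$ on $\clausemach(\kappa(j,t))$ with $\kappa^{-1}(i,s)=(j,t)$ we get the matching inequalities $4j+t \leq 4j+t$, $4n-(4j+t) \leq 4n-(4j+t)$, and $1-\lfloor t/2 \rfloor \leq 2+i$. All the other required inclusions reduce to similarly immediate arithmetic.

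For the $\subseteq$ direction I would eliminate every other machine. The key observation to exploit is that resources $1$ and $2$ form complementary encoders of the variable index $j$: the resource-$1$ capacity is weakly increasing in $j$ (and in the inner index $t'$ or $q$), while the resource-$2$ capacity is weakly decreasing. Hence, for any candidate machine with underlying variable index $j' \neq j$, if $j' < j$ resource $1$ is violated (its capacity is at most $4(j-1)+3 < 4j \leq 4j+t$), and if $j' > j$ resource $2$ is violated (its capacity is strictly below the $4n-(4j+t)$ demanded). This reduces the analysis to machines with variable index exactly $j$. Within that slice, the inner index of a clause machine is pinned down by combining the resource-$1$ and resource-$2$ inequalities: a smaller $t'$ violates resource $1$, a larger $t'$ violates resource $2$, so only $t'=t$ survives (resp.\ forces the unique matching clause machine). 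Any remaining wrong truth machine is then ruled out by resource $3$, whose role is precisely to discriminate between the two halves of a variable (capacity $1$ on $\truthmach(j,0)$ vs.\ $0$ on $\truthmach(j,1)$) in combination with the resource-$1$ gap ($4j+1$ vs.\ $4j+3$).

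The main obstacle is simply keeping the bookkeeping straight and being explicit about which of the three resources rules out each bad candidate; once the \emph{complementary $j$-encoder} role of resources $1$ and $2$ is recognized and resource $3$ is identified as the within-variable discriminator, the case analysis is routine. I expect the slightly fiddly subcase to be the wrong truth machine for $\variablejob(j,t)$: for $t\in\{0,1\}$ it is resource $3$ (demand $1$ against capacity $0$) that excludes $\truthmach(j,1)$, while for $t\in\{2,3\}$ resource $1$ (demand $4j+t \geq 4j+2$ against capacity $4j+1$) excludes $\truthmach(j,0)$; making sure these two subcases are handled by the correct resource is the one place a careless reader could slip.
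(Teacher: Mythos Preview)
Your approach---direct verification from the resource table---is exactly what the paper intends (it offers no proof beyond ``it is easy to see''), and your plan for $\variablejob(j,t)$ is correct and complete: the complementary role of resources~$1$ and~$2$ pins the index $(j,t)$, and resource~$3$ together with the resource-$1$ gap singles out the right truth machine.

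There is, however, a genuine gap in your treatment of $\truthjob(j)$: you never say how the \emph{clause} machines are excluded from $\emachs(\truthjob(j))$. Your resource-$1$/resource-$2$ argument forces the underlying variable index to be $j$, but within that slice sit not only $\truthmach(j,0)$ and $\truthmach(j,1)$ but also the four machines $\clausemach(\kappa(j,t'))$, $t'\in[4]$. For $t'\in\{1,2,3\}$ resource~$2$ indeed fails (capacity $4n-4j-t'<4n-4j$). But for $t'=0$ the machine $\clausemach(\kappa(j,0))$ has capacities $(4j,\,4n-4j,\,2+i)$ while $\truthjob(j)$ has demands $(4j,\,4n-4j,\,0)$, so all three constraints are satisfied and $\clausemach(\kappa(j,0))\in\emachs(\truthjob(j))$. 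In other words, the first equality in the claim is literally off by one machine; an honest execution of your verification would \emph{discover} this extra eligibility rather than rule it out. (This does not break the reduction---what the subsequent argument really needs is that in any $2$-schedule $\truthjob(j)$ lands on one of the two truth machines, and that still follows once the cascading clause-job claim is in place---but the set equality as stated cannot be proved from the table.)
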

Hence, the truth assignment and variable jobs have the same sets of eligible machines as before.
For the clause jobs this is not true, however, a similar claim holds.
Remember that a $T$-schedule is a schedule in which each machine receives a load of $T$.
\begin{claim}
In any $2$-schedule each machine from $\sett{\clausemach(i,s)}{s\in[3]}$ receives exactly one job from $\sett{\clausejob(i,s)}{s\in[3]}$.
\end{claim}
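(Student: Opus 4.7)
The plan is to prove the statement by downward induction on $i$ from $2m-1$ to $0$, exploiting the fact that resource~$3$ in \cref{table:res3_dem_cap} ties each clause's jobs very tightly to its own clause machines. The governing observation (which uses eligibility alone, so the $2$-schedule hypothesis is not even strictly needed) is that $\clausejob(i,s)$ has resource~$3$ demand $2+i$, whereas truth machines have resource~$3$ capacity at most $1$, so every clause job of index $i$ must land on some clause machine $\clausemach(i',s')$ with $i' \geq i$.

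For the base case $i = 2m-1$, no clause machine of strictly higher index exists, so the three jobs $\clausejob(2m-1,s)$ with $s\in[3]$ have to share the three machines $\clausemach(2m-1,s')$. Their resource~$3$ demand $2m+1$ exactly equals the capacity $2m+1$ of each such machine, so at most one of them fits per machine, and three jobs into three slots forces a bijection. For the inductive step, assume the claim holds for all $i' > i$. Then every machine $\clausemach(i',s')$ with $i' > i$ is occupied by a single $\clausejob(i',s'')$ whose demand fully uses up the resource~$3$ capacity $2+i'$, so no further clause job of any index can be added there. Combined with eligibility this forces the three jobs $\clausejob(i,s)$ to land on the three machines $\clausemach(i,s')$. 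The inequality $2(2+i) > 2+i$ caps each such machine at one clause job, so again exactly one job per machine.

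The main obstacle is a subtle one: a priori a lower-index clause job is eligible on $\clausemach(i,s')$ as well and could, in principle, displace the expected job of $C_i$ there, so one has to confirm that the induction really does lock the placements in from the top down. The key is the tight equality between a clause job's resource~$3$ demand and the matching clause machine's resource~$3$ capacity, which leaves zero slack once any clause job is placed and therefore blocks all further clause jobs from joining. Thanks to this rigidity the argument propagates cleanly from higher to lower indices, and never has to look at resources $1$ and $2$ or distinguish $1$-in-$3$ from $2$-in-$3$ clauses.
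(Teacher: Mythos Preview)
Your argument rests on a misunderstanding of the resource model in \rar{R}. Resources here are \emph{renewable}, not consumable: the demand/capacity constraints only determine eligibility, i.e., job $j$ may be placed on machine $i$ if and only if $d_r(j) \le c_r(i)$ for every resource $r$, independently of what other jobs are placed on $i$. Demands are not summed across the jobs assigned to a machine, so placing one job does not ``use up'' any capacity for further jobs. Consequently the assertions ``at most one of them fits per machine'' (from $2m+1 = 2m+1$), ``whose demand fully uses up the resource~$3$ capacity'', and ``$2(2+i) > 2+i$ caps each such machine at one clause job'' are all invalid. In terms of eligibility alone, nothing prevents all three jobs $\clausejob(2m-1,s)$ from sitting together on the single machine $\clausemach(2m-1,0)$; your parenthetical remark that the $2$-schedule hypothesis is not strictly needed is therefore also off the mark.

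What actually limits each clause machine to one clause job is the $2$-schedule hypothesis together with the previous claim about the eligibilities of truth and variable jobs. On a clause machine $\clausemach(i,s)$ the only eligible non-clause job is the single variable job $\variablejob(\kappa^{-1}(i,s))$ of size $1$, so reaching load exactly $2$ forces at least one clause job onto that machine. With $6m$ clause machines each receiving at least one of the $6m$ clause jobs, every clause machine receives exactly one. Now your downward induction goes through for the right reason: the three jobs $\clausejob(2m-1,\cdot)$ are eligible only on the three machines $\clausemach(2m-1,\cdot)$, and since each of those machines takes exactly one clause job it must be one of these three; the machines of index $2m-1$ are then accounted for, and the argument repeats for index $2m-2$, and so on.
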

\begin{claimproof}
Since each machine has to receive a load of exactly $2$, the last claim implies that each clause machine has to receive at least one clause job.
Furthermore, the clause jobs corresponding to the last clause can only be processed on the clause machines corresponding to the last clause. 
But then the analogue statement has to be true for the second to last clause and so forth. 
\end{claimproof}
Hence, \cref{lem:basic_reduction} works the same as before and we have:
\begin{theorem}
There is no better than $1.5$-approximation for \rar{3}\ and no better than $2$-approximation for the fair allocation version of this problem, unless P=NP.
\end{theorem}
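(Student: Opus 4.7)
The plan is to combine the two claims just established with \Cref{lem:basic_reduction} and the integrality of the job sizes to produce a hardness gap of the required size. First I would verify that the construction $I'$ is a polynomial-time reduction from \tailoredsat: the number of machines, jobs, resources, and the size of every demand/capacity entry are polynomially bounded in $n + m$, so given $I$ the instance $I'$ can be produced in polynomial time.

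Next, for the correspondence of solutions. Using the two claims, $I'$ has exactly the same eligibility structure (for truth and variable jobs) as the simple restricted assignment instance behind \Cref{lem:basic_reduction}, and in any $2$-schedule each triple of clause machines $\{\clausemach(i,0),\clausemach(i,1),\clausemach(i,2)\}$ must receive exactly the three clause jobs of the same clause index $i$. Thus a $2$-schedule of $I'$ can be interpreted as a $2$-schedule of the basic instance, and conversely every schedule produced by the forward direction of \Cref{lem:basic_reduction} is feasible here since the resource demands are designed precisely to enforce those eligibility sets. Consequently, $I$ is a yes-instance of \tailoredsat iff $I'$ admits a schedule of makespan $2$, and in such a schedule every machine carries load exactly $2$ since the total job size equals $2|\machs|$.

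For the no-case I would exploit the fact that all processing times in $I'$ are positive integers. If $I$ is a no-instance, then no schedule achieves makespan $2$, so in any feasible assignment some machine receives total load at least $3$ (integrality), while for the fair allocation objective, by the same total-load argument, some machine receives load at most $1$. Therefore an algorithm with makespan ratio strictly less than $3/2$ or fair allocation ratio strictly less than $2$ would distinguish the two cases, contradicting the NP-hardness of \tailoredsat established in \cite{DBLP:conf/stacs/MaackJ20}.

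I do not expect any real obstacle here, since all the heavy lifting was done in the two claims preceding the theorem and in the design of \cref{table:res3_dem_cap}. The only mildly delicate point is justifying that every machine in a hypothetical $2$-schedule is truly loaded to exactly $2$ (so that the gap argument yields the integer jump to $3$ or down to $1$); this follows directly from $\sum_{j\in\jobs}p(j) = 2|\machs|$ combined with the makespan upper bound. Everything else is bookkeeping over the claims already proven.
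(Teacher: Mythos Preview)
Your proposal is correct and follows essentially the same approach as the paper. The paper simply states that, given the two claims, ``\cref{lem:basic_reduction} works the same as before'' and then invokes the general gap framework from the beginning of \cref{sec:complexity} (integral sizes, total load $T|\machs|$, hence ratios $(T+1)/T$ and $T/(T-1)$ with $T=2$); you have spelled out these implicit steps explicitly, but the substance is identical.
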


\subsection{Two Ressources}\label{sec:rar2_reduction}

The reduction for \rar{2}\ is slightly more complicated.
In particular, we have three truth assignment jobs $\truthjob(j,\ell)$ with $\ell\in[3]$ and eight variable jobs $\variablejob(j,t,\circ)$ with $t\in[4]$ and $\circ\in\set{\true,\false}$ for each $j\in[n]$.
The demands, capacities and job sizes are specified in \cref{table:res2_dem_cap_size}.
\begin{table}
\centering
\caption{Resource demands and capacities of the jobs and machines, respectively, and the sizes of the jobs for the case with two resources. 
We set $\psi(\true) = 1$, $\psi(\false) = 0$, $\phi(i,0) = 0$, $\phi(i,2) = 1$, and $\phi(i,1) = k-1$ if $C_i$ is a $k$-in-3-clause.}
\begin{tabular}{llll}
\toprule
Job/Mach. & Res. 1 & Res. 2 & Size \\ \midrule
$\truthjob(j,0)$ & $2j$ & $6n - 2j$ & $1$  \\
$\truthjob(j,1)$ & $2j + 1$ & $6n - (2j + 1)$ & $1$  \\
$\truthjob(j,2)$ & $2j$ & $6n - (2j + 1)$ & $2$  \\
$\clausejob(i,s)$ & $2n + i$ & 0 & $4 + \phi(i,s)$\\
$\variablejob(j,t,\circ)$ & $2j + \floor{t/2}$ & $4j + t$ & $2 + \psi(\circ)$ \\ 
$\clausemach(i,s)$ & $2n + i$ & $4j + t$ with $(j,t) = \kappa^{-1}(i,s)$ & -\\
$\truthmach(j,q)$ & $2j + q$ & $6n - (2j + q)$ & -\\
\bottomrule
\end{tabular}
\label{table:res2_dem_cap_size}
\end{table}
First note:
\begin{claim}
The overall job size $\sum_{j\in\jobs} p(j)$ is equal to $7|\machs|$.
\end{claim}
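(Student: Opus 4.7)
The plan is to verify the identity by straightforward counting, since the claim is purely arithmetic: each kind of job contributes a known total size, and the total number of machines is determined by $n$ and $m$ via the identity $6m = 4n$ coming from the fact that \tailoredsat has $2m$ three-literal clauses and each of the $n$ variables occurs exactly four times.

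First I would tally the machines. The truth assignment machines $\truthmach(j,q)$ for $(j,q)\in[n]\times[2]$ give $2n$, and the clause machines $\clausemach(i,s)$ for $(i,s)\in[2m]\times[3]$ give $6m$. Using $6m = 4n$, this yields $|\machs| = 2n+6m = 6n$, so the target $7|\machs|$ equals $42n$.

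Next I would add up the total job size by type. The truth assignment jobs contribute $n\cdot(1+1+2) = 4n$ from the three jobs $\truthjob(j,\ell)$ per variable. The variable jobs $\variablejob(j,t,\circ)$ contribute, for each fixed $(j,t)\in[n]\times[4]$, a pair of sizes $(2+1)+(2+0)=5$, so $5\cdot 4n = 20n$ in total. For the clause jobs I would sum $4+\phi(i,s)$ over $s\in[3]$ for each clause $C_i$: this gives $12 + \phi(i,0)+\phi(i,1)+\phi(i,2) = 12 + k$ when $C_i$ is a $k$-in-$3$-clause (because $\phi(i,0)=0$, $\phi(i,2)=1$, and $\phi(i,1)=k-1$). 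Since there are $m$ clauses of each type, the total over all clause jobs is $13m + 14m = 27m = 18n$ (again using $6m = 4n$).

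Summing everything gives $4n + 20n + 18n = 42n = 7\cdot 6n = 7|\machs|$, establishing the claim. There is no real obstacle here; the only step requiring a little care is evaluating $\phi(i,1)$ correctly for both clause types, and noting that the equality of the number of $1$-in-$3$- and $2$-in-$3$-clauses (combined with $6m = 4n$) is precisely what makes the arithmetic line up to a clean multiple of $|\machs|$.
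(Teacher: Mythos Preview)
Your proof is correct and follows essentially the same approach as the paper: count machines to get $|\machs|=6n$, then sum job sizes by type ($4n$ for truth assignment jobs, $20n$ for variable jobs, $27m=18n$ for clause jobs) to obtain $42n=7|\machs|$. The only difference is that you spell out the $\phi$-computation for the clause jobs in slightly more detail than the paper does.
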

\begin{claimproof}
There are $2n + 6m = 6n$ machines, the truth assignment jobs have overall size $4n$, the variable jobs $(2 + 3)4n = 20n$, and the clause jobs $ 12m + 15m = 27m = 18n$.
Hence, we have $\sum_{j\in\jobs} p(j) = 42n = 7\cdot 6n = 7|\machs|$.
\end{claimproof}
The choice of resources directly implies:
\begin{claim}
For each $j\in[n]$ we have $\emachs(\truthjob(j,\ell)) = \set{\truthmach(j,\ell)}$ for $\ell \in [2]$ and $\emachs(\truthjob(j,2)) = \set{\truthmach(j,0), \truthmach(j,1)}$.
\end{claim}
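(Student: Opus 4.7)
The plan is to verify the claim by a direct case analysis of which machines satisfy both resource constraints for each truth assignment job. My first step is to rule out clause machines as candidates. A clause machine $\clausemach(i,s)$ has second-resource capacity $4j''+t$ where $(j'',t)=\kappa^{-1}(i,s)\in[n]\times[4]$, so its capacity on resource 2 is at most $4(n-1)+3 = 4n-1$. On the other hand, every truth assignment job has second-resource demand at least $6n-(2j+1)$; over $j\in[n]$ the minimum is $4n+1$, which strictly exceeds $4n-1$. So no clause machine is eligible for any truth assignment job, regardless of $j$ or $\ell$.

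Second, I would restrict attention to truth assignment machines. For a job $\truthjob(j,\ell)$ with demands $(d_1,d_2)$ and a machine $\truthmach(j',q)$ with capacities $(2j'+q,\,6n-(2j'+q))$, eligibility requires $d_1\leq 2j'+q$ and $d_2\leq 6n-(2j'+q)$, which rearranges to the sandwich $d_1\leq 2j'+q\leq 6n-d_2$. The map $(j',q)\mapsto 2j'+q$ is a bijection $[n]\times[2]\to[2n]$, so each integer in this sandwich corresponds to exactly one truth machine. For $\ell=0$ and $\ell=1$ the demands satisfy $d_1+d_2=6n$, so the sandwich collapses to the equality $2j'+q=d_1$, uniquely identifying $\truthmach(j,0)$ or $\truthmach(j,1)$ respectively. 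For $\ell=2$ one has $d_1=2j$ and $d_2=6n-(2j+1)$, hence $2j'+q\in\set{2j,2j+1}$, picking out exactly the pair $\set{\truthmach(j,0),\truthmach(j,1)}$.

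There is essentially no deep obstacle here: once the resource table is laid out, the argument is a routine arithmetic verification. The one point that merits any care is the clause-machine step, where one must check that the worst-case truth-job demand $6n-(2j+1)$ at the boundary $j=n-1$ still strictly exceeds the maximum clause-machine capacity $4n-1$; otherwise one might erroneously include clause machines with large indices into the eligibility set. Everything else follows by reading off the table and exploiting the fact that the truth-machine first-resource capacities enumerate $[2n]$ without repetition.
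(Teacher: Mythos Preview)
Your proof is correct and matches the paper's approach: the paper simply states that the claim follows directly from the choice of resources, and your case analysis spells out precisely the arithmetic verification that this appeal leaves implicit. Nothing is missing or wrong.
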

Regarding the clause jobs, we can employ a similar argument to the one in the last section:
\begin{claim}
In any $7$-schedule each machine from $\sett{\clausemach(i,s)}{s\in[3]}$ receives exactly one job from $\sett{\clausejob(i,s)}{s\in[3]}$.
\end{claim}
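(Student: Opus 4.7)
The plan is to mirror the argument from the three-resource section but with two additional inputs: a size-based bound that prevents two clause jobs from sharing a machine, and a careful reading of the Resource 1 column from \cref{table:res2_dem_cap_size}.

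First I would extract the eligibility consequence of Resource 1. The clause job $\clausejob(i,s)$ has Resource 1 demand $2n+i$, and the only machines whose Resource 1 capacity meets or exceeds $2n+i$ are the clause machines $\clausemach(i',s')$ with $i'\geq i$ (since truth machines have Resource 1 capacity $2j+q\leq 2n-1$). So the clause jobs of clause $C_i$ can only land on clause machines belonging to clauses $i'\geq i$.

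Next I would observe the size obstruction: every clause job has size at least $4 + \min_{i,s}\phi(i,s) = 4$, and since $4+4 = 8 > 7$, in a $7$-schedule no machine can carry two clause jobs. Combined with the fact that each clause machine needs total load exactly $7$ but can receive at most one clause job, this sets up a tight pigeonhole.

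Then I would run a reverse induction on $i$ from $2m-1$ down to $0$, showing that the three clause jobs $\{\clausejob(i,s):s\in[3]\}$ must be assigned to the three clause machines $\{\clausemach(i,s'):s'\in[3]\}$, with each machine receiving exactly one. For the base case $i=2m-1$, the Resource 1 eligibility already pins the three jobs onto the three machines $\clausemach(2m-1,s')$; the size obstruction forces a bijection. For the inductive step, the induction hypothesis says that for every $i'>i$ each machine $\clausemach(i',s')$ already carries one clause job, hence by the size obstruction can accept no further clause job; so the three jobs of clause $C_i$, whose eligible machines are $\{\clausemach(i',s'):i'\geq i\}$, are pushed down onto $\{\clausemach(i,s'):s'\in[3]\}$, and one more application of the size obstruction yields the claimed bijection.

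The only potentially tricky point is that the eligibility argument must rule out truth machines cleanly; this is just the inequality $2j+q\leq 2(n-1)+1 = 2n-1 < 2n+i$ for all $i\geq 0$, which is immediate, so there is no real obstacle. Everything else is a direct mimic of the three-resource proof.
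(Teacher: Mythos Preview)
Your proposal is correct and follows essentially the same approach as the paper: a size argument showing no machine can host two clause jobs, then a Resource~1 eligibility argument confining $\clausejob(i,\cdot)$ to clause machines with index at least $i$, followed by reverse induction starting from the last clause. You supply a bit more detail (explicitly ruling out truth machines and spelling out the induction), but the structure is identical to the paper's terse proof.
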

\begin{claimproof}
Any machine can process at most one clause job in a $7$-schedule due to their sizes.
Now, the three clause jobs corresponding to the last clause can only be processed on the three clause machines corresponding to the last clause due to the first resource. 
Now, we can repeat this argument for the second to last clause and so forth.
\end{claimproof}
Regarding the variable jobs, we have:
\begin{claim}
In any $7$-schedule, one of the two jobs $\variablejob(j,t,\true)$ and $\variablejob(j,t,\false)$ is assigned to $\clausemach(\kappa(j,t))$ and the other to $\truthmach(j,\floor{t/2})$.
\end{claim}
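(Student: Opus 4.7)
The plan is to prove the claim by reverse induction on $k = 4j+t \in \set{0,\dots,4n-1}$, processing pairs $(j,t) \in [n] \times [4]$ from largest to smallest $k$. Three preparatory facts will be used throughout: a direct resource computation shows that $\variablejob(j,t,\circ)$ is eligible on $\clausemach(\kappa(j',t'))$ exactly when $4j'+t' \geq 4j+t$ and on $\truthmach(j',q')$ exactly when $2j'+q' \geq 2j+\floor{t/2}$ (the other resource is slack in both cases); combining this with the previous claim about clause jobs, every clause machine has a residual of $2$ or $3$ and therefore admits exactly one variable job, of matching size; and a counting argument then forces every truth machine to receive exactly two variable jobs of a common size, residual-$4$ being filled by $2+2$ and residual-$6$ by $3+3$.

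For the base case $k = 4n-1$, i.e.~$(j,t) = (n-1,3)$, the eligibility condition leaves only $\truthmach(n-1,1)$ and $\clausemach(\kappa(n-1,3))$ as possible destinations for the two jobs in $V_{=4n-1} = \set{\variablejob(n-1,3,\true), \variablejob(n-1,3,\false)}$; the clause machine absorbs exactly one, and placing both on the truth machine would produce a variable load of $5$ (one size-$2$ and one size-$3$), violating the requirement that this load be $4$ or $6$. For the inductive step, assume the claim for all $k' > k$ and set $(j,t)$ to satisfy $k = 4j+t$. By the induction hypothesis, each clause machine $\clausemach(\kappa(j',t'))$ with $4j'+t' > k$ is already occupied, which (combined with eligibility) leaves $\clausemach(\kappa(j,t))$ as the only clause slot still open to $V_{=k}$. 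A short arithmetic check, using that $4j'+2q'$ is even while $k$ is odd whenever $t$ is odd, shows that every truth machine $\truthmach(j',q')$ with $2j'+q' > 2j+\floor{t/2}$ has both associated pair values $4j'+2q'$ and $4j'+2q'+1$ strictly above $k$, so the induction hypothesis has filled both of its two slots.

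Hence $V_{=k}$ can land only on $\clausemach(\kappa(j,t))$ or $\truthmach(j,\floor{t/2})$, and a split on the parity of $t$ finishes the step. If $t$ is odd, the truth machine holds no induction-placed job (its companion pair value is $k-1 < k$), so two truth slots and one clause slot remain; placing both $V_{=k}$ jobs on the truth machine would mix sizes and give variable load $5$, so they split one to each side. If $t$ is even, $\truthmach(j,\floor{t/2})$ already contains one variable job from $V_{=k+1}$ by the induction hypothesis, leaving exactly one truth slot and one clause slot to be filled by the two $V_{=k}$ jobs, with the common-size requirement forcing the correct one of the two to the truth slot. The main obstacle will be the bookkeeping in the inductive step: isolating, at each level $k$, the precise set of still-available slots and handling the even/odd asymmetry in how many slots the specific truth machine $\truthmach(j,\floor{t/2})$ has left after the higher-$k$ placements.
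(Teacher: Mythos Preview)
Your proof is correct and follows essentially the paper's argument, only run in the opposite direction: the paper inducts on increasing $(j,t)$, using that the clause and truth machines with the smallest indices have few eligible variable jobs, whereas you induct on decreasing $k=4j+t$, using that the variable jobs with the largest $k$ have few eligible machines. Your up-front counting argument --- showing that every truth machine receives exactly two variable jobs --- is a clean global way to rule out a $1+2+2+2$ filling; the paper instead gets this locally, since after each inductive step only four variable jobs (two of each size) remain eligible on the next truth machine. One minor remark: the parity observation in your arithmetic check is not needed, since $2j'+q' \geq 2j+\floor{t/2}+1$ already yields $4j'+2q' \geq 4j+2\floor{t/2}+2 \geq k+1$ in both parities of $t$.
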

\begin{claimproof}
First note that any clause machine has to receive exactly one variable job in a 7-schedule due to the first two claims.
On the other hand, the truth assignment machines each process one truth assignment job of size $1$ and potentially another one of size $2$.
Hence, there are only two ways to reach a load of $7$ using the feasible jobs, namely with two size $2$ or two size $3$ variable jobs (combined with the second truth assignment job or not). 

We proof the claim for increasing lexicographical values of $(j,t)$ starting from $(0,0)$ and $(0,1)$.
Considering the second resource, the only variable jobs $\clausemach(\kappa(0,0))$ and $\clausemach(\kappa(0,1))$ can process are $\sett[\big]{\variablejob(0,0,\circ)}{\circ\in\set{\true, \false}}$ and $\sett[\big]{\variablejob(0,t,\circ)}{t\in[2],\circ\in\set{\true, \false}}$, respectively.
Moreover, $\truthmach(0,0)$, can only process variable jobs $\variablejob(0,t,\circ)$ with $t\in[2]$ and $\circ\in\set{\true, \false}$ but has to process either $\variablejob(0,0,\true)$ or $\variablejob(0,0,\false)$ and either $\variablejob(0,1,\true)$ or $\variablejob(0,1,\false)$ to realize a load of $7$.
Hence, the claim follows for $(j,t)\in\set{(0,0),(0,1)}$.
Now, we can repeat the same argument for $(0,2)$ and $(0,3)$, and so forth.
\end{claimproof}
Using these claims, we can show:
\begin{theorem}
There is no better than $\frac{8}{7}$-approximation for \rar{2}\ and no better than $\frac{7}{6}$-approximation for the fair allocation version of this problem, unless P=NP.
\end{theorem}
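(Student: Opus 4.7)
By the framework at the start of \cref{sec:complexity} and the fact that $\sum_{j\in\jobs} p(j) = 7|\machs|$, it suffices to show that $I'$ admits a $7$-schedule if and only if $I$ is satisfiable. The ratios $8/7$ and $7/6$ follow as $(T+1)/T$ and $T/(T-1)$ with $T=7$, together with the NP-hardness of \tailoredsat. So the plan is to prove both directions of this equivalence using the three preparatory claims already established in this subsection.

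\emph{From a satisfying assignment to a $7$-schedule.} Given a satisfying $\tau$, place $\truthjob(j,\ell)$ on $\truthmach(j,\ell)$ for $\ell \in [2]$, and place $\truthjob(j,2)$ on $\truthmach(j,1)$ if $\tau(x_j)=\true$ and on $\truthmach(j,0)$ otherwise. Next, for each $(j,t)$, if the literal at position $\kappa(j,t)$ is satisfied under $\tau$, send $\variablejob(j,t,\false)$ (size $2$) to $\clausemach(\kappa(j,t))$ and $\variablejob(j,t,\true)$ (size $3$) to $\truthmach(j,\floor{t/2})$; otherwise swap these two. A direct calculation shows each truth machine gets load exactly $7$: the truth machine that also receives $\truthjob(j,2)$ is joined by two size-$2$ variable jobs (total $1+2+2+2=7$) and the other one by two size-$3$ variable jobs (total $1+3+3=7$). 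Finally, match the three clause jobs of each $C_i$ to the three clause machines so that each receives a clause job of size $7-v$, where $v$ is the size of the unique variable job already on it. This works because a $1$-in-$3$-clause has clause-job-size multiset $\{4,4,5\}$ and contributes exactly one size-$2$ variable job (one satisfied literal) and two size-$3$ variable jobs, while a $2$-in-$3$-clause has multiset $\{4,5,5\}$ and contributes two size-$2$ and one size-$3$ variable jobs.

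\emph{From a $7$-schedule to a satisfying assignment.} Given any $7$-schedule of $I'$, the second preparatory claim forces $\truthjob(j,\ell)$ on $\truthmach(j,\ell)$ for $\ell\in[2]$, so define $\tau(x_j)=\true$ iff $\truthjob(j,2)$ lies on $\truthmach(j,1)$. The fourth claim pins exactly one of $\variablejob(j,t,\true), \variablejob(j,t,\false)$ to $\clausemach(\kappa(j,t))$. The load-$7$ requirement at the truth machines then forces the splitting: if $\truthjob(j,2)$ is on $\truthmach(j,1)$, then $\truthmach(j,0)$ must achieve load $7$ from $\truthjob(j,0)$ plus two variable jobs with $t\in\{0,1\}$, forcing both to have size $3$ and thus sending the size-$2$ copies to the clause side; symmetric arguments handle the remaining three sub-cases. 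Tracing through the four sub-cases, the variable job at $\clausemach(\kappa(j,t))$ has size $2$ exactly when the literal at that position evaluates to true under $\tau$. Using the third claim (one clause job per clause machine) and the load-$7$ constraint at the clause machines, the clause-job-size multiset of $C_i$ ($\{4,4,5\}$ for a $1$-in-$3$- and $\{4,5,5\}$ for a $2$-in-$3$-clause) must pair with the variable-job sizes so that precisely one (resp.\ two) literals of $C_i$ are true under $\tau$, i.e., $\tau$ satisfies $I$.

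\emph{Main obstacle.} The only subtle point is the sign inversion: satisfied literals correspond to the \emph{smaller} variable-job size at the clause side. This is forced by the fact that the single $\truthjob(j,2)$ needs to land on one of the two truth machines, and the two variable jobs joining it there must have smaller (size $2$) rather than larger (size $3$) copies to hit load $7$. Everything else is bookkeeping, once the resource-based eligibility restrictions guaranteed by the three preparatory claims are in hand.
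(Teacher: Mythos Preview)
Your proof is correct and follows essentially the same approach as the paper: use the prepared claims to force the structure of any $7$-schedule on the truth and clause machines, read off the truth assignment from the placement of $\truthjob(j,2)$, and match variable-job sizes against the clause-job multisets $\{4,4,5\}$ and $\{4,5,5\}$. Your convention ($\tau(x_j)=\true$ iff $\truthjob(j,2)$ lies on $\truthmach(j,1)$, so that satisfied literals send the \emph{small} variable job to the clause side) is in fact the one consistent with the sizes in \cref{table:res2_dem_cap_size}; the paper's own write-up appears to have the two cases swapped, which would make its ``$k$ machines receive a big variable job'' line off by a complement.
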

\begin{proof}
Consider the case that there is a $7$-schedule $\sigma$.
We fix a variable $x_j$ and clause $C_i$.
The claims and the jobs sizes imply that either 
\begin{align*}
\sigma^{-1}(\truthmach(j,0)) &= \set{\truthjob_{j,0},\truthjob_{j,2}, \variablejob(j,0,\false), \variablejob(j,1,\false)}\text{ and}\\
\sigma^{-1}(\truthmach(j,1)) &= \set{\truthjob_{j,1},\variablejob(j,2,\true), \variablejob(j,3,\true)},
\end{align*}
or
\begin{align*}
\sigma^{-1}(\truthmach(j,0)) &= \set{\truthjob_{j,0},\variablejob(j,0,\true), \variablejob(j,1,\true)}\text{ and}\\
\sigma^{-1}(\truthmach(j,1)) &= \set{\truthjob_{j,1},\truthjob_{j,2},\variablejob(j,2,\false), \variablejob(j,3,\false)},
\end{align*}
corresponding to the choice of assigning the value $\true$ or $\false$ to $x_j$.
Furthermore, if $C_i$ is a $k$-in-3-clause, there have to be $k$ machines out of $\sett{\clausemach(i,s)}{s\in[3]}$ that receive one big variable job (of type $\variablejob(\cdot,\cdot,\true)$) together with a small clause job, while the remaining ones receive one small variable and one big clause job.
Considering the schedule for the truth machines and the claims above, this implies that we have a satisfying truth assignment.

The other direction is the simpler one:
If we have a satisfying assignment, we choose the schedule for the truth assignment machines as above and schedule the remaining variable jobs as implied by the claims and combining the jobs correctly on the clause machines to get a $7$-schedule.

\end{proof}

\subsection{Interval Restrictions}\label{sec:rai_reduction}

In order to motivate the new ideas for the \rai reduction and to make them easier to understand, it is helpful to revisit the reduction from \cite{DBLP:conf/stacs/MaackJ20} first.
One of the main ingredients in that result is a simple trick that we will also use extensively.

\subparagraph{Pyramid Trick.}
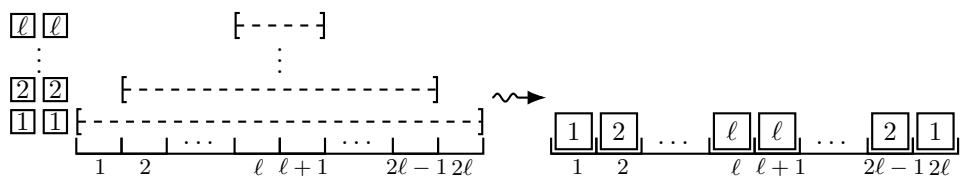
\begin{figure}
\centering
\begin{tikzpicture}[scale = 0.85]
\pgfmathsetmacro{\MachW}{0.7}
\pgfmathsetmacro{\MachH}{0.25}
\pgfmathsetmacro{\MachGap}{2.5*\MachW}
\pgfmathsetmacro{\JobGap}{0.07}
\pgfmathsetmacro{\JobWS}{0.36}
\pgfmathsetmacro{\JobW}{\MachW - 2*\JobGap}
\pgfmathsetmacro{\JobMult}{0.8}

\foreach \x/\y/\l in {	0/0/$1$,1/0/$2$,1/1/$\ell$,
						2/1/$\ell +1$,2/2/$2\ell - 1$,3/2/$2\ell$}
{
\draw[thick] (\x * \MachW + \y*\MachGap,\MachH) -- (\x * \MachW + \y*\MachGap,0) node[xshift = 0.45*\MachW cm, yshift = -0.25*\MachW cm] {\footnotesize  \l} --  (\x * \MachW + \MachW + \y*\MachGap,0) -- (\x * \MachW + \MachW +\y*\MachGap,\MachH);
}
\draw[thick] (0 * \MachW + 0*\MachGap,0) -- (3 * \MachW + 1*\MachGap,0) node[midway, above] {$\dots$};
\draw[thick] (2 * \MachW + 1*\MachGap,0) -- (3 * \MachW + 2*\MachGap,0) node[midway, above] {$\dots$};

\foreach \x/\y/\z/\l in {0/0/0/$1$,1/0/2/$2$,1/1/6/$\ell$}
{
\draw[thick, dashed, {Bracket[width = \JobWS cm]}-{Bracket[width = \JobWS cm]}] (\x * \MachW + \y*\MachGap,2*\MachH + \z*\MachH) --  (4* \MachW + 2*\MachGap - \x* \MachW - \y*\MachGap,2*\MachH + \z*\MachH);
\draw[thick] (0 * \MachW + 0*\MachGap - 2.8*\JobWS,2*\MachH + \z*\MachH - 0.5*\JobWS) rectangle (0* \MachW + 0*\MachGap - 1.8*\JobWS,2*\MachH + \z*\MachH + 0.5*\JobWS) node [midway] {\l};
\draw[thick] (0 * \MachW + 0*\MachGap - 1.4*\JobWS,2*\MachH + \z*\MachH - 0.5*\JobWS) rectangle (0* \MachW + 0*\MachGap - 0.4*\JobWS,2*\MachH + \z*\MachH + 0.5*\JobWS) node [midway] {\l};
}

\draw[->,>={Latex[length=2.5mm]},thick, line join=round,decorate, decoration={snake, segment length=6, amplitude=1.5,post=lineto, post length=9pt}] (4.2 * \MachW + 2*\MachGap,3.5*\MachH) to ++(1.2 * \MachW,0);

\node at (2 * \MachW + 1*\MachGap,6.3*\MachH) {\vdots};
\node at (- 1.6*\JobWS,6.3*\MachH) {\vdots};

\begin{scope}[xshift = 10.5 * \MachW  cm ]

\foreach \x/\y/\l/\ll in {	0/0/$1$/$1$,1/0/$2$/$2$,1/1/$\ell$/$\ell$,
						2/1/$\ell +1$/$\ell$,2/2/$2\ell - 1$/$2$,3/2/$2\ell$/$1$}
{
\draw[thick] (\x * \MachW + \y*\MachGap,\MachH) -- (\x * \MachW + \y*\MachGap,0) node[xshift = 0.5*\MachW cm, yshift = -0.25*\MachW cm] {\footnotesize  \l} --  (\x * \MachW + \MachW + \y*\MachGap,0) -- (\x * \MachW + \MachW +\y*\MachGap,\MachH);
\draw[thick] (\JobGap + \x*\MachW + \y*\MachGap,\JobGap) rectangle (\JobGap + \JobW + \x*\MachW + \y*\MachGap,\JobGap + \JobW) node [midway] {\ll};
}

\draw[thick] (0 * \MachW + 0*\MachGap,0) -- (3 * \MachW + 1*\MachGap,0) node[midway, above] {$\dots$};
\draw[thick] (2 * \MachW + 1*\MachGap,0) -- (3 * \MachW + 2*\MachGap,0) node[midway, above] {$\dots$};

\end{scope}

\end{tikzpicture}
\caption{
A visualization of the pyramid trick.
Squares represent jobs and the intervals in brackets next to them their sets of eligible machines.
} 
\label{fig:pyramid_trick}
\end{figure}
Consider the case depicted in \cref{fig:pyramid_trick}. 
We have $2\ell$ consecutive machines and $\ell$ pairs of jobs.
The $i$-th pair of jobs is eligible on the $i$-th machine and up to and including the $(2\ell + 1 -i)$-th machine.
Furthermore, we assume that for some reason each machine has to receive at least one of the jobs.
Then the first and last machine each have to receive one job from the first pair because there are no other eligible jobs that can be processed on these machines.
Now, the same argument can be repeated for the second and second to last machine and so on.
Hence, machine $i$ and $(2\ell + 1 -i)$ each have to receive exactly one job from pair $i$.

\subparagraph{Sorting.}

Next, consider that in the ordering of the machines the truth assignment machines are placed on the left and the clause machines on the right.
We could use similar truth assignment and clause jobs as in the reduction in the beginning of this chapter.
However, variable jobs each are eligible on one truth assignment and one clause machine.
Hence they have to be eligible on all machines in between in a naive adaptation of the reduction to the interval case.
If we want to use the pyramid trick to deal with this problem, then intuitively decisions regarding variables in later clauses have to be made before the decisions for variables in earlier clauses and this, of course, cannot be guaranteed regardless of the fixed order of the clauses or variables.
The main work in \cite{DBLP:conf/stacs/MaackJ20} was to remedy this situation by -- roughly speaking -- sorting the information regarding the variables made in the truth assignment gadget to enable the use of the pyramid trick.
To do so several gadgets have been introduced that were intertwined with the truth assignment gadget and carefully build up the ordered information using the pyramid trick and interlocking job sizes.
In particular, private loads can be introduced, that is, jobs that may only be processed by one particular machine.
Using these private loads, it can be guaranteed that the machine has to receive certain types of jobs in a $T$-schedule.
The problem with this argument is that the job sizes can get big rather fast if too many different job types are eligible on the same machines resulting in a high value for $T$.
Now, the main idea in the present work is to decouple the decision and the sorting process and make the sorting process as simple as possible to enable smaller job sizes and therefore a stronger result. 
Curiously, the sorting process in \cite{DBLP:conf/stacs/MaackJ20} could be interpreted as some variant of insertion sort, while the one used in the present reduction resembles bubble sort.

\subparagraph{Machines and Order.}

Let $k\in\Oh(n^2)$ be a parameter to be specified later in this paragraph.
In addition to the truth assignment and clause machines, we introduce the following ones:
\begin{itemize}
\item For each $j\in[n]$ and $t\in[4]$ there are two gateway machines: one forward $\fgatemach(j,t)$ and one backward gateway machine $\bgatemach(j,t)$.
\item For each $\ell\in[k]$, $j\in[n]$, and $t\in[4]$ there are two sorting machines: one forward $\fsortmach(\ell,j,t)$ and one backward sorting machine $\bsortmach(\ell,j,t)$.
\end{itemize}
Let $\mathcal{T}$, $\mathcal{G}$, and $\mathcal{C}$ be the sets of truth assignment, gateway, and clause machines, respectively.
Moreover, for each $\ell\in[k]$ let $\mathcal{S}_\ell = \sett{\bsortmach(\ell,j,t),\fsortmach(\ell,j,t)}{j\in[n], t\in[4]}$ be the $\ell$-th set of sorting machines.
We define the overall order of the machines by setting an internal order for each set of machines as well as an order of the machine sets. 
However, before we can do so, we need some additional concepts and notation.
In particular, let $\varphi_{0}$ be the sequence of $(j,t)$-pairs with $j\in[n]$, $t\in[4]$ with increasing lexicographical order and $\psi_{0} = \kappa(\varphi_{0})$, i.e., $\varphi_{0} = ((0,0),\dots,(0,3),\dots,(n-1,0),\dots,(n-1,3))$ and $\psi_{0} = (\kappa(0,0),\dots,\kappa(0,3),\dots,\kappa(n-1,0),\dots,\kappa(n-1,3))$.
Hence, $\psi_{0}$ is a permutation of the pairs $(i,s)$ with $i\in[2m]$ and $s\in[3]$.
We consider sorting $\psi_{0}$ with the goal of reaching the increasing lexicographical order.
Let $k$ be the number of transpositions performed by bubble sort if we do this.
Furthermore, let $\psi_{\ell+1}$ for $\ell\in[k]$ be the sequence we get after the first $(\ell + 1)$-transpositions and $\varphi_{\ell+1} = \kappa^{-1}(\psi_{\ell+1})$.
The use of bubble sort guarantees that $k\in\Oh(n^2)$ and that two consecutive sequences $\varphi_{\ell}$, $\varphi_{\ell+1}$ differ only by two consecutive entries that are transposed. 
For any finite sequence $\chi$, we denote the reversed sequence as $\bar{\chi}$.   
Now, the ordering is specified as follows: 
\begin{itemize}
\item The sets are ordered as follows: $\mathcal{T},\mathcal{G}, \mathcal{S}_0,\dots,\mathcal{S}_{k-1},\mathcal{C}$.
\item The truth assignment machines are ordered in \emph{increasing} lexicographical order of the indices $(j,q)$.
\item The clause machines are ordered in \emph{decreasing} lexicographical order of the indices $(i,s)$.
\item The backward gateway machines are placed before the forward gateway machines and for each $\ell\in[k]$ the backward sorting machines are placed before the forward sorting machines from $\mathcal{S}_\ell$ as well. 
\item The forward and backward gateway machines are ordered in increasing and decreasing lexicographical order of the indices $(j,t)$, i.e., $\varphi_0$ and $\bar{\varphi}_0$, respectively.
\item For each $\ell\in[k]$, the backward sorting machines are sorted according to the placement of the $(j,t)$ indices in $\bar{\varphi}_{\ell}$.
\item For each $\ell\in[k]$, the forward sorting machines are sorted according to the placement of the $(j,t)$ indices in $\varphi_{\ell+1}$.
\end{itemize}
The peculiar ordering of the machines is designed to enable the use of the pyramid trick.

\subparagraph{Jobs, Sizes, and Eligibilities.}
We give a full list of all jobs together with their sizes and define the sets of eligible machines by stating the respective first and last eligible machine for each job.
We will need one more definition: 
Let $\xi:[k] \rightarrow [n]\times [4]$ be the function that maps $\ell$ to the distinct pair $(j,t)$ that has a higher index in $\varphi_{\ell+1}$ than in $\varphi_{\ell}$.
\begin{itemize}
\item Truth assignment jobs: For each $j\in[n]$ there is a job $\truthjob(j)$ with size $2$, first machine $\truthmach(j,0)$ and last machine $\truthmach(j,1)$.
\item Variable jobs: For each $j\in[n]$, $t\in[4]$, and $\circ\in\set{\true, \false}$ there is a job $\variablejob(j,t,\circ)$ with size $2$ if $\circ = \false$ and $3$ otherwise, first machine $\truthmach(j,\floor{t/2})$ and last machine $\bgatemach(j,t)$.
\item Gateway jobs: For each $j\in[n]$, $t\in[4]$, and $\circ\in\set{\true, \false}$ there is a job $\gatejob(j,t,\circ)$ with size $5$ if $\circ = \false$ and $4$ otherwise, first machine $\bgatemach(j,t)$ and last machine $\fgatemach(j,t)$.
\item Bridge jobs: For each $\ell\in[k+1]$, $j\in[n]$, $t\in[4]$, and $\circ\in\set{\true, \false}$ there is a bridge job $\bridgejob(\ell,j,t,\circ)$ with size $1$ if $\circ = \false$ and $2$ otherwise, first machine either $\fgatemach(j,t)$ if $\ell =0$ or $\fsortmach(\ell - 1,j,t)$ otherwise, and last machine either $\bsortmach(\ell, j,t)$ if $\ell<k$ or $\clausemach(\kappa(j,t))$ otherwise.
\item Sorting jobs: For each $\ell\in[k]$, $j\in[n]$, $t\in[4]$, and $\circ\in\set{\true, \false}$ there is a job $\sortjob(\ell,j,t,\circ)$. 
If $\xi(\ell)=(j,t)$, it has size $4$ if $\circ = \false$ and $3$ otherwise, and, if $\xi(\ell)\neq(j,t)$, it has size $7$ if $\circ = \false$ and $6$ otherwise.
The first machine of $\sortjob(\ell,j,t,\circ)$ is $\bsortmach(\ell,j,t)$ and the last machine is $\fsortmach(\ell,j,t)$.
\item Clause jobs: For each $i\in[2m]$ and $s\in[3]$ there is a job $\clausejob(i,s)$ with size $7$ if $s = 1$, $8 - k$ if $s= 2$  and $C_i$ is a $k$-in-3-clause, and $6$ if $s=3$, first machine $\clausemach(i,2)$ and last machine $\clausemach(i,0)$. 
\item Private loads: Each truth assignment machine has a private load (a job eligible only one one machine) of $2$, each backward or forward gateway machine a load of $1$ or $2$, respectively, and for each $\ell\in[k]$ the sorting machines $\bsortmach(\ell,\xi(\ell))$ and $\fsortmach(\ell,\xi(\ell))$ have a private load of $3$.
\end{itemize}
This concludes the formal description of the reduction and we give a brief high level discussion.
We have a truth assignment gadget that determines the truth values of the variables and is followed by the gateway gadget, whose sole purpose is to decouple the used job sizes in the truth assignment gadget and the sorting gadget.
Next there is the sorting gadget that slowly reorders the information about the decisions in the truth assignment gadget, and lastly there is the clause gadget in which the truth assignment is evaluated.
The connection between the truth assignment and gateway gadget is provided by the variable jobs and all other connections are realized via bridge jobs.
A sketch of the overall structure of the reduction is provided in \cref{fig:rai_reduction_structure}.
\begin{figure}
\centering
\begin{tikzpicture}[>={Latex[length=1.6mm]}]
\pgfmathsetmacro{\w}{0.6}
\pgfmathsetmacro{\mgap}{0.15*\w}
\pgfmathsetmacro{\mh}{0.2*\w}
\pgfmathsetmacro{\ph}{0.2*\w}
\pgfmathsetmacro{\mjgap}{0.6}
\pgfmathsetmacro{\jgap}{0.1}
\pgfmathsetmacro{\js}{0.04}
\pgfmathsetmacro{\iw}{0.2*\w}

\pgfmathsetmacro{\mw}{0.3}
\pgfmathsetmacro{\gw}{1}
\pgfmathsetmacro{\bbh}{0.5}
\pgfmathsetmacro{\bsh}{0.3}
\pgfmathsetmacro{\ih}{0.3}
\pgfmathsetmacro{\ph}{0.22}
\pgfmathsetmacro{\sh}{0.15}

\foreach \x/\g/\y/\l in {0/0/1/$\mathcal{T}$, 1/1/4/$\mathcal{G}$, 5/2/4/$\mathcal{S}_0$, 13/4/4/$\mathcal{S}_{k-1}$, 17/5/2/$\mathcal{C}$}{
\draw[thick] (\x*\w + \g*\mgap,0) -- (\x*\w + \g*\mgap + \y*\w,0) node [midway, yshift = -0.4*\w cm] {\l};
}
\foreach \x/\g/\y/\p in {0/0/1/->, 1/1/2/<-, 3/1/2/->, 5/2/2/<-, 7/2/2/->, 13/4/2/<-, 15/4/2/->, 17/5/2/<-}{
\draw[thick, \p, shorten <= 1pt, shorten >= 1pt] (\x*\w + \g*\mgap,\ph) -- (\x*\w + \g*\mgap + \y*\w,\ph);
\draw[thick] (\x*\w + \g*\mgap,\mh) --  (\x*\w + \g*\mgap,0) -- (\x*\w + \g*\mgap + 0.5*\w,0);
\draw[thick] (\x*\w + \g*\mgap + \y*\w,\mh) -- (\x*\w + \g*\mgap + \y*\w,0) -- (\x*\w + \g*\mgap + \y*\w - 0.5*\w,0);
}

\node[] at (11*\w + 3*\mgap,0) {$\dots$};



\foreach \z/\m/\mm in {0/0/0, 1/0/1, 2/1/2, 3/1/3, 6/4/8, 7/4/9}{
\draw[thick, color= mdarkblue!90!black] (0*\w + 0*\mgap + \js + \m*\iw,\mjgap + \z*\jgap) -- ++(\mgap + 3*\w - 2*\js -\m*\iw-\mm*\iw,0);
}

\draw[densely dotted, thick, color = mdarkblue!90!black] (1*\w + 0.5*\mgap, \mjgap + 3.6*\jgap) -- + (0,1.5*\jgap);
\node at (1.5*\w + 0.5*\mgap, 0.7*\mjgap) {\textcolor{mdarkblue!90!black}{\scriptsize $\variablejob(*,*,*)$}};

\foreach \z/\m in {0/0, 1/1, 4/8, 5/9}{
\draw[thick, color= mred] (1*\w + 1*\mgap + \js + \m*\iw,\mjgap + 9*\jgap - \z*\jgap) -- ++(4*\w - 2*\js -2*\m*\iw,0);
}

\draw[densely dotted, thick, color = mred] (3*\w + 1.0*\mgap, \mjgap + 5.6*\jgap) -- + (0,1.5*\jgap);
\node at (3.2*\w + 0.5*\mgap, 1.3*\mjgap + 9*\jgap) {\textcolor{mred}{\scriptsize $\gatejob(*,*,*)$}};

\foreach \z/\m in {0/0, 1/1, 4/8, 5/9}{
\draw[thick, color= mblue!90!black] (3*\w + 1*\mgap + \js + \m*\iw,\mjgap + 2*\jgap + \z*\jgap) -- ++(\mgap + 4*\w - 2*\js -2*\m*\iw,0);
}

\draw[densely dotted, thick, color = mblue!90!black] (5*\w + 1.5*\mgap, \mjgap + 3.6*\jgap) -- + (0,1.5*\jgap);
\node at (5.2*\w + 0.5*\mgap, 0.7*\mjgap + 2*\jgap) {\textcolor{mblue!90!black}{\scriptsize $\bridgejob(0,*,*,*)$}};

\foreach \z/\m in {0/0, 1/1, 4/8, 5/9}{
\draw[thick, color= mdarkorange] (5*\w + 2*\mgap + \js + \m*\iw,\mjgap + 9*\jgap - \z*\jgap) -- ++(4*\w - 2*\js -2*\m*\iw,0);
}

\draw[densely dotted, thick, color = mdarkorange] (7*\w + 2.0*\mgap, \mjgap + 5.6*\jgap) -- + (0,1.5*\jgap);
\node at (7.2*\w + 0.5*\mgap, 1.3*\mjgap + 9*\jgap) {\textcolor{mdarkorange}{\scriptsize $\sortjob(0,*,*,*)$}};

\foreach \z/\m in {0/0, 1/1, 4/8, 5/9}{
\draw[thick, color= mblue!90!black] (7*\w + 2*\mgap + \js + \m*\iw,\mjgap + 2*\jgap + \z*\jgap) -- ++(4*\w - 2*\js -2*\m*\iw,0);
}

\draw[densely dotted, thick, color = mblue!90!black] (9*\w + 2.0*\mgap, \mjgap + 3.6*\jgap) -- + (0,1.5*\jgap);
\node at (9.2*\w + 0.5*\mgap, 0.7*\mjgap + 2*\jgap) {\textcolor{mblue!90!black}{\scriptsize $\bridgejob(1,*,*,*)$}};

\foreach \z/\m in {0/0, 1/1, 4/8, 5/9}{
\draw[thick, color= mblue!90!black] (11*\w + 4*\mgap + \js + \m*\iw,\mjgap + 2*\jgap + \z*\jgap) -- ++(4*\w - 2*\js -2*\m*\iw,0);
}

\draw[densely dotted, thick, color = mblue!90!black] (13*\w + 4.0*\mgap, \mjgap + 3.6*\jgap) -- + (0,1.5*\jgap);
\node at (13.2*\w + 2.7*\mgap, 0.7*\mjgap + 2*\jgap) {\textcolor{mblue!90!black}{\scriptsize $\bridgejob(k-1,*,*,*)$}};

\foreach \z/\m in {0/0, 1/1, 4/8, 5/9}{
\draw[thick, color= mdarkorange] (13*\w + 4*\mgap + \js + \m*\iw,\mjgap + 9*\jgap - \z*\jgap) -- ++(4*\w - 2*\js -2*\m*\iw,0);
}

\draw[densely dotted, thick, color = mdarkorange] (15*\w + 4.0*\mgap, \mjgap + 5.6*\jgap) -- + (0,1.5*\jgap);
\node at (15.2*\w + 3.0*\mgap, 1.3*\mjgap + 9*\jgap) {\textcolor{mdarkorange}{\scriptsize $\sortjob(k-1,*,*,*)$}};

\foreach \z/\m in {0/0, 1/1, 4/8, 5/9}{
\draw[thick, color= mblue!90!black] (15*\w + 4*\mgap + \js + \m*\iw,\mjgap + 2*\jgap + \z*\jgap) -- ++(\mgap + 4*\w - 2*\js -2*\m*\iw,0);
}

\draw[densely dotted, thick, color = mblue!90!black] (17*\w + 4.5*\mgap, \mjgap + 3.6*\jgap) -- + (0,1.5*\jgap);
\node at (17.2*\w + 3.2*\mgap, 0.7*\mjgap + 2*\jgap) {\textcolor{mblue!90!black}{\scriptsize $\bridgejob(k,*,*,*)$}};

\end{tikzpicture}
\caption{
A visualization of the job and machine structure for the reduction regarding \rai.
The lower part corresponds to the sets of machines with arrows corresponding to the direction of their ordering; and the upper part to the different job sets with lines corresponding to the intervals of eligible machines for pairs of jobs.
Truth assignment and clause jobs as well as private loads are not depicted.
} 
\label{fig:rai_reduction_structure}
\end{figure}
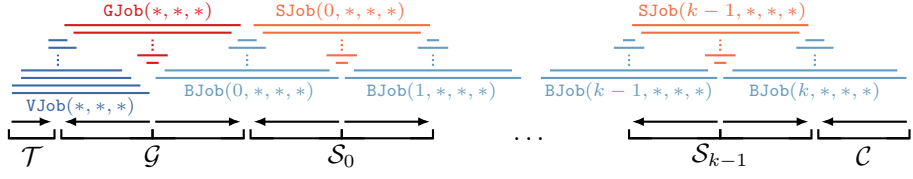

\subparagraph{Analysis.}

The following claim can be easily verified by counting the different machines and jobs and adding up the corresponding sizes.
\begin{claim}
The overall job size $\sum_{j\in\jobs} p(j)$ is equal to $8|\machs|$.
\end{claim}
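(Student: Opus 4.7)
The plan is a direct double-counting: compute $|\machs|$ and $\sum_{j\in\jobs} p(j)$ as explicit linear polynomials in $n$ and $k$, using the identity $6m = 4n$ that holds for every $\SATX$ instance to eliminate $m$, and then verify that the second equals eight times the first.

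First I would tally machines by family: $2n$ truth assignment machines, $2\cdot 4n = 8n$ gateway machines, $2\cdot 4n\cdot k = 8nk$ sorting machines, and $6m = 4n$ clause machines, giving $|\machs| = 14n + 8nk = 2n(7+4k)$. Next I would tally job sizes family by family, grouping by the relevant index pair. The jobs indexed by a single $(j,t)$ pair are easy: the two variable jobs per pair contribute $2+3 = 5$, so $20n$ in total; the two gateway jobs per pair contribute $4+5=9$, so $36n$; and the two bridge jobs per $(j,t)$ per layer contribute $1+2=3$, so $12n(k+1)$ across the $k+1$ bridge layers. The truth assignment jobs add $2n$, and the private loads on the truth and gateway machines add $2\cdot 2n + 1\cdot 4n + 2\cdot 4n = 16n$.

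The remaining three contributions need a small case split. For the sorting jobs, at each layer $\ell\in[k]$ the unique pair $\xi(\ell)$ contributes $3+4 = 7$ while each of the remaining $4n-1$ pairs contributes $6+7=13$, giving $52n-6$ per layer and $52nk - 6k$ in total; the two distinguished sorting machines per layer carry private loads of $3$ each, contributing $6k$, which neatly cancels the $-6k$ above. For the clause jobs, each of the $m$ $1$-in-$3$ clauses contributes $6+7+7 = 20$ and each of the $m$ $2$-in-$3$ clauses contributes $6+7+6 = 19$, for a total of $39m = 26n$. Summing all contributions yields $2n + 20n + 36n + 12n(k+1) + (52nk - 6k) + 26n + 16n + 6k = 112n + 64nk = 16n(7+4k) = 8\cdot 2n(7+4k) = 8|\machs|$.

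The main obstacle is pure bookkeeping rather than mathematical difficulty: the sorting-job sizes depend on whether $(j,t) = \xi(\ell)$; the bridge layers number $k+1$ and not $k$; the clause-job size for $s=2$ depends on whether $C_i$ is $1$-in-$3$ or $2$-in-$3$; and the private loads on sorting machines exist only for the two distinguished machines in each layer. Once these cases are kept straight and the $\pm 6k$ contributions between sorting jobs and distinguished private loads are observed to cancel, the identity drops out of the arithmetic.
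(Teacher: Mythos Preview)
Your proposal is correct and follows essentially the same direct counting approach as the paper. Both compute $|\machs| = (14+8k)n$ and then tally the job sizes family by family; your write-up is slightly more explicit (separating the sorting-job contribution $52nk-6k$ from the sorting-machine private loads $6k$ and noting the cancellation, and splitting the clause jobs by clause type), but the substance and arithmetic are the same.
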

\begin{claimproof}
There are $2n + 8n + 8kn + 6m = (14 + 8k)n$ machines. 
On the other hand, the truth assignment jobs have overall size $2n$, the variable jobs $20n$, the gateway jobs $36n$, the bridge jobs $12(k+1)n$, the sorting jobs $13k(4n - 1) + 7k$, the clause jobs $ 13\cdot3m = 26n$, and the private loads $4n + 12n + 6k$.
Hence, we have $\sum_{j\in\jobs} p(j) = (112 + 64k)n = 8 (14+8k)n = 8|\machs|$.
\end{claimproof}
Next, we consider the placement of clause jobs:
\begin{claim}\label{claim:rai_clause_jobs}
In any $8$-schedule each machine from $\sett{\clausemach(i,s)}{s\in[3]}$ receives exactly one job from $\sett{\clausejob(i,s)}{s\in[3]}$.
\end{claim}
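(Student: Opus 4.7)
The plan is to mimic the strategy used for the analogous claims in the resource-restriction reductions, but the interval setting is simpler because each clause job is a priori confined to its own three clause machines. I would split the argument into two observations and close with a pigeonhole count.

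First, I would verify that $\clausejob(i,s)$ is eligible precisely on $\sett{\clausemach(i,s')}{s' \in [3]}$. Since $\mathcal{C}$ is the final block in the global machine order and its internal order is decreasing lexicographical on $(i,s)$, the machines strictly between the first eligible machine $\clausemach(i,2)$ and the last eligible machine $\clausemach(i,0)$ consist of exactly $\clausemach(i,1)$. In particular, no machine from an earlier block or from a different clause is interleaved, so the eligible set coincides with the three clause machines of $C_i$.

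Second, I would argue by a size inspection that at most one clause job can land on any single clause machine in an $8$-schedule. The sizes of $\clausejob(i,0)$, $\clausejob(i,1)$, and $\clausejob(i,2)$ are $6$, $7$, and $8-k$ with $k\in\{1,2\}$, so any two of them already sum to at least $12$, which exceeds the per-machine budget of~$8$.

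Combining the two observations, the three clause jobs of $C_i$ are forced onto the same three clause machines with at most one per machine, and pigeonhole yields exactly one each. The only mildly delicate step is the first observation, which depends entirely on the bespoke machine ordering fixed earlier; the rest is routine bookkeeping.
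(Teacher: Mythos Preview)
Your proposal is correct and follows essentially the same two-step argument as the paper: first observe that the three clause jobs for clause $C_i$ are eligible only on the three clause machines for $C_i$, then use a size bound to conclude that no machine can receive two of them, and finish by pigeonhole. The paper phrases the size step slightly more coarsely (each clause job has size greater than $4$, so two of them exceed $8$) rather than computing the explicit pairwise sums, but the content is the same.
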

\begin{claimproof}
The jobs three jobs from $\sett{\clausejob(i,s)}{s\in[3]}$ are eligible only on the three machines from $\sett{\clausemach(i,s)}{s\in[3]}$.
Since they each have a size greater than $4$, only one of them can be placed on each machine in an $8$-schedule.
This implies the claim.
\end{claimproof}
The following claims consider the placement of sorting, gateway, bridge and variable jobs.
In each of these claims the pyramid trick is used although there is a slight variation in the case of the variable jobs.
\begin{claim}\label{claim:rai_sorting_jobs}
Let $\ell\in[k]$, $j\in[n]$, and $t\in[4]$.
In any $8$-schedule, the sorting jobs $\sortjob(\ell,j,t,\true)$ and $\sortjob(\ell,j,t,\false)$ are assigned to their first or last eligible machine, i.e., $\bsortmach(\ell,j,t)$ or $\fsortmach(\ell,j,t)$, and each of the two machines receives exactly one of the two jobs. 
\end{claim}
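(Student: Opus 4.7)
My plan is to first argue via a pigeonhole argument that every sorting machine in $\mathcal{S}_\ell$ receives exactly one sorting job of level $\ell$ in any $8$-schedule, and then apply the pyramid trick from the outside of $\mathcal{S}_\ell$ inward, using the size-$3$ private loads on the two $\xi(\ell)$-sorting machines to resolve the ambiguity at the unique transposition layer.

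For the counting step, sorting jobs of level $\ell$ are eligible only on the $8n$ sorting machines in $\mathcal{S}_\ell$ (their eligibility intervals lie entirely within $\mathcal{S}_\ell$), and there are exactly $8n$ such jobs. A size analysis under $T=8$ shows: a $\xi(\ell)$-sorting machine with private load $3$ can hold at most one sorting job, since any non-$\xi(\ell)$ sorting job has size at least $6$ (giving total $\geq 9$) and two $\xi(\ell)$-sorting jobs on it sum to $3+3+4=10$; and a non-$\xi(\ell)$ sorting machine can hold two sorting jobs only if both are the $\xi(\ell)$ pair (sizes $3$ and $4$), but in that case both $\xi(\ell)$-sorting machines would carry no sorting job and the $8n-2$ remaining non-$\xi(\ell)$ sorting jobs would have to fit one-per-machine on only $8n-3$ remaining non-$\xi(\ell)$ sorting machines, contradicting pigeonhole. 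Hence every sorting machine takes exactly one sorting job.

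For the pyramid step, label the machines of $\mathcal{S}_\ell$ in their order as $m_0,\dots,m_{8n-1}$ with $m_i = \bsortmach(\ell,\varphi_\ell(4n-1-i))$ for $i \in [4n]$ and $m_{4n+i} = \fsortmach(\ell,\varphi_{\ell+1}(i))$ for $i \in [4n]$, and let $a$ be such that $\xi(\ell) = \varphi_\ell(a) = \varphi_{\ell+1}(a+1)$. The pair of sorting jobs for $(j,t) = \varphi_\ell(u) = \varphi_{\ell+1}(u')$ is eligible exactly on $[m_{4n-1-u},m_{4n+u'}]$, which is the symmetric pyramid interval $[m_{4n-1-u},m_{4n+u}]$ for every non-swap pair and is shifted by one unit in opposite directions for $\xi(\ell)$ and its partner. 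Since $m_0$ admits only jobs from the pair $\varphi_\ell(4n-1)$ and $m_{8n-1}$ only from the pair $\varphi_{\ell+1}(4n-1)$, whenever these two pairs coincide --- which is the case for every outer layer $u>a+1$ --- both jobs of the pair are forced onto $\bsortmach(\ell,\varphi_\ell(u))$ and $\fsortmach(\ell,\varphi_\ell(u))$, and the argument reduces to the inner region. Peeling off outer layers eventually exposes the transposition layer with the four machines $m_{4n-2-a},m_{4n-1-a},m_{4n+a},m_{4n+a+1}$ and the four jobs of the $\xi(\ell)$ and partner pairs; the pyramid trick immediately forces $m_{4n-2-a}$ to take a partner job and $m_{4n+a+1}$ a $\xi(\ell)$ job. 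The main difficulty is to resolve the two middle machines, which is done by the private load on $m_{4n-1-a}=\bsortmach(\ell,\xi(\ell))$: a partner sorting job has size $6$ or $7$, pushing the load to at least $9$, so $m_{4n-1-a}$ must take the remaining $\xi(\ell)$ job and $m_{4n+a}$ the remaining partner job. Continuing the pyramid argument inward for the pairs $u<a$ exactly as in the outer layers then establishes the claim for all pairs.
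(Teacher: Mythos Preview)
Your proof is correct and follows essentially the same approach as the paper. Both arguments (i) use a size/pigeonhole count to show that each machine in $\mathcal{S}_\ell$ receives exactly one level-$\ell$ sorting job, exploiting that the $\xi(\ell)$-machines with private load $3$ cannot hold any non-$\xi(\ell)$ sorting job, and (ii) then apply the pyramid trick. The only organizational difference is that the paper first pins the two small $\xi(\ell)$ sorting jobs onto the two $\xi(\ell)$-machines directly from the size constraint, removes them, and then runs the pyramid trick on the clean $(8n-2)$-machine pyramid that remains; you instead keep everything together, peel the pyramid from the outside, and invoke the private load only when you hit the transposition layer to break the ambiguity between the $\xi(\ell)$ pair and its partner. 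Both routes are valid; the paper's is slightly shorter because the size argument already forces the $\xi(\ell)$ placements before any peeling is needed.
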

\begin{claimproof}
For a fixed index $\ell$, the sorting jobs from $ \sett{\sortjob(\ell,j,t,\circ)}{j\in[n],t\in[4],\circ\in\set{\true, \false}}$ are only eligible on $\mathcal{S}_\ell = \sett{\bsortmach(\ell,j,t),\fsortmach(\ell,j,t)}{j\in[n], t\in[4]}$. 
All but two of these jobs have size greater than $5$.
Moreover, two of the sorting machines from $\mathcal{S}_\ell$ have a private load of $3$ and the remaining two sorting jobs have a size of at least $3$.
Hence, in any $8$-schedule, each machine from $\mathcal{S}_\ell$ receives exactly one of these sorting jobs and the two machine with private loads have to receive the two smaller ones.
In particular, let $(j^*,t^*) = \xi(\ell)$. 
Then the two smaller sorting jobs are $\sortjob(\ell,j^*,t^*,\true)$ and $\sortjob(\ell,j^*,t^*,\false)$ and they have to be placed on $\bsortmach(\ell,j^*,t^*)$ or $\fsortmach(\ell,j^*,t^*)$.
Now, for $\mathcal{S}_\ell\setminus\set{\bsortmach(\ell,j^*,t^*), \fsortmach(\ell,j^*,t^*)}$ and $ \sett{\sortjob(\ell,j,t,\circ)}{j\in[n],t\in[4],\circ\in\set{\true, \false}} \setminus \set{\sortjob(\ell,j^*,t^*,\true),\sortjob(\ell,j^*,t^*,\false)}$ we can employ the pyramid trick, i.e., on the first and last machines from this set there are only two eligible sorting jobs and hence each of them has to receive one of the two. 
This argument can be iterated thus proving the claim.
\end{claimproof}

\begin{claim}\label{claim:rai_gateway_jobs}
Let $j\in[n]$ and $t\in[4]$.
In any $8$-schedule, the gateway jobs $\gatejob(j,t,\true)$ and $\gatejob(j,t,\false)$ are assigned to their first or last eligible machine, i.e., $\bgatemach(j,t)$ or $\fgatemach(j,t)$, and each of the two machines receives exactly one of the two jobs. 
\end{claim}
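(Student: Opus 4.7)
The plan is to mirror the argument used for \Cref{claim:rai_sorting_jobs}, adapted to the sizes and private loads of the gateway gadget. The first step is a capacity observation: both gateway jobs $\gatejob(j,t,\true)$ and $\gatejob(j,t,\false)$ have size at least $4$, so any two gateway jobs together have combined size at least $8$, which together with the private load of $1$ on a backward or $2$ on a forward gateway machine already exceeds $T=8$. Hence in any $8$-schedule each gateway machine can receive at most one gateway job. The second step is to observe that because both $\bgatemach(j,t)$ and $\fgatemach(j,t)$ lie in $\mathcal{G}$ and $\mathcal{G}$ appears as one consecutive block in the ordering, gateway jobs are eligible only on gateway machines. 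Combined with the fact that there are exactly $8n$ gateway jobs and $8n$ gateway machines, a pigeonhole argument forces every gateway machine to receive exactly one gateway job.

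With this in place, I would invoke the pyramid trick. The ordering of $\mathcal{G}$ places the backward gateway machines in decreasing lexicographical order of $(j,t)$ followed by the forward gateway machines in increasing order, so the outermost machines of $\mathcal{G}$ are $\bgatemach(n-1,3)$ and $\fgatemach(n-1,3)$, and for every $(j,t)$ the eligibility interval $[\bgatemach(j,t),\fgatemach(j,t)]$ nests symmetrically around the center of $\mathcal{G}$. On the first gateway machine, the only gateway jobs that are eligible are the two $\gatejob(n-1,3,\cdot)$, so one of them is forced there; symmetrically the other is forced onto the last gateway machine. Peeling off these two machines, the same reasoning applies to $\bgatemach(n-1,2)$ and $\fgatemach(n-1,2)$ with the next pair $(n-1,2)$, and so on inductively, which yields the claimed pairing for every $(j,t)$.

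The main point that needs care is that many non-gateway jobs can legitimately occupy gateway machines as well: variable jobs $\variablejob(j,t,\cdot)$ may end at a backward gateway machine, and bridge jobs $\bridgejob(0,j,t,\cdot)$ may start at a forward gateway machine. I expect this to be the principal obstacle, since one must rule out that such jobs replace a gateway job on some machine and break the pyramid induction. However, the capacity/count argument of the first paragraph depends only on the sizes of gateway jobs and on the total number of gateway jobs and machines, independently of how the remaining slack on each gateway machine is absorbed by variable, bridge, or private loads; so the "exactly one gateway job per gateway machine" conclusion, and hence the subsequent pyramid trick, go through regardless of how the other jobs are arranged.
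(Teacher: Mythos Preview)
Your proposal is correct and takes essentially the same approach as the paper: both argue that gateway jobs are eligible only on gateway machines, that at most one gateway job fits on any gateway machine (size $\geq 4$ plus private load $\geq 1$), that the counts match so each gateway machine receives exactly one, and then invoke the pyramid trick using the nested eligibility intervals. Your third paragraph explicitly addresses the presence of variable and bridge jobs on gateway machines, which the paper leaves implicit; this is a welcome clarification but not a departure in method.
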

\begin{claimproof}
The gateway jobs are eligible only on gateway machines, there are as many gateway jobs as gateway machines, the gateway machines each have a private load of at least $1$, and the gateway jobs have a size of at least $4$. 
Hence, in any $8$-schedule each gateway machine has to receive exactly one gateway job.
Now, considering the eligibilities of the jobs and the ordering of the machines, the claim is implied by the pyramid trick. 
\end{claimproof}

\begin{claim}\label{claim:rai_bridge_jobs}
Let $\ell\in[k+1]$, $j\in[n]$, $t\in[4]$, $\mathtt{XM} = \fgatemach(j,t)$ if $\ell =0$ and $\mathtt{XM} =\fsortmach(\ell-1,j,t)$ otherwise, and $\mathtt{YM} = \bsortmach(\ell, j,t)$ if $\ell<k$ and $\mathtt{YM} = \clausemach(\kappa(j,t))$ otherwise.
In any $8$-schedule, the bridge jobs $\bridgejob(\ell,j,t,\true)$ and $\bridgejob(\ell,j,t,\false)$ are assigned to their first or last eligible machine, i.e., $\mathtt{XM}$ or $\mathtt{YM}$, and each of the two machines receives exactly one of the two jobs. 
\end{claim}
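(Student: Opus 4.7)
The plan is to mirror \cref{claim:rai_sorting_jobs,claim:rai_gateway_jobs}: first pin down by a load-counting argument which machines any level-$\ell$ bridge job can occupy in an $8$-schedule, and then apply the pyramid trick on the mirror-symmetric enumeration of those machines.

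Let $\mathcal{E}_\ell$ denote the set of machines that are still unaccounted for once the jobs fixed by \cref{claim:rai_clause_jobs,claim:rai_sorting_jobs,claim:rai_gateway_jobs} together with the private loads have been placed. For $\ell\in\set{1,\dots,k-1}$ it consists of the forward sorting machines of $\mathcal{S}_{\ell-1}$ together with the backward sorting machines of $\mathcal{S}_{\ell}$; for $\ell=0$ of the forward gateway machines and the backward sorting machines of $\mathcal{S}_0$; and for $\ell=k$ of the forward sorting machines of $\mathcal{S}_{k-1}$ and all of $\mathcal{C}$. A direct calculation using the prescribed private loads and the sizes of the gateway, sorting and clause jobs already placed shows that every machine in $\mathcal{E}_\ell$ has a residual capacity of exactly $1$ or $2$ in any $8$-schedule. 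Since a bridge job $\bridgejob(\ell',j,t,\circ)$ spans only the single gap between the machine sets preceding and succeeding $\mathcal{S}_{\ell'}$ (or $\mathcal{C}$ when $\ell'=k$), no bridge job of level $\ell'\neq\ell$ and no other remaining job is eligible on any machine in $\mathcal{E}_\ell$; matching the sizes $1$ (if $\circ=\false$) and $2$ (if $\circ=\true$) of the level-$\ell$ bridge jobs, each such machine must therefore receive exactly one of them.

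Next I would verify that the induced order on $\mathcal{E}_\ell$ is mirror-symmetric in the index $(j,t)$. By the specification of the machine ordering, the forward-section machines of $\mathcal{E}_\ell$ appear in the order $\varphi_0$ if $\ell=0$ and in the order $\varphi_\ell$ otherwise, while the backward-section machines appear in the order $\bar{\varphi}_\ell$ if $\ell<k$. In the remaining case $\ell=k$, the bijection $\kappa$ is by definition the permutation that turns $\varphi_k$ into the increasing $(i,s)$-sequence $\psi_k$, so the decreasing $(i,s)$ order of $\mathcal{C}$ places $\clausemach(\kappa(j,t))$ at the mirror position of $\fsortmach(k-1,j,t)$. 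Thus in every case the machines $\mathtt{XM}$ and $\mathtt{YM}$ associated with the same $(j,t)$ sit at symmetric positions of $\mathcal{E}_\ell$. The pyramid trick then finishes the proof exactly as in \cref{claim:rai_gateway_jobs}: letting $(j^*,t^*)$ be the first entry of $\varphi_\ell$ (of $\varphi_0$ if $\ell=0$), only $\bridgejob(\ell,j^*,t^*,\true)$ and $\bridgejob(\ell,j^*,t^*,\false)$ are eligible on both extremal machines of $\mathcal{E}_\ell$, so these two jobs are split between $\mathtt{XM}$ and $\mathtt{YM}$, and the argument iterates inward.

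The main obstacle is the bookkeeping across the case distinctions $\ell=0$, $1\le\ell\le k-1$ and $\ell=k$: forward gateway machines carry a private load of $2$, sorting machines carry a private load of $3$ only at the $\xi(\ell)$-indexed positions, and clause machines carry no private load at all, so one has to check consistently in each case that the residual capacity left over after accounting for the relevant gateway, sorting or clause job is always precisely $1$ or $2$. Aside from this case-by-case verification, the argument is essentially identical to the two preceding claims.
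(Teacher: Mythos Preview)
Your proposal is correct and follows essentially the same route as the paper: establish via the previously fixed jobs and private loads that every machine in the relevant gap carries load $6$ or $7$, count that the $8n$ level-$\ell$ bridge jobs must therefore be distributed one per machine, and then invoke the pyramid trick on the mirror-symmetric ordering. The paper treats the middle case $1\le\ell\le k-1$ explicitly and dismisses $\ell=0$ and $\ell=k$ as analogous, whereas you spell out the ordering check for $\ell=k$ via $\psi_k=\kappa(\varphi_k)$; one small imprecision is that ``matching the sizes'' is not by itself the reason each machine gets exactly one bridge job---that comes from the counting argument (a machine with residual $2$ could a priori take two size-$1$ bridge jobs), so phrase that step as ``each machine needs at least one, and there are exactly as many jobs as machines.''
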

\begin{claimproof}
We first consider the case $\ell\in\set{1,\dots, k-1}$.
The bridge jobs from $\sett{\bridgejob(\ell,j,t,\circ)}{j\in[n],t\in[4],\circ\in\set{\true, \false}}$ are only eligible on machines belonging to the set $\sett{\fsortmach(\ell-1,j,t),\bsortmach(\ell,j,t)}{j\in[n], t\in[4]}$ and there are as many jobs as there are machines. 
Furthermore, due to the placement of the sorting jobs established in \cref{claim:rai_sorting_jobs} and the placement of private loads, we know that each of the respective machines receives a load of at least $6$ and at most $7$ in any $8$-schedule due to sorting jobs or private loads.
Hence, in an $8$-schedule the sorting machines from $\sett{\fsortmach(\ell-1,j,t),\bsortmach(\ell,j,t)}{j\in[n], t\in[4]}$ have to receive at least one more job and the only other eligible jobs are the respective bridge jobs.
Therefore each of the respective machines receives exactly one bridge job from $\sett{\bridgejob(\ell,j,t,\circ)}{j\in[n],t\in[4],\circ\in\set{\true, \false}}$.
Now, the pyramid trick implies the claim.
The cases $\ell=0$ and $\ell=k$ work analogously and additionally make use of \cref{claim:rai_gateway_jobs} and \cref{claim:rai_clause_jobs}, respectively. 
\end{claimproof}
The next two claims are proved together.
\begin{claim}\label{claim:rai_variable_jobs}
Let $j\in[n]$ and $t\in[4]$.
In any $8$-schedule, the variable jobs $\variablejob(j,t,\true)$ and $\variablejob(j,t,\false)$ are assigned to their first or last eligible machine, i.e., $\truthmach(j,\floor{t/2})$ or $\bgatemach(j,t)$, and each of the two machines receives exactly one of the two jobs. 
\end{claim}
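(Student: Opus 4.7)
The plan is to carry out a strong induction on $(j,q)\in[n]\times[2]$ in increasing lexicographic order, handling the two variable-job pairs $\variablejob(j,2q,\cdot)$ and $\variablejob(j,2q+1,\cdot)$ simultaneously in each step; both pairs share the truth-machine endpoint $\truthmach(j,q)$, and this is the reason the pyramid trick needs a slight variation here. As a preparatory step I would use \cref{claim:rai_gateway_jobs} and a global load count to pin down how many variable jobs go where. From \cref{claim:rai_gateway_jobs} each backward gateway machine $\bgatemach(j,t)$ receives exactly one gateway job of size $4$ or $5$; together with its private load $1$, the target load $8$, and the fact that among its remaining eligible jobs only variable jobs (of size $2$ or $3$) occur, this forces exactly one variable job per backward gateway machine. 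Counting $8n$ variable jobs against $4n$ backward gateway machines leaves $4n$ variable jobs for the $2n$ truth machines, and a per-truth-machine balance (private $2$, at most one $\truthjob(j)$ of size $2$, the remainder from variable jobs of size $2$ or $3$) shows each truth machine receives exactly two variable jobs.

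With these ingredients, the inductive step for $(j,q)$ focuses on the three machines $\truthmach(j,q)$, $\bgatemach(j,2q+1)$, and $\bgatemach(j,2q)$. By the induction hypothesis every variable job from a pair with index strictly less than $(j,2q)$ is already pinned to its two endpoints, so the only eligible but unplaced variable jobs on these three machines are the four jobs from pairs $(j,2q)$ and $(j,2q+1)$. These three machines demand exactly $1+1+2 = 4$ variable-job slots, matching the supply, so all four of the current jobs land among them. The rightmost of the three, $\bgatemach(j,2q)$, is eligible (among the currently available pairs) only for pair $(j,2q)$, so it takes one of the two pair-$(j,2q)$ jobs; it then remains to distribute the remaining pair-$(j,2q)$ job together with both pair-$(j,2q+1)$ jobs over $\bgatemach(j,2q+1)$ and $\truthmach(j,q)$.

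The step I expect to be the main obstacle is ruling out the alternative placement in which $\bgatemach(j,2q+1)$ also takes a pair-$(j,2q)$ job. I would rule it out by a direct load check at $\truthmach(j,q)$: in that case $\truthmach(j,q)$ would have to absorb both pair-$(j,2q+1)$ jobs, whose sizes sum to $2+3=5$; combined with the private load $2$ and at most one $\truthjob(j)$ of size $2$, this yields a total load of $7$ or $9$, never exactly $8$. This contradiction forces $\bgatemach(j,2q+1)$ to take a pair-$(j,2q+1)$ job, after which $\truthmach(j,q)$ is compelled to absorb one job from each of the two pairs, which is exactly the statement of the claim restricted to pairs $(j,2q)$ and $(j,2q+1)$. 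The base case $(j,q)=(0,0)$ runs verbatim the same argument under an empty induction hypothesis.
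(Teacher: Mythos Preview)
Your argument is correct and is essentially the paper's pyramid-trick variant, repackaged at a finer granularity. The paper proves this claim jointly with \cref{claim:rai_coherent_variable_jobs} by inducting on $j$ and treating the six machines $\truthmach(j,0),\truthmach(j,1),\bgatemach(j,0),\dots,\bgatemach(j,3)$ together; it never isolates the statement ``each truth machine receives exactly two variable jobs'' but instead works with the dichotomy ``two size-$3$ jobs or three size-$2$ jobs'' and the placement of $\truthjob(j)$ to pin everything down in one pass. Your decomposition --- first fixing the slot counts ($1$ per backward gateway machine, $2$ per truth machine) globally and then inducting on $(j,q)$ over just three machines --- cleanly decouples the two claims and makes the load contradiction at $\truthmach(j,q)$ especially short. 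One small point: the sentence ``a per-truth-machine balance \dots\ shows each truth machine receives exactly two variable jobs'' on its own only yields \emph{at least} two (the feasible variable-job loads are $2+2$, $3+3$, or $2+2+2$), and it is the global count of $4n$ remaining variable jobs over $2n$ truth machines that upgrades this to exactly two; make that combination explicit.
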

\begin{claim}\label{claim:rai_coherent_variable_jobs}
Let $j\in[n]$.
In any $8$-schedule $\sigma$ there are two possibilities regarding the schedule for the truth assignment machines:
\begin{enumerate}
\item $\truthmach(j,0)$ receives the jobs from $\set{\variablejob(j,0,\true),\variablejob(j,1,\true)}$ and $\truthmach(j,1)$ receives $\set{\variablejob(j,2,\false),\variablejob(j,3,\false), \truthjob(j)}$.
\item $\truthmach(j,0)$ receives the jobs from $\set{\variablejob(j,0,\false),\variablejob(j,1,\false), \truthjob(j)}$ and $\truthmach(j,1)$ receives $\set{\variablejob(j,2,\true),\variablejob(j,3,\true)}$.
\end{enumerate}
\end{claim}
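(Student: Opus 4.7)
The plan is to prove \cref{claim:rai_variable_jobs} and \cref{claim:rai_coherent_variable_jobs} simultaneously by induction on $j \in [n]$, analyzing in each step the six machines $M_j = \{\truthmach(j,0), \truthmach(j,1), \bgatemach(j,0), \bgatemach(j,1), \bgatemach(j,2), \bgatemach(j,3)\}$. The induction hypothesis is that for every $j' < j$ and every $t' \in [4]$, \cref{claim:rai_variable_jobs} already holds for the pair $(j', t')$, so both variable jobs of every such pair are placed inside $M_{j'}$ and are unavailable to any machine in $M_j$.

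First I would read off from the ordering which variable pairs are still eligible on each machine of $M_j$: on $\truthmach(j,0)$ only pairs $(j,0)$ and $(j,1)$; on $\truthmach(j,1)$ pairs $(j,0),\dots,(j,3)$; and on $\bgatemach(j,t)$ pairs $(j,0),\dots,(j,t)$. Pairs with index $j' > j$ are excluded both from $M_j$'s truth machines (their first machine $\truthmach(j',\floor{t'/2})$ lies strictly after any $\truthmach(j,q)$) and from $M_j$'s backward-gateway machines (by the reversed lexicographic ordering, their last machine $\bgatemach(j',t')$ lies strictly before any $\bgatemach(j,t)$). Combined with \cref{claim:rai_gateway_jobs} and the prescribed private loads, the residual variable load is then forced to be $4$ or $6$ on each truth machine (depending on who receives $\truthjob(j)$) and $2$ or $3$ on each backward-gateway machine.

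The first substantive step is to pin down $\truthmach(j,0)$: the only available jobs are $\variablejob(j,0,\cdot)$ and $\variablejob(j,1,\cdot)$, two of size $2$ and two of size $3$, and the only decompositions reaching variable load $4$ or $6$ are $2+2$ or $3+3$ (the candidate $2+2+2$ would need three size-$2$ jobs but only two exist). This yields exactly $\{\variablejob(j,0,\false), \variablejob(j,1,\false), \truthjob(j)\}$ or $\{\variablejob(j,0,\true), \variablejob(j,1,\true)\}$ on $\truthmach(j,0)$, corresponding to the two cases of \cref{claim:rai_coherent_variable_jobs}. Next I would propagate through the backward-gateway chain by a pyramid-type argument: $\bgatemach(j,0)$ sees only pair $(j,0)$, so it must take the remaining copy (with opposite truth label to the one on $\truthmach(j,0)$, which also fixes its gateway job); then pair $(j,0)$ is fully placed, so $\bgatemach(j,1)$ is left with only pair $(j,1)$'s leftover, which it is forced to take in order to reach total load $8$.

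Finally, the same enumeration applied to $\truthmach(j,1)$ — now with only pairs $(j,2), (j,3)$ still eligible and with the location of $\truthjob(j)$ already fixed by the case distinction — forces its two variable jobs to have truth value opposite to those on $\truthmach(j,0)$; the leftover copies of pairs $(j,2)$ and $(j,3)$ then land on $\bgatemach(j,2)$ and $\bgatemach(j,3)$ by the same forcing argument, completing both claims for index $j$. I expect the main obstacle to be the geometric bookkeeping underlying the eligibility analysis: one has to carefully verify that the reversed lexicographic order of the backward-gateway section really confines pairs $(j',t')$ with $j' > j$ away from $M_j$, so that the induction hypothesis on $j' < j$ alone suffices to isolate exactly the four pairs $(j,0),\dots,(j,3)$ at each step.
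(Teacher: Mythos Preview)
Your proposal is correct and follows essentially the same approach as the paper's proof: both arguments treat \cref{claim:rai_variable_jobs} and \cref{claim:rai_coherent_variable_jobs} together, argue inductively over $j$, use the private-load and gateway-job constraints from \cref{claim:rai_gateway_jobs} to fix the residual load on each machine of $M_j$, then force the placement on $\truthmach(j,0)$ via a size count and propagate through the backward-gateway chain by a pyramid-type argument. Your write-up is in fact somewhat more explicit than the paper's in two places---you spell out the induction hypothesis and you verify directly that pairs $(j',t')$ with $j'>j$ are excluded from $M_j$ on both the truth-machine side and the backward-gateway side---whereas the paper simply states that ``the argument can be iterated for $j>0$'' after working out the case $j=0$.
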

\begin{claimproof}
First note that truth assignment machines each have a private load of $2$ and each job eligible on such a machine has a size of either $2$ or $3$.
Hence, in an $8$-schedule, these machines have to receive either two jobs of size $3$ or three jobs of size $2$.
Furthermore, we know that each backward gateway machine has a private load of $1$ and receives exactly one size $4$ or $5$ gateway job in any $8$-schedule (see \cref{claim:rai_gateway_jobs}).
Since variable jobs are the only other jobs eligible on backward gateway machines, each of them has to receive exactly one such job.
The remainder of the proof is a slight variation of the pyramid trick. 
Consider the first pair $\set{\truthmach(0,0),\truthmach(0,1)}$ of truth assignment machines in the ordering.
The eligible jobs on these machines are the truth assignment job $\truthjob(0)$ with size $2$ and the variable jobs $\sett{\variablejob(0,t,\circ)}{t\in[4],\circ\in\set{\true,\false}}$ with size $2$ or $3$.
Of these, the truth assignment job is only eligible on the two respective machines while the variable jobs are the only variable jobs eligible on the last four backward gateway machines $\sett{\bgatemach(0,t)}{t\in[4]}$ in the ordering.
This already implies that one of the two variable machines has to process two of the size $2$ variable jobs along with the truth assignment job while the other truth assignment machine receives two size $3$ variable jobs. 
Moreover, each of the gateway machines has to process exactly one of the variable jobs.
The variable jobs are in fact more restricted, i.e., on the first variable machine only the jobs from $\sett{\variablejob(0,t,\circ)}{t\in[2],\circ\in\set{\true,\false}}$ are eligible and on the backwards gateway machine $\bgatemach(0,t)$ the only eligible variable jobs are included in $\sett{\variablejob(0,t',\circ)}{t'\in[t+1],\circ\in\set{\true,\false}}$.
Hence, there are only two possible placements of the respective variable jobs:
\begin{enumerate}
\item $\truthmach(0,0)$ receives the jobs from $\set{\variablejob(0,0,\true),\variablejob(0,1,\true)}$, $\truthmach(0,1)$ receives $\set{\variablejob(0,2,\false),\variablejob(0,3,\false), \truthjob(0)}$, and $\bgatemach
(0,t)$ receives $\variablejob(0,t,\circ_t)$ for each $t\in[4]$ with $\circ_t = \false$ if $t\in[2]$ and $\circ_t = \true$ otherwise.
\item $\truthmach(0,0)$ receives the jobs from $\set{\variablejob(0,0,\false),\variablejob(0,1,\false), \truthjob(0)}$, $\truthmach(0,1)$ receives $\set{\variablejob(0,2,\true),\variablejob(0,3,\true)}$, and $\bgatemach
(0,t)$ receives $\variablejob(0,t,\circ_t)$ for each $t\in[4]$ with $\circ_t = \true$ if $t\in[2]$ and $\circ_t = \false$ otherwise.
\end{enumerate}
This proves the claims for $j=0$ and the argument can be iterated for $j>0$.
\end{claimproof}

\begin{theorem}
There is no better than $\frac{9}{8}$-approximation for \rai and no better than $\frac{8}{7}$-approximation for the fair allocation version of this problem, unless P=NP.
\end{theorem}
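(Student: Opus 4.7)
The plan is to show that an instance $I$ of \tailoredsat\ is satisfiable if and only if the constructed instance $I'$ admits an $8$-schedule. Combined with the identity $\sum_{j\in\jobs} p(j) = 8|\machs|$, this yields both inapproximability ratios: if $I$ is a no-instance, some machine must have load at least $9$ (ruling out a $<\!9/8$-approximation for the minimization version) and, symmetrically, some machine must have load at most $7$ (ruling out a $<\!8/7$-approximation for the fair allocation version).

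For the forward direction, I will construct the $8$-schedule by walking through the pipeline of gadgets. Given a satisfying truth assignment, I place the truth-assignment and variable jobs on each pair $\set{\truthmach(j,0),\truthmach(j,1)}$ according to the option of \cref{claim:rai_coherent_variable_jobs} matching the assigned value of $x_j$. This forces exactly one of $\variablejob(j,t,\true)$ or $\variablejob(j,t,\false)$ onto $\bgatemach(j,t)$, namely the one whose truth label equals the evaluation of the literal at position $(j,t)$. I then fill the load of $8$ at $\bgatemach(j,t)$ with the matching $\gatejob(j,t,\cdot)$ (size $4$ if true, $5$ if false), put the opposite $\gatejob(j,t,\cdot)$ on $\fgatemach(j,t)$, and pair it there with $\bridgejob(0,j,t,\cdot)$ of matching label. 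This propagates the literal value through every sorting level via the additive identities $3+3+2 = 4+3+1 = 8$ at the two private-load sorting machines and $6+2 = 7+1 = 8$ at the remaining sorting machines. Finally, at each clause $C_i$ I distribute the three clause jobs so that size $6$ lands on the clause machines whose incoming bridge has size $2$ (true literal) and size $7$ lands on those whose incoming bridge has size $1$ (false literal); this matching is feasible because the multiset of clause-job sizes is $\set{6,7,7}$ for a $1$-in-$3$-clause and $\set{6,6,7}$ for a $2$-in-$3$-clause, and the assignment delivers exactly $k$ true literals to a $k$-in-$3$-clause.

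For the backward direction, \cref{claim:rai_clause_jobs,claim:rai_sorting_jobs,claim:rai_gateway_jobs,claim:rai_bridge_jobs,claim:rai_variable_jobs,claim:rai_coherent_variable_jobs} already pin down the structure of any $8$-schedule almost entirely. I read off a truth assignment from \cref{claim:rai_coherent_variable_jobs}: $x_j\mapsto\true$ under option~$1$ and $x_j\mapsto\false$ under option~$2$. It remains to argue that each literal value propagates unchanged to the clause machines. At $\bgatemach(j,t)$, the capacity $8$ together with the fixed variable job forces $\gatejob(j,t,\cdot)$ to inherit its label; the complementary gateway job at $\fgatemach(j,t)$ then forces $\bridgejob(0,j,t,\cdot)$ to carry that same label. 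At each sorting machine the only two load patterns summing to $8$ force the labels of the two bridge jobs adjacent to it to coincide, so the label is preserved across every level $\ell\in[k]$. Consequently, the bridge job arriving at $\clausemach(\kappa(j,t))$ carries exactly the value of the literal at position $(j,t)$ under the extracted assignment. Since the clause-job multisets are $\set{6,7,7}$ or $\set{6,6,7}$ and each $\clausemach(i,s)$ needs a clause job of size $8-p_{\text{bridge}}$, the feasibility of the matching at clause $C_i$ implies that exactly $k$ of its incoming bridge jobs are of size $2$, i.e.\ exactly $k$ literals of $C_i$ evaluate to true and $C_i$ is satisfied.

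The main obstacle is the sorting cascade: I must confirm that across every level $\ell$ and every $(j,t)$ the incoming and outgoing bridge labels agree, even as the index $(j,t)$ is permuted between $\varphi_\ell$ and $\varphi_{\ell+1}$ by a single bubble-sort transposition. This reduces to a careful but routine case check at each sorting machine — of the four possible $(\circ_{\mathrm{in}},\circ_{\mathrm{out}})$ combinations only two hit the target load $8$, and both enforce $\circ_{\mathrm{in}} = \circ_{\mathrm{out}}$ — combined with the pyramid trick in each $\mathcal{S}_\ell$ already supplied by \cref{claim:rai_sorting_jobs,claim:rai_bridge_jobs}. Everything else — the explicit forward construction and the clause-level matching argument — follows immediately from the load arithmetic highlighted above.
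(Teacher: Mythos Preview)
Your approach is essentially the same as the paper's: establish the equivalence ``$8$-schedule $\Leftrightarrow$ satisfying assignment'' by (forward) building the schedule gadget-by-gadget from a given assignment and (backward) reading off an assignment from \cref{claim:rai_coherent_variable_jobs} and then verifying, via load arithmetic at each intermediate machine, that the literal label propagates unchanged down the gateway/sorting/bridge pipeline to the clause machines. Your propagation identities ($3{+}3{+}2=4{+}3{+}1=8$ at the $\xi(\ell)$-machines and $6{+}2=7{+}1=8$ elsewhere) are exactly the mechanism the paper relies on, and your clause-matching analysis with the multisets $\set{6,7,7}$ and $\set{6,6,7}$ is the paper's argument as well.

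There is one slip worth fixing. In your backward direction you set ``$x_j\mapsto\true$ under option~1 and $x_j\mapsto\false$ under option~2'', but this is inverted. Under option~1 of \cref{claim:rai_coherent_variable_jobs} the jobs landing on $\truthmach(j,0),\truthmach(j,1)$ are $\variablejob(j,0,\true),\variablejob(j,1,\true)$ and $\variablejob(j,2,\false),\variablejob(j,3,\false)$, so the jobs \emph{sent out} to the backward gateway machines are $\variablejob(j,0,\false),\variablejob(j,1,\false),\variablejob(j,2,\true),\variablejob(j,3,\true)$: positive occurrences carry label $\false$ and negative occurrences carry label $\true$, which corresponds to $x_j=\false$, not $\true$. (The paper's forward direction makes this explicit: option~1 for $x_j=\false$, option~2 for $x_j=\true$.) With your swapped labels the extracted assignment would be the pointwise negation, which need \emph{not} satisfy a \tailoredsat\ instance (negating all variables turns a satisfied $1$-in-$3$ clause into one with two true literals). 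After swapping the two options, your argument goes through verbatim and matches the paper's proof.
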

\begin{proof}
Consider the case that there is an $8$-schedule. 
For a given variable variable $x_j$, we consider the schedule for the corresponding variable jobs described in \cref{claim:rai_coherent_variable_jobs}.
There are two possibilities in the claim.
If the first is true, we assign $x_j$ to $\false$, and if the second is true, we assign $x_j$ to $\false$. 
Now note that the Claims \ref{claim:rai_clause_jobs}-\ref{claim:rai_variable_jobs} and the sizes of the jobs imply that if for fixed $j\in[n]$, $t\in[4]$ and $\circ\in\true$ the job $\variablejob(j,t,\circ)$ is assigned to its last eligible machine than the same is true for each job from $\set{\gatejob(j,t,\circ)}\cup\sett{\bridgejob(\ell,j,t,\circ)}{\ell\in[k+1]}\cup\sett{\sortjob(\ell,j,t,\circ)}{\ell\in[k]}$.
This can be seen via a simple induction and the fact that the sizes of jobs in a job pair always differ by one.
Furthermore, for each clause $C_i$ the sizes of the jobs that the number of bridge jobs with value $\true$ in the third component is exactly $k$ if $C_i$ is a $k$-in-$3$-clause.
Hence, the truth assignment is satisfying.

If, on the other hand, we have a satisfying truth assignment, we choose the first possibility in \cref{claim:rai_coherent_variable_jobs} for the schedule of the truth assignment and variable jobs if a variable $x_j$ is assigned to $\false$ and the second one if it is assigned to $\true$ the schedule of the remaining jobs.
Now the schedule of the remaining jobs can be chosen according to the claims aiming for an $8$-schedule.
\end{proof}

\subsection{Rank Three}\label{sec:rank3}

The result for \lrs{3} is structured as follows.
We first present another restricted assignment reduction reusing several of the ideas presented so far. 
Then we show that the corresponding job sizes and eligibilities can be simulated with arbitrary precision by \lrs{3} instances.
The basic idea of the reduction is again to make decisions in a truth assignment gadget and verify these decisions using a clause gadget.
These two gadgets will be very similar to the ones described in the very first reduction for restricted assignment. 
However, similar to the \rai\ reduction, we will maintain an order of the machines corresponding to decreasing machine speeds in the first dimension of the speed vectors of the machines.
The main idea is now to use the other two dimensions to realize eligibility for the jobs transporting the information regarding the assignment decisions using increasing machine speeds in the second and third dimension.
Like in the \rai\ case, this approach is complicated by the fact that the variables can occur arbitrarily in the clauses preventing compatible orderings of the machines in the two gadgets.
Hence, we spend a lot of effort to sort the machines.
In particular, we will define a sequence of blocks of machines with jobs eligible either only in one block or a pair of blocks directly succeeding each other.
The first block will correspond to the truth assignment, the last to the clause gadget, and the remaining blocks contain one machine for each job occurrence with the ordering changing little by little.
Since the actual speed and size vectors of the machines and jobs, respectively, will not be discussed until later, the reasons for many of the peculiarities of the reduction will not become clear until then.

\subparagraph{Machines and Order.}

We reuse notation from \cref{sec:rai_reduction}.
In particular, we use mostly the same notation as in the respective paragraph regarding machines and order, i.e., let $\varphi_{0}$ be the sequence of $(j,t)$-pairs with $j\in[n]$, $t\in[4]$ with increasing lexicographical order and $\psi_{0} = \kappa(\varphi_{0})$.
Moreover, let $k\in\Oh(n^2)$ be the number of transpositions performed by bubble sort if we sort $\psi_{0}$ with the goal of reaching the increasing lexicographical order, and let $\psi_{\ell+1}$ for $\ell\in[k]$ be the sequence we get after the first $(\ell + 1)$-transpositions and $\varphi_{\ell+1} = \kappa^{-1}(\psi_{\ell+1})$.

We now first define the machines and then the machine blocks.
In particular, we have:
\begin{itemize}

\item Truth assignment machines $\truthmach(j,q)$ for $j\in[n]$ and $q\in[2]$.

\item Sorting machines $\sortmach(\ell,q,j,t)$ for each $\ell \in[k]$, $q\in [3]$, $j\in[n]$, and $t\in[4]$. 

\item Amplifier machines $\ampmach(\ell,q)$ for each $\ell \in[k]$ and $q\in [3]$.

\item Clause machines $\clausemach(i,s)$ for each $i\in[2m]$ and $s\in[3]$.
\end{itemize}
We partition the machines into a sequence of $3k + 2$ blocks.
The first is the truth assignment block $\tblock = \sett{\truthmach(j,q)}{j\in[n],q\in[2]}$, then we have a sorting block $\sblock_{3\ell + q} = \sett{\sortmach(\ell,q,j,t)}{j\in[n],t\in[4]}\cup \set{\ampmach(\ell,q)}$ for each $\ell\in[k]$ and $q\in[3]$ ordered increasingly by index, and lastly the clause block $\cblock =  \sett{\clausemach(i,s)}{i\in[2m],s\in[3]}$.
One sorting step while be carried out in a triple of succeeding sorting blocks, i.e., $\sblock_{3\ell}\cup \sblock_{3\ell + 1} \cup \sblock_{3\ell + 2}$ facilitate the transition from $\varphi_{\ell}$ to $\varphi_{\ell+1}$.

\subparagraph{Jobs, Sizes, and Eligibilities.}

For each $\ell\in[k]$, let $(j^{>}_\ell,t^{>}_\ell)$ and $(j^{<}_\ell,t^{<}_\ell)$ be the pairs with incremented and decremented index, respectively, when comparing $\varphi_{\ell}$ with $\varphi_{\ell+1}$. 
We define all jobs, sizes, and eligibilities, and furthermore record to which machine blocks the respective eligible machines belong.
In particular, we have:
\begin{itemize}

\item One truth assignment job $\truthjob(j)$ for each $j\in[n]$ eligible on $\set{\truthmach(j,0),\truthmach(j,1)}\subseteq \tblock$ and with size $2$.

\item Variable jobs $\variablejob(j,t)$ for each $j\in[n]$ and $t\in [4]$ each of size $1$ and eligible on $\set{\truthmach(j,\floor{\frac{t}{2}}), \sortmach(0,0,j,t)} \subseteq \tblock \cup \sblock_{0}$.

\item Sorting jobs $\sortjob(\ell,q,j,t)$ for each $\ell \in[k]$, $q\in [3]$, $j\in[n]$, and $t\in[4]$. 
Most of these jobs have size $1$ and are eligible on $\set{\sortmach(\ell,q,j,t),\sortmach(\ell + \floor{q/2},(q+1) \bmod 3,j,t)}\subseteq \sblock_{3\ell + q}\cup\sblock_{3(\ell + \floor{q/2}) + ((q+1) \bmod 3)}$, but there are some exceptions.
Firstly, if $\ell = k-1$ and $q=2$ the set of eligible machines is $\set{\sortmach(k-1,2,j,t),\clausemach(\kappa(j,t)}\subseteq \sblock_{3k -1} \cup \cblock$.
Secondly, for each $\ell \in[k]$ the jobs $\sortjob(\ell,q,j^{>}_\ell,t^{>}_\ell)$ with $q\in [2]$ have size $2$.
Lastly, $\sortjob(\ell,0,j^{>}_\ell,t^{>}_\ell)$ is eligible on $\set{\sortmach(\ell,0,j^{>}_\ell,t^{>}_\ell),\sortmach(\ell,0,j^{<}_\ell,t^{<}_\ell),\sortmach(\ell,1,j^{>}_\ell,t^{>}_\ell)} \subseteq\sblock_{3\ell}\cup\sblock_{3\ell + 1}$ 
and $\sortjob(\ell,0,j^{<}_\ell,t^{<}_\ell)$ on $\set{\sortmach(\ell,0,j^{<}_\ell,t^{<}_\ell),\sortmach(\ell,1,j^{<}_\ell,t^{<}_\ell),\sortmach(\ell,1,j^{>}_\ell,t^{>}_\ell)} \subseteq\sblock_{3\ell}\cup\sblock_{3\ell + 1}$.

\item Each sorting machine $\sortmach(\ell,q,j,t)$ for $\ell \in[k]$, $q\in [3]$, $j\in[n]$, and $t\in[4]$ with $(\ell,q,j,t) \notin \set{(\ell,0,j^{>}_\ell,t^{>}_\ell), (\ell,1,j^{>}_\ell,t^{>}_\ell), (\ell,2,j^{>}_\ell,t^{>}_\ell), (\ell,2,j^{<}_\ell,t^{<}_\ell) } $ has a private load of $1$.

\item Amplifier bridge jobs $\ampbridgejob(\ell,q)$ for each $\ell \in[k]$ and $q\in[2]$ each of size $1$ and and eligible on $\set{\ampmach(\ell,q),\ampmach(\ell,q+1)} \subseteq \sblock_{3\ell + q}\cup\sblock_{3\ell + q + 1}$. 

\item Amplifier shift jobs $\ampshiftjob(\ell,q)$ for each $\ell \in[k]$ and $q\in[3]$ each of size $1$.
The job $\ampshiftjob(\ell,0)$ is eligible on $\set{\ampmach(\ell,0),\sortmach(\ell,0,j^{>}_\ell,t^{>}_\ell)} \subseteq \sblock_{3\ell}$, the second job $\ampshiftjob(\ell,1)$ on $\set{\ampmach(\ell,2),\sortmach(\ell,2,j^{<}_\ell,t^{<}_\ell)} \subseteq \sblock_{3\ell + 1}$, and the last one $\ampshiftjob(\ell,2)$ on $\set{\sortmach(\ell,2,j^{<}_\ell,t^{<}_\ell),\sortmach(\ell,2,j^{>}_\ell,t^{>}_\ell)} \subseteq \sblock_{3\ell + 2}$.

\item Each amplifier machine $\ampmach(\ell,q)$ for $\ell \in[k]$ and $q\in [3]$ has a private load of $1$.

\item One clause job $\clausejob(i,s)$ for each $i\in[2m]$ and $s\in[3]$. 
The job has size $1$, if $s=1$ or $s=2$ and $C_i$ is a 2-in-3-clause, or size $2$, if $s=3$ or $s=2$ and $C_i$ is a 1-in-3-clause.
It is eligible on $\sett{\clausemach(i,s')}{s'\in[3]} \subseteq \cblock$.
\end{itemize}

\subparagraph{Analysis.}

In some sense the present reduction is very similar to the first one that was introduced.
In particular, the truth assignment decisions are made in the truth assignment gadget, and these decisions are transmitted to the clause gadget where they are verified.
Both gadgets are essentially unchanged.
This also holds for the first step of the transmission process:
Either the variable jobs corresponding to the two positive occurrences of a variable are send out of the truth assignment gadget, or the two jobs corresponding to the two negative occurrences.
In each block of the sorting gadget there are job copies for each job occurrence that essentially forward the transmission.
There is, however, in each triple of sorting blocks, a pair of jobs with some ambiguity that we address in two of the following three claims.
The reasoning behind the sorting gadget, which may seem somewhat bloated at this point, will get clear in the next paragraph.
As usual, first note:
\begin{claim}
The overall job size $\sum_{j\in\jobs} p(j)$ is equal to $2|\machs|$.
\end{claim}
\begin{claimproof}
There are $2n + 12kn + 3k + 6m = (12k + 6)n + 3k$ machines (since $6m = 4n$).
On the other hand, the truth assignment jobs have overall size $2n$, the variable jobs $4n$, the sorting jobs $12kn + 2k$, the private loads of the sorting machines $12nk - 4k$, the amplifier bridge jobs $2k$, the amplifier shift jobs $3k$, the private loads of the amplifier machines $3k$, and the clause jobs $ 3\cdot 3m = 6n$.
Hence, we have $\sum_{j\in\jobs} p(j) = (24k + 12)n + 6k = 2|\machs|$.
\end{claimproof}
In the next two claims, we consider the sorting gadget which is also considered in \cref{fig:rank3_sorting_gadget}.
\begin{claim}
Let $\ell \in[k]$.
In any $2$-schedule, the sorting job $\sortjob(\ell,0,j^{>}_\ell,t^{>}_\ell)$ may only be scheduled on a machine from $\set{\sortmach(\ell,0,j^{>}_\ell,t^{>}_\ell),\sortmach(\ell,1,j^{>}_\ell,t^{>}_\ell)}$ and $\sortjob(\ell,0,j^{<}_\ell,t^{<}_\ell)$ only on one from $\set{\sortmach(\ell,0,j^{<}_\ell,t^{<}_\ell),\sortmach(\ell,1,j^{<}_\ell,t^{<}_\ell)}$.
\end{claim}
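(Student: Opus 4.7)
The plan is to rule out, for each of the two jobs, the single ``extra'' eligible machine that the third exception in the sorting-job definition adds on top of the generic second machine; the remaining two eligible machines of each job are then exactly those listed in the claim. Explicitly, $\sortjob(\ell,0,j^{>}_\ell,t^{>}_\ell)$ has size $2$ by the size exception and its extra machine is $\sortmach(\ell,0,j^{<}_\ell,t^{<}_\ell)$, whereas $\sortjob(\ell,0,j^{<}_\ell,t^{<}_\ell)$ keeps the default size $1$ and its extra machine is $\sortmach(\ell,1,j^{>}_\ell,t^{>}_\ell)$.

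For the first job I would invoke private loads. The tuple $(\ell,0,j^{<}_\ell,t^{<}_\ell)$ does not appear among the four excluded tuples in the definition of the sorting-machine private loads, so $\sortmach(\ell,0,j^{<}_\ell,t^{<}_\ell)$ already carries a private load of $1$. In a $2$-schedule this leaves only one unit of free capacity, which is insufficient for a size-$2$ job, so $\sortjob(\ell,0,j^{>}_\ell,t^{>}_\ell)$ cannot be placed there.

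For the second job the target machine $\sortmach(\ell,1,j^{>}_\ell,t^{>}_\ell)$ \emph{is} one of the four excluded tuples and therefore carries no private load; here I would instead enumerate all jobs eligible on it. Solving the generic second-machine equation $(\ell'+\floor{q'/2},(q'+1)\bmod 3,j,t) = (\ell,1,j^{>}_\ell,t^{>}_\ell)$ for $q'\in[3]$ forces $q'=0$ and $\ell'=\ell$, so outside of the exceptions the only sorting job reaching this machine is $\sortjob(\ell,1,j^{>}_\ell,t^{>}_\ell)$ itself (via its own first machine, size $2$). The two third-exception jobs contribute $\sortjob(\ell,0,j^{>}_\ell,t^{>}_\ell)$ of size $2$ and the job $\sortjob(\ell,0,j^{<}_\ell,t^{<}_\ell)$ of size $1$, and a quick scan through the amplifier bridge and shift job definitions confirms none of them are eligible here (those only hit amplifier machines or sorting machines with second index $0$ or $2$). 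Hence the only size-$1$ job eligible on $\sortmach(\ell,1,j^{>}_\ell,t^{>}_\ell)$ is the one whose placement we want to rule out; since the machine needs total load exactly $2$ in a $2$-schedule and no second size-$1$ partner exists, it must instead carry exactly one of the two size-$2$ jobs alone and so cannot receive $\sortjob(\ell,0,j^{<}_\ell,t^{<}_\ell)$.

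The main obstacle is the enumeration in the second step: one has to be confident that no sorting job from some other index $(\ell',q')$ and no amplifier bridge or shift job sneaks in via the target machine. Once the arithmetic condition on $(\ell',q')$ is solved and the three amplifier shift definitions are consulted case by case, the rest is a straightforward accounting of sizes against the target load of $2$.
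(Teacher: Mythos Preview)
Your proposal is correct and follows essentially the same argument as the paper: for $\sortjob(\ell,0,j^{>}_\ell,t^{>}_\ell)$ the paper also rules out $\sortmach(\ell,0,j^{<}_\ell,t^{<}_\ell)$ via its private load of $1$ against the job's size $2$, and for $\sortjob(\ell,0,j^{<}_\ell,t^{<}_\ell)$ the paper likewise observes that it is the only size-$1$ job eligible on $\sortmach(\ell,1,j^{>}_\ell,t^{>}_\ell)$ while all other eligible jobs there have size $2$. Your version is simply more explicit about the enumeration that the paper condenses into one sentence.
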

\begin{claimproof}
The only other machine $\sortjob(\ell,0,j^{>}_\ell,t^{>}_\ell)$ is eligible on is $\sortmach(\ell,0,j^{<}_\ell,t^{<}_\ell)$.
However, $\sortmach(\ell,0,j^{<}_\ell,t^{<}_\ell)$ has a private load of $1$ and $\sortjob(\ell,0,j^{>}_\ell,t^{>}_\ell)$ has a size of $2$. 
Hence, $\sortjob(\ell,0,j^{>}_\ell,t^{>}_\ell)$ cannot be scheduled on $\sortmach(\ell,0,j^{<}_\ell,t^{<}_\ell)$ in any $2$-schedule.
Similarly, the only other machine $\sortjob(\ell,0,j^{<}_\ell,t^{<}_\ell)$ is eligible on is $\sortmach(\ell,1,j^{>}_\ell,t^{>}_\ell)$.
But $\sortjob(\ell,0,j^{<}_\ell,t^{<}_\ell)$ is the only job with size $1$ eligible on this machine and the other eligible jobs have size $2$.
Hence, $\sortjob(\ell,0,j^{<}_\ell,t^{<}_\ell)$ cannot be scheduled on $\sortmach(\ell,1,j^{>}_\ell,t^{>}_\ell)$.
\end{claimproof}
\begin{claim}
For each $j\in[n]$ and $t\in[4]$ the following is true in any $2$-schedule:
If $\variablejob(j,t)$ is scheduled on $\sortmach(0,0,j,t)$, then $\sortjob(\ell,q,j,t)$ is scheduled on $\sortmach(\ell + \floor{q/2}, (q+1)\bmod 3,j,t)$ for each $\ell\in[k]$ and $q\in[3]$ with $(\ell,q)\neq(k-1,2)$ and $\sortjob(k-1,2,j,t)$ is scheduled on $\clausemach(\kappa(j,t))$.
If, on the other hand, $\variablejob(j,t)$ is scheduled on $\truthmach(j,\floor{t/2})$, then $\sortjob(\ell,q,j,t)$ is scheduled on $\sortmach(\ell, q,j,t)$ for each $\ell\in[k]$ and $q\in[3]$.
\end{claim}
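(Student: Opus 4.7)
The plan is to fix $(j,t)$ and prove both cases of the statement in parallel by induction along the ``chain'' of jobs attached to $(j,t)$, namely $\variablejob(j,t),\sortjob(0,0,j,t),\sortjob(0,1,j,t),\dots,\sortjob(k-1,2,j,t)$. Once restricted via the previous claim, consecutive chain entries share exactly one eligible machine, and each entry has exactly two possible destinations: an ``earlier-block'' one and a ``later-block'' one. The first case of the statement asserts that every chain entry ends up on its later-block destination, while the second case asserts that every entry ends up on its earlier-block destination. The base case is the hypothesis on $\variablejob(j,t)$, and the inductive step transfers the property from one entry to the next by showing that the shared machine is saturated.

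For the generic portion of the chain (all $(\ell,q,j,t)$ outside the exception list), every sort job involved has size $1$ and every involved sort machine carries a private load of $1$. Since each machine must receive load exactly $2$ in a $2$-schedule, there is room for exactly one additional size-$1$ job on each of these machines. In case~1 the shared machine is saturated by the incoming entry together with its private load, which pushes the outgoing entry to its later-block destination; in case~2 the shared machine has no incoming entry, so the only remaining eligible job capable of completing its load is the outgoing entry, placed on its earlier-block destination. Thus the invariant propagates through every non-swap portion of the chain.

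At swap level $\ell$ the analysis splits into two subcases. For $(j,t) = (j^{>}_\ell,t^{>}_\ell)$ the sort jobs $\sortjob(\ell,0,j,t)$ and $\sortjob(\ell,1,j,t)$ have size $2$; combined with the preceding claim pinning their eligibility to the ordinary two-machine set, each of them single-handedly saturates any machine it lands on, so the invariant still propagates through the triple of blocks $\sblock_{3\ell},\sblock_{3\ell+1},\sblock_{3\ell+2}$ despite the missing private loads on the $(j^{>}_\ell,t^{>}_\ell)$-machines. The only shared machine within the swap level where neither a private load nor a size-$2$ payload forces saturation is $\sortmach(\ell,2,j^{<}_\ell,t^{<}_\ell)$, so this is the one place that calls for an additional argument.

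Here the amplifier subgadget closes the gap. Each of the three amplifier machines has private load $1$ and must be completed by exactly one job among $\ampbridgejob(\ell,0),\ampbridgejob(\ell,1),\ampshiftjob(\ell,0),\ampshiftjob(\ell,1)$; a short case distinction on the placements of the two bridge jobs leaves exactly one of the two scenarios in which $\ampshiftjob(\ell,0)$ is forced onto $\sortmach(\ell,0,j^{>}_\ell,t^{>}_\ell)$ or $\ampshiftjob(\ell,1)$ is forced onto $\sortmach(\ell,2,j^{<}_\ell,t^{<}_\ell)$, and in either scenario the remaining $\ampshiftjob(\ell,2)$ occupies one of $\sortmach(\ell,2,j^{<}_\ell,t^{<}_\ell),\sortmach(\ell,2,j^{>}_\ell,t^{>}_\ell)$. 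These extra units of load, together with the already-forced placements on the $(j^{>}_\ell,t^{>}_\ell)$-chain, supply exactly the saturation needed to drive the $(j^{<}_\ell,t^{<}_\ell)$-chain across the sharing at $\sortmach(\ell,2,j^{<}_\ell,t^{<}_\ell)$ in both cases of the statement. The main obstacle is exactly this amplifier bookkeeping, which must be carried out consistently for the two chains at the same swap level simultaneously; everything else reduces to routine chain propagation.
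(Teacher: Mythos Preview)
Your proposal is correct and follows essentially the same argument as the paper. Both proofs proceed by chain propagation for the generic $(j,t)$ via private loads, handle the $(j^{>}_\ell,t^{>}_\ell)$-chain at swap level $\ell$ using the size-$2$ sorting jobs (relying on the previous claim to restrict their eligibility), and then use the amplifier subgadget to supply the missing unit of load on $\sortmach(\ell,2,j^{<}_\ell,t^{<}_\ell)$ so that the $(j^{<}_\ell,t^{<}_\ell)$-chain also propagates. The only organizational difference is that the paper case-splits directly on the placement of the incoming $(j^{>}_\ell,t^{>}_\ell)$-job and derives the amplifier configuration from that, whereas you case-split on the amplifier bridge jobs first and then link the resulting scenario back to the $(j^{>}_\ell,t^{>}_\ell)$-chain; the underlying two-case analysis is identical.

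One point worth making explicit when you write this out: in your Case~B scenario ($\ampshiftjob(\ell,0)$ forced onto $\sortmach(\ell,0,j^{>}_\ell,t^{>}_\ell)$), the reason $\ampshiftjob(\ell,2)$ ends up on $\sortmach(\ell,2,j^{<}_\ell,t^{<}_\ell)$ rather than $\sortmach(\ell,2,j^{>}_\ell,t^{>}_\ell)$ is precisely that Case~B forces the $(j^{>}_\ell,t^{>}_\ell)$-chain to go right, which saturates $\sortmach(\ell,2,j^{>}_\ell,t^{>}_\ell)$ with the size-$2$ job $\sortjob(\ell,1,j^{>}_\ell,t^{>}_\ell)$. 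You allude to this with ``together with the already-forced placements on the $(j^{>}_\ell,t^{>}_\ell)$-chain,'' but the causal link (Case~B $\Rightarrow$ $(j^{>}_\ell,t^{>}_\ell)$-chain goes right) should be stated, since it is what guarantees that exactly one of $\ampshiftjob(\ell,1),\ampshiftjob(\ell,2)$ lands on $\sortmach(\ell,2,j^{<}_\ell,t^{<}_\ell)$ in every $2$-schedule.
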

\begin{claimproof}
Each of the mentioned jobs may only be scheduled on one of two machines in any $2$-schedule and the two machines are positioned in two directly succeeding blocks of machines. 
Hence, we call the machine in the earlier block the left and the one in the later block the right machine of the respective job.
Now, $\variablejob(j,t)$ may be scheduled on its left or right machine, i.e., $\truthmach(j,\floor{t/2})$ or $\sortmach(0,0,j,t)$.
If $(j,t) \notin \set{(j^{>}_0,t^{>}_0),(j^{<}_0,t^{<}_0)}$, then $\sortmach(0,0,j,t)$ has a private load of $1$ and $\variablejob(j,t)$ and $\sortjob(0,0,j,t)$ are the only eligible jobs on this machine and both have size $1$.
Note that $\sortmach(0,0,j,t)$ is the left machine of $\sortjob(0,0,j,t)$.
Therefore, if $\variablejob(j,t)$ is scheduled on its left or right machine, the same has to hold for $\sortjob(0,0,j,t)$.
The same argument can be repeated for $\sortjob(\ell,q,j,t)$ for lexicographically increasing $(\ell,q)$.
Hence, we have to consider the case $(j,t) \in \set{(j^{>}_\ell,t^{>}_\ell),(j^{<}_\ell,t^{<}_\ell)}$ (see \cref{fig:rank3_sorting_gadget}).
We again only consider the case $\ell = 0$ because the case $\ell>0$ works analogously.
\begin{figure}
\centering
\begin{tikzpicture}
\pgfmathsetmacro{\MachW}{0.8}
\pgfmathsetmacro{\MachH}{0.25}
\pgfmathsetmacro{\MachGap}{4.1*\MachW}
\pgfmathsetmacro{\JobGap}{0.07}
\pgfmathsetmacro{\JobH}{\MachW - 2*\JobGap}
\pgfmathsetmacro{\JobW}{\MachW - 2*\JobGap}
\pgfmathsetmacro{\JobMult}{0.8}
\pgfmathsetmacro{\RJobH}{1.12}
\pgfmathsetmacro{\LJobH}{2.35}
\pgfmathsetmacro{\LJobHS}{\LJobH + \JobMult*\JobH}


\foreach \x/\y/\l in {	0/0/$\mathtt{AM}_0$,1/0/$\mathtt{SM}_0^>$,2/0/$\mathtt{SM}_0^<$,
						0/1/$\mathtt{AM}_1$,1/1/$\mathtt{SM}_1^<$,2/1/$\mathtt{SM}_1^>$,
						0/2/$\mathtt{AM}_2$,1/2/$\mathtt{SM}_2^<$,2/2/$\mathtt{SM}_2^>$}
{
\draw[thick] (\x * \MachW + \y*\MachGap,\MachH) -- (\x * \MachW + \y*\MachGap,0) node[xshift = 0.5*\MachW cm, yshift = -0.25*\MachW cm] {\footnotesize  \l} --  (\x * \MachW + \MachW + \y*\MachGap,0) -- (\x * \MachW + \MachW +\y*\MachGap,\MachH);
}



\foreach \x/\y in {	0/0, 2/0, 0/1, 1/1, 0/2}
{
\draw[thick, pattern = north west lines] (\JobGap + \x*\MachW + \y*\MachGap,\JobGap) rectangle (\JobGap + \JobW + \x*\MachW + \y*\MachGap,\JobGap + \JobW);
}


\foreach \x/\y/\l in {	0/0/$\mathtt{AS}_0$, 0/2/$\mathtt{AS}_1$, 1/2/$\mathtt{AS}_2$}
{
\draw[thick, color = mblue] (0.75*\MachW + \x*\MachW + \y*\MachGap,-0.45*\MachW ) -- ++(0.25*\MachW,-0.5*\MachW) -- ++(0.25*\MachW,0.5*\MachW);

\draw[thick, color = mblue] (1*\MachW - 0.5*\JobMult*\JobH + \x*\MachW + \y*\MachGap,-0.95*\MachW - \JobGap ) rectangle ++(\JobMult*\JobH, -\JobMult*\JobH) node[midway] {\scriptsize \l};
}


\foreach \y/\l in {0/$\mathtt{AB}_0$, 1/$\mathtt{AB}_1$}
{
\draw[thick, color = mdarkorange] (0.25*\MachW + \y*0.5*\MachW + \y*\MachGap,-0.45*\MachW ) -- ++(0,-1.4*\MachW) -- ++(\MachGap - \y*0.5*\MachW,0) -- ++(0,1.4*\MachW);

\draw[thick, color = mdarkorange] (\y*0.25*\MachW + \y*\MachGap + 0.5*\MachGap,-1.85*\MachW + \JobGap ) rectangle ++(\JobMult*\JobH, \JobMult*\JobH) node[midway] {\scriptsize \l};
}



\draw[thick, dotted, color = mdarkblue]  (-0.55,\RJobH) -- (-0.3,\RJobH);
\draw[thick, color = mdarkblue]  (-0.3,\RJobH) -- (0.25*\MachW + 1*\MachW,\RJobH) -- (0.25*\MachW + 1*\MachW,\MachW);

\draw[thick, color = mdarkblue] (0.2*\MachW,\RJobH + \JobGap ) rectangle ++(\JobMult*\JobH, \JobMult*\JobH) node[midway] {\scriptsize $\mathtt{SJ}_0^>$};


\draw[thick, color = mdarkblue]  (0.75*\MachW + 1*\MachW,\MachW) --  (0.75*\MachW + 1*\MachW,\RJobH) -- (0.25*\MachW + 2*\MachW + 1*\MachGap,\RJobH) -- (0.25*\MachW + 2*\MachW + 1*\MachGap,\MachW);
\draw[thick, color = mdarkblue] (0.5*\MachW + 2*\MachW + 0*\MachGap,\RJobH) -- (0.5*\MachW + 2*\MachW + 0*\MachGap,\MachW);

\draw[thick, color = mdarkblue] (0.9*\MachGap,\RJobH + \JobGap ) rectangle ++(\JobMult*\JobH, 2*\JobMult*\JobH) node[midway] {\scriptsize $\mathtt{SJ}^>_1$};


\draw[thick, color = mdarkblue]  (0.75*\MachW + 2*\MachW + 1*\MachGap,\MachW) --  (0.75*\MachW + 2*\MachW + 1*\MachGap,\RJobH) -- (0.25*\MachW + 2*\MachW + 2*\MachGap,\RJobH) -- (0.25*\MachW + 2*\MachW + 2*\MachGap,\MachW);

\draw[thick, color = mdarkblue] (2.5*\MachW + 1.5*\MachGap -0.5*\JobMult*\JobH,\RJobH + \JobGap ) rectangle ++(\JobMult*\JobH, 2*\JobMult*\JobH) node[midway] {\scriptsize $\mathtt{SJ}_2^>$};


\draw[thick, color = mdarkblue] (0.75*\MachW + 2*\MachW + 2*\MachGap,\MachW) -- (0.75*\MachW + 2*\MachW + 2*\MachGap,\RJobH) -- ( 0.26 + 3*\MachW + 2*\MachGap,\RJobH);  
\draw[thick, dotted, color = mdarkblue]  ( 0.26 + 3*\MachW + 2*\MachGap,\RJobH) -- ++(0.33 ,0); 

\draw[thick, color = mdarkblue] (3.1*\MachW + 2*\MachGap,\RJobH + \JobGap) rectangle ++(\JobMult*\JobH, \JobMult*\JobH) node[midway] {\scriptsize $\mathtt{SJ}_3^>$};



\draw[thick, dotted, color = mred]  (-0.55,\LJobH) -- (-0.3,\LJobH);
\draw[thick, color = mred]  (-0.3,\LJobH) -- (0.25*\MachW + 2*\MachW,\LJobH) -- (0.25*\MachW + 2*\MachW,\MachW);

\draw[thick, color = mred] (0.7*\MachW,\LJobH + \JobGap ) rectangle ++(\JobMult*\JobH, \JobMult*\JobH) node[midway] {\scriptsize $\mathtt{SJ}_0^<$};


\draw[thick, color = mred]  (0.75*\MachW + 2*\MachW,\MachW) --  (0.75*\MachW + 2*\MachW,\LJobH) -- (0.5*\MachW + 2*\MachW + 1*\MachGap,\LJobH) -- (0.5*\MachW + 2*\MachW + 1*\MachGap,\MachW);
\draw[thick, color = mred] (0.25*\MachW + 1*\MachW + 1*\MachGap,\LJobH) -- (0.25*\MachW + 1*\MachW + 1*\MachGap,\MachW);

\draw[thick, color = mred] (1.05*\MachGap,\LJobH + \JobGap ) rectangle ++(\JobMult*\JobH, \JobMult*\JobH) node[midway] {\scriptsize $\mathtt{SJ}_1^<$};


\draw[thick, color = mred]  (0.75*\MachW + 1*\MachW + 1*\MachGap,\MachW) --  (0.75*\MachW + 1*\MachW + 1*\MachGap,\LJobHS) -- (0.25*\MachW + 1*\MachW + 2*\MachGap,\LJobHS) -- (0.25*\MachW + 1*\MachW + 2*\MachGap,\MachW);

\draw[thick, color = mred] (1.5*\MachW + 1.5*\MachGap -0.5*\JobMult*\JobH,\LJobHS - \JobGap ) rectangle ++(\JobMult*\JobH, -1*\JobMult*\JobH) node[midway] {\scriptsize $\mathtt{SJ}_2^<$};


\draw[thick, color = mred] (0.75*\MachW + 1*\MachW + 2*\MachGap,\MachW) -- (0.75*\MachW + 1*\MachW + 2*\MachGap,\LJobH) -- ( 0.26 + 3*\MachW + 2*\MachGap,\LJobH);  
\draw[thick, dotted, color = mred]  ( 0.26 + 3*\MachW + 2*\MachGap,\LJobH) -- ++(0.33 ,0); 

\draw[thick, color = mred] (2.6*\MachW + 2*\MachGap,\LJobH + \JobGap) rectangle ++(\JobMult*\JobH, \JobMult*\JobH) node[midway] {\scriptsize $\mathtt{SJ}_3^<$};

\end{tikzpicture}
\caption{
The core part of the sorting gadget for the \lrs{3} reduction for a fixed $\ell\in[k]$ involving three machines in each of the three blocks $\sblock_{3\ell}$, $\sblock_{3\ell + 1}$, and $\sblock_{3\ell + 2}$.
Jobs are represented by rectangles with heights corresponding to size, hatched rectangles representing private loads, and connecting lines at the remaining rectangles (which are slightly scaled down) corresponding to eligibilities.
The following notation is used:
$\mathtt{AM}_q = \ampmach(\ell,q)$, $\mathtt{SM}_q^\circ =  \sortmach(\ell,q,j_\ell^\circ,t_\ell^\circ)$, $\mathtt{AS}_q =  \ampshiftjob(\ell,q)$, $\mathtt{AB}_q =  \ampbridgejob(\ell, q)$,  $\mathtt{SJ}_q^\circ  = \sortjob(\ell,q-1,j_\ell^\circ,t_\ell^\circ)$ for $q>0$, and $\mathtt{SJ}_0^\circ =  \variablejob(j_\ell^\circ,t_\ell^\circ)$ if $\ell = 0$ and $\mathtt{SJ}_0^\circ =  \sortjob(\ell - 1,2,j_\ell^\circ,t_\ell^\circ)$ otherwise.
} 
\label{fig:rank3_sorting_gadget}
\end{figure}
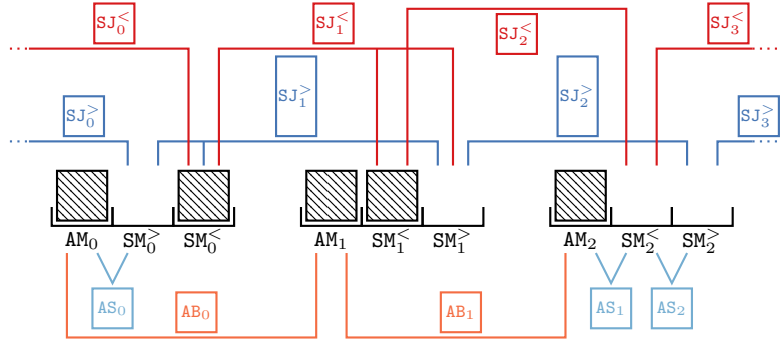
We consider two cases.
First, assume that $\variablejob(j^{>}_0,t^{>}_0)$ is scheduled on its left machine.
The only other eligible jobs on $\sortmach(0,0,j^{>}_0,t^{>}_0)$ are $\sortjob(0,0,j^{>}_0,t^{>}_0)$ and $\ampshiftjob(0,0)$. 
The former has size $2$ and the latter has size $1$.
Hence, $\sortjob(0,0,j^{>}_0,t^{>}_0)$ has to be scheduled on $\sortmach(0,0,j^{>}_0,t^{>}_0)$ and $\ampshiftjob(0,0)$ on its only other eligible machine, namely $\ampmach(0,0)$.
Since, the amplifier machines each have a private load of $1$, this determines the schedule for $\ampmach(0,1)$ and $\ampmach(0,2)$ as well, i.e., $\ampmach(0,1)$ has to receive the amplifier bridge job $\ampbridgejob(0,0)$ and $\ampmach(0,2)$ the job $\ampbridgejob(0,1)$.
Similarly, $\sortmach(0,1,j^{>}_0,t^{>}_0)$ has to receive $\sortjob(0,1,j^{>}_0,t^{>}_0)$ since this is the only other job that can be scheduled on this machine in a $2$-schedule.
The remaining eligible jobs on $\sortmach(0,2,j^{>}_0,t^{>}_0)$ are now $\sortjob(0,2,j^{>}_0,t^{>}_0)$ and $\ampshiftjob(0,2)$ and both have size $1$.
Hence, both have to be scheduled on this machine. 
The last amplifier shift job $\ampshiftjob(0,2)$ cannot be scheduled on $\ampmach(0,2)$ due to the above considerations and is therefore scheduled on $\sortmach(0,2,j^{<}_0,t^{<}_0)$.
Now, taking the schedule of all of the above jobs into account, it is easy to that the placement of $\variablejob(j^{<}_0,t^{<}_0)$ on its left or right machine implies that $\sortjob(0,q,j^{<}_0,t^{<}_0)$ is placed on its left or right machine, respectively, as well.

Now, we assume that $\variablejob(j^{>}_0,t^{>}_0)$ is scheduled on its right machine.
This implies that $\sortjob(0,0,j^{>}_0,t^{>}_0)$ has to placed on $\sortmach(0,1,j^{>}_0,t^{>}_0)$ and $\ampshiftjob(0,0)$ has to be placed on $\sortmach(0,0,j^{>}_0,t^{>}_0)$.
This directly determines the schedules for the amplifier machines, in particular $\ampmach(0,0)$ has to receive $\ampbridgejob(0,0)$, $\ampmach(0,1)$ the job $\ampbridgejob(0,1)$, and $\ampmach(0,2)$ the amplifier shift job $\ampshiftjob(0,1)$.
Furthermore, $\sortjob(0,1,j^{>}_0,t^{>}_0)$ has to be placed on $\sortmach(0,2,j^{>}_0,t^{>}_0)$ and $\sortjob(0,2,j^{>}_0,t^{>}_0)$ on $\sortmach(1,0,j^{>}_0,t^{>}_0)$.
Since $\sortjob(0,1,j^{>}_0,t^{>}_0)$ has size $2$, the last amplifier shift job $\ampshiftjob(0,2)$ has to be placed on $\sortmach(0,2,j^{<}_0,t^{<}_0)$.
Again, taking the schedule of all of the above jobs into account, it is easy to see that the placement of $\variablejob(j^{<}_0,t^{<}_0)$ on its left or right machine implies that $\sortjob(0,q,j^{<}_0,t^{<}_0)$ is placed on its left or right machine, respectively, as well.
\end{claimproof}
The above claims and the same arguments used for \cref{lem:basic_reduction} yield:
\begin{lemma}\label{lem:rank3_RA_instance}
There is satisfying assignment for the $\tailoredsat$ instance, if and only if there is there is a $2$-schedule for the constructed restricted assignment instance.
\end{lemma}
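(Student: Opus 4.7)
The plan is to adapt the argument for \cref{lem:basic_reduction} by invoking the preceding claims to show that the sorting gadget faithfully transmits the signal from the truth assignment gadget to the clause gadget, so that the reduction essentially collapses to the basic one.

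For the ``if'' direction, I would begin with a satisfying truth assignment and build the $2$-schedule block by block. In the truth assignment block $\tblock$, I would follow exactly the placement from the basic reduction: if $x_j = \true$, put $\truthjob(j)$ on $\truthmach(j,0)$, send $\variablejob(j,0)$ and $\variablejob(j,1)$ right into $\sblock_0$ (onto $\sortmach(0,0,j,t)$), and keep $\variablejob(j,2)$ and $\variablejob(j,3)$ on $\truthmach(j,1)$; symmetrically if $x_j = \false$. Then I would propagate through the sorting gadget using the last claim: for each $(j,t)$ whose variable job was sent right, place each sorting job $\sortjob(\ell,q,j,t)$ on its right machine $\sortmach(\ell+\lfloor q/2\rfloor,(q+1)\bmod 3,j,t)$, and $\sortjob(k-1,2,j,t)$ on $\clausemach(\kappa(j,t))$; for the other $(j,t)$, keep each $\sortjob(\ell,q,j,t)$ on $\sortmach(\ell,q,j,t)$. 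The amplifier/shift/bridge jobs within each triple of sorting blocks are placed as forced in the proof of the previous claim (depending on whether $\variablejob(j^>_\ell,t^>_\ell)$ went left or right). Finally, for the clause block $\cblock$, since in each clause exactly $k$ of the three signal jobs arrive for a $k$-in-$3$-clause, the clause jobs $\clausejob(i,s)$ can be distributed to complete each clause machine to load $2$, exactly as in the basic reduction.

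For the ``only if'' direction, suppose a $2$-schedule exists. For each variable $x_j$, look at $\truthjob(j)$: it is placed on either $\truthmach(j,0)$ or $\truthmach(j,1)$, and define the truth assignment accordingly (as in \cref{lem:basic_reduction}). The basic-reduction argument already forces the variable jobs on the truth block to reflect this choice: if $\truthjob(j)$ sits on $\truthmach(j,0)$, then $\variablejob(j,0)$ and $\variablejob(j,1)$ must be sent out to $\sortmach(0,0,j,\cdot)$, while $\variablejob(j,2)$ and $\variablejob(j,3)$ remain on $\truthmach(j,1)$; otherwise the reverse. The previous claim then implies that $\sortjob(k-1,2,j,t)$ lands on $\clausemach(\kappa(j,t))$ precisely for those $(j,t)$ whose literal is set to $\true$. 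Hence each clause machine receives exactly one incoming signal job of size $1$ per true literal, and the remaining load must be filled by clause jobs; the sizes of $\clausejob(i,s)$ are chosen so that this is feasible exactly when $C_i$ has the right number of true literals ($1$ for $1$-in-$3$-clauses, $2$ for $2$-in-$3$-clauses). Thus the assignment satisfies every clause.

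The main obstacle is simply bookkeeping: verifying that the loads add up to $2$ on every machine in each block of the constructed schedule (truth, sorting triples, clause), which requires carefully combining the job sizes specified in the construction with the forced placements from the sorting-gadget claim. The conceptual content is entirely contained in the preceding claims; the lemma is essentially a corollary, analogous to \cref{lem:basic_reduction}, once one observes that the sorting/amplifier jobs act as a deterministic ``wire'' between truth and clause gadgets.
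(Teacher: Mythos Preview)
Your proposal is correct and follows exactly the paper's approach: the paper does not give a standalone proof but simply states that ``the above claims and the same arguments used for \cref{lem:basic_reduction} yield'' the lemma. Your write-up is in fact more detailed than the paper's, spelling out the propagation through the sorting blocks and the load accounting that the paper leaves implicit.
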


\subparagraph{Speeds and Sizes.}

In this paragraph we show that the sizes and eligibilities can be approximated with arbitrary precision by an \lrs{3} instance.
Before formally stating how this is achieved, we first discuss the idea of the approach.
Note that in our definition of size and speed vectors a higher numerical value in a speed vector corresponds to \emph{bigger} processing times.
Since this is counterintuitive, we will, in the following brief discussion, call the multiplicative inverse values of the numerical speed values the \emph{speeds} of the machines (all such values will be larger than zero). 

When considering a single block of machines the speeds in the first dimension are decreasing and the speeds in the other two dimensions are increasing with respect to the current ordering of the pairs $(j,t)\in [n] \times [4]$.
Furthermore, when comparing succeeding blocks of machines, the speeds in the first dimensions stay essentially the same, the ones in the second are smaller than in any predecessor block and bigger than in any successor block, and the speeds in last dimension are bigger than in any predecessor block and smaller than in any successor block.
Using this structure, we can realize jobs that are eligible on exactly two machines in two succeeding blocks as well as jobs eligible on few machines belonging to the same block. 
The idea is visualized in \cref{fig:rank3_speeds_idea} and made concrete in the following.
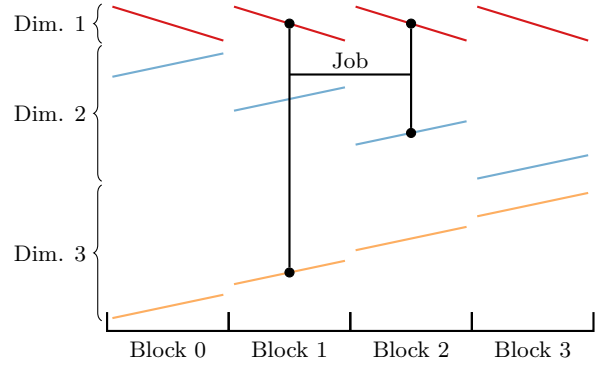
\begin{figure}
\centering
\begin{tikzpicture}
\pgfmathsetmacro{\W}{1.6}
\pgfmathsetmacro{\H}{0.45}
\pgfmathsetmacro{\bH}{0.25}
\pgfmathsetmacro{\Gap}{0.07}
\pgfmathsetmacro{\FSH}{4.3}
\pgfmathsetmacro{\SSH}{3.3}
\pgfmathsetmacro{\TSH}{0.1}

\foreach \x in {0,1,2,3}
{
\draw[thick] (\x * \W ,\bH) -- (\x * \W,0) node[xshift = 0.5*\W cm, yshift = -0.15*\W cm] {\footnotesize  Block \x}  --  (\x * \W + \W ,0) -- (\x * \W + \W ,\bH); 
}


\foreach \x in {0,1,2,3}
{
\draw[thick, color = mred] (\x * \W + \Gap ,\FSH) -- (\x * \W + \W - \Gap ,\FSH - \H);
}


\draw[decorate,decoration={brace,amplitude=3pt,raise=2pt, mirror},yshift=0cm] (0,\FSH + 0.5*\Gap) -- (0,\FSH - \H -0.5*\Gap) node [midway,xshift=-0.75cm]{\footnotesize Dim. 1};


\foreach \x in {0,1,2,3}
{
\draw[thick, color = mblue] (\x * \W + \Gap ,\SSH - \x*\H + \Gap) -- (\x * \W + \W - \Gap ,\SSH -\x*\H + \H -\Gap);
}

\draw[decorate,decoration={brace,amplitude=3pt,raise=2pt, mirror},yshift=0cm] (0,\SSH + 0.5*\Gap + \H) -- (0,\SSH + 0.5*\Gap - 3*\H) node [midway,xshift=-0.75cm]{\footnotesize Dim. 2};


\foreach \x in {0,1,2,3}
{
\draw[thick, color = morange] (\x * \W + \Gap ,\TSH + \x*\H + \Gap) -- (\x * \W + \W - \Gap ,\TSH +\x*\H + \H -\Gap);
}

\draw[decorate,decoration={brace,amplitude=3pt,raise=2pt},yshift=0cm] (0,\TSH + 0.5*\Gap) -- (0,\TSH + 0.5*\Gap + 4*\H) node [midway,xshift=-0.75cm]{\footnotesize Dim. 3};


\node[draw,circle,fill=black,inner sep = 1.2] (dim1r) at (2.5*\W ,\FSH - 0.5*\H) {};
\node[draw,circle,fill=black,inner sep = 1.2] (dim1l) at (1.5*\W ,\FSH - 0.5*\H) {};
\node[draw,circle,fill=black,inner sep = 1.2] (dim2r) at (2.5*\W ,\SSH -2*\H + 0.5*\H) {};
\node[draw,circle,fill=black,inner sep = 1.2] (dim3l) at (1.5*\W ,\TSH +1*\H + 0.5*\H) {};

\draw[thick] (dim1r) -- (dim2r);
\draw[thick] (dim1l) -- (dim3l);
\draw[thick] ($(dim1r) + (0,-1.5*\H)$) -- ++(-\W,0) node[midway, yshift = 0.12*\W cm] {\footnotesize Job};

\end{tikzpicture}
\caption{
The basic idea for the speeds of the machines in the \lrs{3} reduction.
In the picture, we do not depict the individual machines but assume that a machine corresponds to a horizontal position in the block and that there is some minimum speed difference in each dimension between succeeding machines.
If we draw a vertical line starting from this point, the speeds of the machines correspond to the heights of the three interception points.
Now, we could the sizes of a job such that its processing time due to each dimension becomes reasonably small starting from a certain speed value, and such a choice is depicted in the picture by the connected dots labeled \enquote{job}. 
For this job, it can be guaranteed that it cannot be scheduled on machines from block 0 or 3 due to the third and second dimension, respectively, and only on one machine from each of the blocks 1 and 2, due to the first and third or first and second dimension, respectively.
} 
\label{fig:rank3_speeds_idea}
\end{figure}

Let $\delta \in (0,1]$, $K\geq 1$, $\eps = \delta / 2$, $N = K/\eps$, and $C = 16n $.
For each $\ell\in[k+1]$, $j\in[n]$, and $t\in[4]$, let $\iota(\ell, j,t)$ be the index of the pair $(j,t)$ in the sequence $\varphi_{\ell}$ (with indexing being started at 0).
Furthermore, let $\iota^*_\ell = \iota(\ell,j^{>}_\ell,t^{>}_\ell)$ for each $\ell\in[k]$.
\begin{observation}\label{obs:rank3_indices}
We have:
\begin{itemize}
\item $\iota(0,j,t) = 4j + t$ for each $j\in[n]$ and $t\in[4]$.
\item $\iota(k,\kappa^{-1}(i,s)) = 3i + s$ for each $i\in[2m]$ and $s\in[3]$.
\item $\iota(\ell,j,t) = \iota(\ell + 1,j,t)$ for each $\ell\in[k]$ and $(j,t)\in[n]\times[4]\setminus\set{(j^{>}_\ell,t^{>}_\ell),(j^{<}_\ell,t^{<}_\ell)}$.
\item $\iota(\ell,j^{>}_\ell,t^{>}_\ell) = \iota^*_\ell$ and $\iota(\ell + 1,j^{>}_\ell,t^{>}_\ell) = \iota^*_\ell + 1$ for each $\ell\in[k]$.
\item $\iota(\ell,j^{<}_\ell,t^{<}_\ell) = \iota^*_\ell + 1$ and $\iota(\ell+1,j^{<}_\ell,t^{<}_\ell) = \iota^*_\ell$ for each $\ell\in[k]$.
\end{itemize}
\end{observation}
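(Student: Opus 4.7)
The plan is to verify each of the five bullet points by directly unwinding the definitions of $\varphi_\ell$, $\psi_\ell$, $\kappa$, and the bubble sort procedure. None of the five items requires any substantive argument beyond careful bookkeeping, so the ``proof'' is really just a check that the indexing conventions match.

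First I would handle item one. By definition $\varphi_0$ lists the pairs $(j,t)\in[n]\times[4]$ in increasing lexicographical order starting at index $0$. The lexicographical rank of $(j,t)$ in this set, where $t$ is the minor component, is precisely $4j+t$, so $\iota(0,j,t)=4j+t$ follows directly.

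Next I would handle item two. Bubble sort is run on $\psi_0$ with the goal of reaching increasing lexicographical order, and by definition $k$ is the total number of transpositions it performs. Hence $\psi_k$ is exactly the sequence $((0,0),(0,1),(0,2),(1,0),\dots,(2m-1,2))$ in which the rank of $(i,s)$ is $3i+s$. Since $\varphi_{\ell+1}=\kappa^{-1}(\psi_{\ell+1})$ applied entrywise and $\kappa$ is a bijection, the rank of $\kappa^{-1}(i,s)$ in $\varphi_k$ equals the rank of $(i,s)$ in $\psi_k$, giving $\iota(k,\kappa^{-1}(i,s))=3i+s$.

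For items three, four, and five I would invoke the basic property of bubble sort that each step swaps two adjacent entries of the current sequence. Because $\varphi_{\ell+1}$ is obtained from $\varphi_\ell$ by transposing two adjacent positions, every pair except the two involved keeps its index; this immediately gives item three. For the two swapped entries, the pair moving from the smaller position to the larger one is, by definition, $(j^{>}_\ell,t^{>}_\ell)$ and the one moving from the larger to the smaller is $(j^{<}_\ell,t^{<}_\ell)$. Setting $\iota^*_\ell = \iota(\ell,j^{>}_\ell,t^{>}_\ell)$ as in the statement and using that the swapped positions are adjacent, the pair $(j^{<}_\ell,t^{<}_\ell)$ must lie at position $\iota^*_\ell+1$ in $\varphi_\ell$ and the two indices are exchanged in $\varphi_{\ell+1}$. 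This yields items four and five simultaneously.

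The hard part, if any, is purely notational: one has to be careful that ``incremented'' and ``decremented'' in the definitions of $(j^{>}_\ell,t^{>}_\ell)$ and $(j^{<}_\ell,t^{<}_\ell)$ refer to the position index in the sequence rather than to the pair values themselves, and that bubble sort in this formulation swaps exactly one adjacent pair per step (so ``the'' transposed pair is well-defined). Once this is pinned down, the observation is immediate from the definitions and no further argument is required.
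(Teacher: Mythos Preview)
Your proposal is correct. The paper states this as an observation without proof, treating it as immediate from the definitions of $\varphi_\ell$, $\psi_\ell$, $\iota$, and the bubble sort process; your careful unwinding of those definitions is exactly the verification the paper leaves implicit, and there is nothing to compare.
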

The speed vectors of the machines are specified in \cref{table:rank3_speed_vectors} and the size vectors of the jobs  in \cref{table:rank3_size_vectors}.
\begin{table}
\centering
\caption{The speed vectors of the machines in the \lrs{3} reduction.}
\begin{tabular}{llll}
\toprule
Machine & Dim. $1$ & Dim. $2$ & Dim. $3$ \\ \midrule

$\truthmach(j,q)$ & 
$(\frac{1}{N})^{-2(4j + 2q)}$ & 
$(\frac{1}{N})^{C + 2j+q}$ & 
$(\frac{1}{N})^{- C + 2j + q}$  \\

$\sortmach(\ell,q,j,t)$ & 
$(\frac{1}{N})^{-2\iota(\ell + \ceil{q/2},j,t)}$ &
$(\frac{1}{N})^{-(3\ell+ q) \cdot C + 2\iota(\ell + \ceil{q/2},j,t)}$ &
$(\frac{1}{N})^{(3\ell+q)\cdot C + 2\iota(\ell + \ceil{q/2},j,t)}$  \\


%

$\ampmach(\ell,q)$ & 
$(\frac{1}{N})^{-2\iota^*_\ell +1}$ & 
$(\frac{1}{N})^{-(3\ell +q)\cdot C + 2\iota^*_\ell - 1}$ & 
$(\frac{1}{N})^{(3\ell+q)\cdot C + 2\iota^*_\ell - 1}$  \\

$\clausemach(i,s)$ & 
$(\frac{1}{N})^{-2(3i + s)}$ & 
$(\frac{1}{N})^{-3k\cdot C + 2(3i + s)}$ & 
$(\frac{1}{N})^{3k\cdot C + i}$  \\
\bottomrule
\end{tabular}
\label{table:rank3_speed_vectors}
\end{table}
\begin{table}
\centering
\caption{The sizes of the jobs in the \lrs{3} reduction. We set $\phi(i,0) = 1$, $\phi(i,2) = 2$, and $\phi(i,1) = 3-k$ if $C_i$ is a $k$-in-3-clause.
The row regarding $\sortjob(\ell,q,j,t)$ considers all tuples $(\ell,q,j,t)$ except the ones considered directly thereafter.}
\begin{tabular}{llll}
\toprule
Job & Dim. $1$ & Dim. $2$ & Dim. $3$ \\ \midrule

$\truthjob(j)$ & 
$2\cdot N^{-2(4j + 2)}$ & 
$2\cdot N^{C + 2j}$ & 
0  \\

$\variablejob(j,t)$ & 
$\eps \cdot N^{-2(4j + t)}$ & 
$N^{2(4j + t)}$ & 
$N^{-C + 2j + \floor{t/2}}$  \\

$\sortjob(\ell,q,j,t)$ & 
$\eps \cdot N^{-2\iota(\ell + 1,j,t)}$ & 
$ N^{-(3\ell + q + 1)\cdot C + 2\iota(\ell + 1,j,t)}$ & 
$N^{(3\ell + q)\cdot C + 2\iota(\ell + 1,j,t)}$  \\

$\sortjob(\ell,0,j^{>}_\ell,t^{>}_\ell)$ & 
$2\cdot N^{-2(\iota^*_\ell + 1)}$ & 
$\eps\cdot N^{-(3\ell + 1)\cdot C + 2(\iota^*_\ell + 1)}$ & 
$2\cdot N^{(3\ell)\cdot C + 2\iota^*_\ell }$  \\

$\sortjob(\ell,0,j^{<}_\ell,t^{<}_\ell)$ & 
$N^{-2(\iota^*_\ell + 1)}$ & 
$N^{-(3\ell +1)\cdot C + 2\iota^*_\ell}$ & 
$\eps\cdot N^{(3\ell)\cdot C + 2(\iota^*_\ell + 1)}$  \\

$\sortjob(\ell,1,j^{>}_\ell,t^{>}_\ell)$ & 
$\eps \cdot N^{-2(\iota^*_\ell + 1)}$ & 
$2\cdot N^{-(3\ell + 2)\cdot C + 2(\iota^*_\ell + 1)}$ & 
$2\cdot N^{(3\ell + 1)\cdot C + 2(\iota^*_\ell + 1)}$  \\

$\ampbridgejob(\ell,q)$ & 
$\eps \cdot N^{-2\iota^*_\ell +1}$ & 
$N^{-(3\ell + q + 1)\cdot C + 2\iota^*_\ell - 1}$ & 
$N^{(3\ell+q)\cdot C + 2\iota^*_\ell - 1}$  \\

$\ampshiftjob(\ell,0)$ & 
$N^{-2\iota^*_\ell}$ & 
$\eps \cdot N^{-(3\ell)\cdot C + 2\iota^*_\ell - 1}$ & 
$N^{(3\ell)\cdot C + 2\iota^*_\ell - 1}$  \\

$\ampshiftjob(\ell,1)$ & 
$N^{-2\iota^*_\ell}$ & 
$\eps \cdot N^{-(3\ell +2)\cdot C + 2\iota^*_\ell - 1}$ & 
$N^{(3\ell+2)\cdot C + 2\iota^*_\ell - 1}$  \\

$\ampshiftjob(\ell,2)$ & 
$N^{-2(\iota^*_\ell + 1)}$ & 
$\eps \cdot N^{-(3\ell +2)\cdot C + 2\iota^*_\ell}$ & 
$N^{(3\ell+2)\cdot C + 2\iota^*_\ell}$  \\

$\sortmach(\ell,q,j,t)$ &
$N^{-2\iota(\ell + \ceil{q/2},j,t)}$ &
$\eps \cdot N^{-(3\ell +q)\cdot C + 2\iota(\ell + \ceil{q/2},j,t)}$ &
$\eps \cdot N^{(3\ell+q)\cdot C + 2\iota(\ell + \ceil{q/2},j,t)}$  \\

$\ampmach(\ell,q)$ &
$N^{-2\iota^*_\ell +1}$ & 
$\eps \cdot N^{-(3\ell +q)\cdot C + 2\iota^*_\ell - 1}$ & 
$\eps \cdot N^{(3\ell+q)\cdot C + 2\iota^*_\ell - 1}$  \\

$\clausejob(i,s)$ & 
$\eps \cdot N^{-2(3i + 2)}$ & 
$0$ & 
$\phi(i,s) \cdot N^{3k\cdot C + i}$  \\

\bottomrule
\end{tabular}
\label{table:rank3_size_vectors}
\end{table}
\begin{lemma}\label{lem:lrs3_speeds_and_sizes}
Let $I$ be the described restricted assignment instance and $I'$ the \lrs{3} instance specified by the size and speed vectors.
Let furthermore, $p_j$ be the size of a job $j$ in instance $I$, $\emachs(j)$ the corresponding set of eligible machines, and $p'_{ij}$ the size of $j$ on machine $i$ in instance $I'$.
We have $p_{j} \leq p_{ij} \leq p_j + \delta $ if $i\in\emachs(j)$ and $p_{ij} > K $ otherwise.
\end{lemma}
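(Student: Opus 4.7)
\textbf{Proof plan for \cref{lem:lrs3_speeds_and_sizes}.}

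The plan is a direct case analysis: for every job type, I would compute the three coordinatewise products $s_k(j) \cdot v_k(i)$ on each of $j$'s two (occasionally three) eligible machines and on representative ineligible machines. Before starting, I would point out the common scaling structure of the tables. Each machine's speed has a first coordinate that depends only on the sorted position $\iota$ of the occurrence $(j,t)$ it carries, while its second and third coordinates additionally carry a term $\mp(3\ell+q)\cdot C$ encoding the block index of the machine, with opposite signs in the two coordinates. The size vector of every job mirrors this: the $\iota$-exponent in each size coordinate is designed to cancel exactly the corresponding $\iota$-exponent of one of its eligible machines' speeds, and the $C$-terms in coordinates~2 and~3 of the size match the $C$-terms of the two eligible machines (one via the $+C$ coordinate and one via the $-C$ coordinate). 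The choice $C = 16n$ dominates all $\iota$-shifts inside a block (bounded by $2 \cdot 4n = 8n$); this is the key quantitative fact underlying the reduction.

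For an eligible pair $(i,j)$ I would verify that exactly one coordinate gives a product of $1$ (or $2$, $\phi(i,s)$, etc., whichever matches $p_j$), and that each other coordinate's exponent is strictly negative, so its product is at most $\eps$. Summing the three contributions then gives $p_j \leq p_{ij} \leq p_j + 2\eps = p_j + \delta$. The factor of $\eps$ appearing in the designated ``main'' coordinate of each size (the one whose product is supposed to cancel) is what pins the overshoot to at most $\delta$ rather than, say, $O(1)$. I would do this check for the generic cases first: $\truthjob(j)$ on $\truthmach(j,0)$ and $\truthmach(j,1)$; $\variablejob(j,t)$ on $\truthmach(j,\lfloor t/2 \rfloor)$ and $\sortmach(0,0,j,t)$; the ``standard'' $\sortjob(\ell,q,j,t)$ on $\sortmach(\ell,q,j,t)$ and $\sortmach(\ell + \lfloor q/2 \rfloor, (q+1)\bmod 3, j, t)$; and $\clausejob(i,s)$ on the three $\clausemach(i,s')$.

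For an ineligible pair I would exhibit a coordinate whose product has exponent at least $+1$, hence value at least $N \geq K/\eps = 2K/\delta \geq 2K > K$. There are two generic sources of such a blow-up. If the machine lives in the wrong block, then one of dimensions~2 or~3 contributes a factor of at least $N^{C - O(n)} = N^{\Omega(n)}$ because the $C$-scale term does not cancel; the bound $C = 16n$ ensures that the intra-block $\iota$-shift can never absorb this. If the block is correct but the $\iota$-coordinate inside the block is wrong, then either dimension~1 (when the target $\iota$ of the machine exceeds that of the job) or one of dimensions~2,3 (when it is smaller) contributes at least $N^{2}$ up to a factor of $\eps$, which still exceeds $K$ since $\eps N^2 = K N \geq 2K$.

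The main obstacle will be the exceptional jobs inside the sorting gadget, namely $\sortjob(\ell,0,j^>_\ell,t^>_\ell)$, $\sortjob(\ell,0,j^<_\ell,t^<_\ell)$, $\sortjob(\ell,1,j^>_\ell,t^>_\ell)$, the amplifier bridge jobs $\ampbridgejob(\ell,q)$, and the amplifier shift jobs $\ampshiftjob(\ell,q)$, whose size vectors carry specially tuned $\eps$-factors, occasional factor~$2$ prefactors, and shifted $\iota$-exponents (e.g.\ $\iota^*_\ell \pm 1$). Each of these needs its own individual verification both on its two or three eligible machines (to confirm the sum lies in $[p_j, p_j+\delta]$) and on every non-eligible neighbor within the triple of blocks $\sblock_{3\ell}, \sblock_{3\ell+1}, \sblock_{3\ell+2}$---in particular on the machine $\sortmach(\ell,q,j^<_\ell,t^<_\ell)$ or $\sortmach(\ell,q,j^>_\ell,t^>_\ell)$ sharing the same $\iota^*_\ell$ or $\iota^*_\ell+1$, and on the amplifier machines $\ampmach(\ell,q)$ whose speeds also use the exponent $2\iota^*_\ell - 1$. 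For these neighbors the exponent mismatches are small (often just $\pm 1$ in some coordinate), so I would need to check carefully that at least one coordinate still contributes at least $N$; this is ensured by the judicious placement of the $\eps$-factors, which appear in the size coordinate whose exponent exactly cancels and are absent elsewhere, so that a unit mismatch in any other coordinate produces a product of at least $N \geq 2K$. Once these exceptional jobs are handled, the lemma follows by collecting all the individual verifications.
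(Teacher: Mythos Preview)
Your plan is correct and mirrors the paper's own proof: a direct case analysis over all job types, first arguing via the $C$-scale mismatch that every job is large outside its one or two designated blocks, and then within those blocks computing the three coordinate products explicitly on each eligible and each neighboring ineligible machine (including the amplifier machines and the exceptional sorting jobs around $(j^>_\ell,t^>_\ell)$ and $(j^<_\ell,t^<_\ell)$). One small caution: your heuristic that ``the $\eps$-factor always sits on the canceling coordinate'' is not literally true for every job (different job types carry $\eps$ in different dimensions), and for some amplifier-machine neighbors the mismatch is only $\pm 1$, giving a term of size $\eps N = K$ rather than $\eps N^2$; the strict inequality $p_{ij}>K$ then comes from the remaining positive addends, exactly as in the paper's computations, so this will come out in the wash once you actually carry out the per-case arithmetic you propose.
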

The proof of this lemma is unfortunately rather tedious since essentially all pairs of job and machine types including the special cases have to be considered. 
However, since $\delta$ and $K$ can be chosen arbitrarily, \cref{lem:lrs3_speeds_and_sizes} and \cref{lem:rank3_RA_instance} directly yield the result of the current section:
\begin{theorem}
There is no better than $\frac{3}{2}$-approximation for \lrs{3}, unless P=NP.
\end{theorem}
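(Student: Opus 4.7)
The plan is to reduce from \tailoredsat by chaining \cref{lem:rank3_RA_instance} and \cref{lem:lrs3_speeds_and_sizes} as black boxes in a gap-preservation argument. Suppose for contradiction that a polynomial-time $\alpha$-approximation for \lrs{3} exists for some fixed $\alpha < 3/2$; I would show that it decides \tailoredsat, which is known to be NP-hard.

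Given a \tailoredsat instance $I$, I would first build the restricted assignment instance $I'$ guaranteed by \cref{lem:rank3_RA_instance}, and then invoke \cref{lem:lrs3_speeds_and_sizes} to produce an \lrs{3} instance $I''$ with parameters $K > 3$ and $\delta > 0$ chosen small enough that $\alpha(2 + 2\delta) < 3$. The constant $2$ in the inflation term is justified by the fact that the job sizes in $I'$ are positive integers and in any $2$-schedule every machine has load exactly $2$, so at most two jobs lie on each machine; consequently the per-machine load gain when passing from $I'$ to $I''$ is at most $2\delta$ whenever the assignment is feasible for $I'$.

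Next, I would run the assumed $\alpha$-approximation on $I''$ and check whether its output makespan falls below $3$. In the YES case, \cref{lem:rank3_RA_instance} yields a $2$-schedule for $I'$ which, read as a schedule for $I''$, has makespan at most $2 + 2\delta$ by \cref{lem:lrs3_speeds_and_sizes}, so the approximation returns at most $\alpha(2+2\delta) < 3$. In the NO case, no $2$-schedule for $I'$ exists; since all sizes in $I'$ are integers and sum to $2\lvert\machs\rvert$, every feasible schedule for $I'$ has makespan at least $3$. For $I''$, any schedule of makespan strictly less than $K$ must assign every job to an eligible machine (otherwise that machine's load exceeds $K$), and then the lower bound $p_j \leq p'_{ij}$ forces the induced $I'$-schedule to have makespan at most the \lrs{3} makespan; hence the optimum of $I''$ is at least $3$ and the approximation returns at least $3$.

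The main conceptual work is already absorbed into \cref{lem:lrs3_speeds_and_sizes}, so the obstacle in the present step is essentially bookkeeping: picking $\delta$ and $K$ correctly as functions of $\alpha$ and the number $2$ of jobs per machine, and verifying that the integrality gap between $2$ and $3$ in the underlying restricted assignment instance survives the $\delta$-perturbation. Since the threshold $3$ then separates YES from NO in polynomial time, the assumed $\alpha$-approximation would decide \tailoredsat, contradicting P $\neq$ NP.
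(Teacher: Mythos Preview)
Your proposal is correct and follows essentially the same approach as the paper: the paper's own proof is the single sentence ``since $\delta$ and $K$ can be chosen arbitrarily, \cref{lem:lrs3_speeds_and_sizes} and \cref{lem:rank3_RA_instance} directly yield the result,'' and you have simply spelled out the standard gap-preservation bookkeeping that makes this precise. Your observation that in any $2$-schedule each machine carries at most two jobs (all sizes being $1$ or $2$) is the right way to bound the per-machine inflation by $2\delta$, and your handling of the NO case via the lower bound $p_j \le p'_{ij}$ together with $K>3$ is exactly what is needed.
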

\begin{proof}[Proof of \cref{lem:lrs3_speeds_and_sizes}.]
First note that since the second speed dimension is increasing from block to block and the third is decreasing, the jobs have sizes of at least $N>K$ on most blocks.
Moreover, the blocks of machines that allow sizes smaller than $N$ for any jobs correspond to the blocks containing the eligible machines in $I$ for each job.
In particular, these blocks are $\tblock$ for the truth assignment jobs,
$\tblock$ and $\sblock_0$ for the variable jobs,
$\sblock_{3\ell + q}$ and $\sblock_{3(\ell + \floor{q/2}) + ((q+1) \bmod 3)}$ for sorting jobs $\sortjob(\ell,q,j,t)$ with $(\ell,q) \neq (k-1,2)$,
$\sblock_{3k - 1} $ and $\cblock$ for sorting jobs $\sortjob(k-1,2,j,t)$, 
$\sblock_{3\ell + q}$ and $\sblock_{3\ell + q + 1}$ for amplifier bridge jobs $\ampbridgejob(\ell,q)$,
$\sblock_{3\ell}$ or $\sblock_{3\ell + 2}$ for amplifier shift jobs $\ampshiftjob(\ell,q)$ with $q=0$ or $q\in\set{1,2}$, respectively,
$\cblock$ for the clause jobs,
and the block containing its respective machine for any private load. 
 
In the remainder of the proof, we consider the job types one after another and their processing times on machines belonging to their respective blocks.

\proofsubparagraph*{Truth assignment jobs.} The size of $\truthjob(j)$ on $\truthmach(j,0)$ and $\truthmach(j,1)$ is $2N^{-4} + 2$ and $2 + 2N^{-1}$, respectively.
Furthermore, $\truthjob(j)$ has size at least $N$ on $\truthmach(j',q)$ due to the second dimension if $j'<j$ and due to the first dimension if $j'>j$. 

\proofsubparagraph*{Variable jobs.} The size of $\variablejob(j,t)$ on $\truthmach(j,q)$ is $\eps \cdot N^{4q - 2t} + N^{- C + 6j + 2t -q} + N^{\floor{t/2} - q}$.
Hence, it is at least $1$ and at most $\eps + N^{-1} + 1 \leq 1 + \delta$ if $q=0$ and $t\in\set{0,1}$, 
at least $\eps \cdot N^{-6} + N^{- C } + N \geq K$ if $q=0$ and $t\in\set{2,3}$,
at least $\eps \cdot N^{2} + N^{- C} + N^{-1} \geq K$ if $q=1$ and $t\in\set{0,1}$,
and at least $1$ and at most $\eps + N^{- 1} + 1 \leq 1 + \delta$ if $q=1$ and $t\in\set{2,3}$.
Furthermore, $\variablejob(j,t)$ has size at least $N$ on $\truthmach(j',q)$ due to the third dimension if $j'<j$ and due to the first dimension if $j'>j$. 
Next we consider the size of $\variablejob(j,t)$ on sorting machines from $\sblock_{0}$. 
Since $\iota(0, j,t) = 4j +t$, it is $\eps + 1 + N^{-C - 2j - t + \floor{t/2}}$ on $\sortmach(0,0,j,t)$ and at least $N$ on $\sortmach(0,0,j',t')$ due to the second dimension if $j'<j$ and due to the first dimension if $j'>j$.
Lastly, we have to consider the amplifier machine $\ampmach(0,0)$ of $\sblock_{0}$. 
The size of $\variablejob(j,t)$ on this machine is $\eps \cdot N^{2\iota^*_0 - 1 - 2(4j + t)} + N^{2(4j + t) - 2\iota^*_0 + 1} + N^{-C + 2j + \floor{t/2} - 2\iota^*_0 + 1}  >\eps N$ (either due to the first or the second addend).

\proofsubparagraph*{Sorting jobs.}
We consider the sizes of $\sortjob(\ell,q,j,t)$. 
First, assume that $(\ell,q) \neq (k-1,2)$ and $(\ell,q,j,t)\notin\set{(\ell,0,j^{>}_\ell,t^{>}_\ell),(\ell,0,j^{<}_\ell,t^{<}_\ell),(\ell,1,j^{>}_\ell,t^{>}_\ell)}$.
In this case, $\sortjob(\ell,q,j,t)$ has size
$\eps \cdot N^{2\iota(\ell + \ceil{q/2},j',t')-2\iota(\ell,j,t)}+ N^{-C + 2\iota(\ell,j,t) - 2\iota(\ell +  \ceil{q/2},j',t')} + N^{2\iota(\ell,j,t) - 2\iota(\ell + \ceil{q/2},j',t')}$ on a machine $\sortmach(\ell,q,j',t')$ from the block $\sblock_{3\ell + q}$ (taking \cref{obs:rank3_indices} into account).
Note that $\iota(\ell + 1,j',t') = \iota(\ell ,j',t')$ if $(j',t')\notin\set{(j^{>}_\ell,t^{>}_\ell),(j^{<}_\ell,t^{<}_\ell)}$.
Hence, its size is at least $\eps \cdot N^2 $ if $(j,t) \neq (j',t')$ and exactly $\eps + N^{-C } + 1$ otherwise.
Moreover, on the amplifier machine $\ampmach(\ell,q)$ from block $\sblock_{3\ell + q}$ it has size 
$\eps \cdot N^{2\iota^*_\ell - 1 -2\iota(\ell,j,t)} + N^{- C + 2\iota(\ell,j,t) - 2\iota^*_\ell + 1} + N^{2\iota(\ell,j,t) -2\iota^*_\ell +1 } > \eps N$ (either due to the first or the last addend).
Similarly, the size of $\sortjob(\ell,q,j,t)$ on a machine $\sortmach(\ell + \floor{q/2},(q + 1) \bmod 3,j',t')$ from the second block, i.e., $\sblock_{3(\ell + \floor{q/2}) + ((q+1) \bmod 3)}$, is given by 
$\eps \cdot N^{2\iota(\ell + 1,j',t')-2\iota(\ell,j,t)} + 
N^{2\iota(\ell,j,t) - 2\iota(\ell + 1,j',t')} + 
N^{- C + 2\iota(\ell,j,t) - 2\iota(\ell + 1,j',t')}$.
Hence, its size is at least $\eps \cdot N^2 $ if $(j,t) \neq (j',t')$ and $\eps + 1 + N^{-C }$ otherwise (again taking \cref{obs:rank3_indices} into account).
Regarding the amplifier machine $\ampmach(\ell+ \floor{q/2},(q+1) \bmod 3 )$, the size of $\sortjob(\ell,q,j,t)$ is $\eps \cdot N^{2\iota^*_{\ell + \floor{q/2}} - 1-2\iota(\ell,j,t)} + N^{2\iota(\ell,j,t) - 2\iota^*_{\ell + \floor{q/2}} + 1} +N^{-C + 2\iota(\ell,j,t) - 2\iota^*_{\ell + \floor{q/2}} + 1}$ which is again at least $\eps N$ (either due to the first or the second addend).

Next, we consider $\sortjob(k-1,2,j,t)$.
Regarding machines from the first block, i.e., $\sblock_{3k + 1}$, we can use the same argument as before.
On machine $\clausemach(i,s)$ from block $\cblock$ the job has size 
$\eps \cdot N^{2(3i + s) -2\iota(k,j,t)} + N^{2\iota(k,j,t) - 2(3i + s)} + N^{-C + 2\iota(k,j,t) - 2(3i + s)}$.
This amounts to a size of at least $\eps N^2$ if $\iota(k,j,t) \neq 3i + s$ and a size of $\eps + 1 + N^{-C}$ otherwise.
Since $\iota(k,j,t) = 3i + s$ if and only if $\kappa(j,t) = (i,s)$ (see \cref{obs:rank3_indices}), the analysis of this case is therefore completed.

There are three remaining special cases regarding sorting jobs and we start with $\sortjob(\ell,0,j^{>}_\ell,t^{>}_\ell)$.
On a machine $\sortmach(\ell,0,j',t')$ of its first block $\sblock_{3\ell}$ it has a size of 
$2\cdot N^{2\iota(\ell,j',t') - 2(\iota^*_\ell + 1)} + \eps\cdot N^{-C + 2(\iota^*_\ell + 1) - 2\iota(\ell,j',t')} + 2\cdot N^{ 2\iota^*_\ell - 2\iota(\ell,j',t')}$.
Hence, its size is at least $2N^2$ if $\iota(\ell,j',t') \notin \set{\iota^*_\ell, \iota^*_\ell + 1}$.
If $\iota(\ell,j',t') = \iota^*_\ell$, we have a size of $2\cdot N^{-2} + \eps\cdot N^{-C + 2} + 2$,
and if $\iota(\ell,j',t') = \iota^*_\ell + 1$, we have a size of $2 + \eps\cdot N^{-C} + 2\cdot N^{ -2}$.
This is the correct behavior since $\iota(\ell,j',t') = \iota^*_\ell$ implies $(j',t') = (j^{>}_\ell,t^{>}_\ell)$ and $\iota(\ell,j',t') = \iota^*_\ell + 1$ implies $(j',t') = (j^{<}_\ell,t^{<}_\ell)$ (see \cref{obs:rank3_indices}).
On a machine $\sortmach(\ell,1,j',t')$ of its second block $\sblock_{3\ell + 1}$, on the other hand, $\sortjob(\ell,0,j^{>}_\ell,t^{>}_\ell)$ has size 
$2\cdot N^{2\iota(\ell+1,j',t')-2(\iota^*_\ell + 1)} + \eps\cdot N^{2(\iota^*_\ell + 1) - 2\iota(\ell+1,j',t')} + 2\cdot N^{- C + 2\iota^*_\ell -2\iota(\ell+1,j',t')}$, i.e., at least $\eps N^2$ if $\iota(\ell+1,j',t') \neq \iota^*_\ell + 1$ and $2 + \eps + 2 \cdot N^{- C}$ otherwise.
Since $\iota(\ell+1,j',t') = \iota^*_\ell + 1$ implies $(j',t') = (j^{>}_\ell,t^{>}_\ell)$, this is again the correct behavior.
Finally, we consider the sizes of $\sortjob(\ell,0,j^{>}_\ell,t^{>}_\ell)$ on the amplifier machines $\ampmach(\ell,0)$ and $\ampmach(\ell,1)$ of the two blocks, which are 
$2\cdot N^{-3} + \eps\cdot N^{-C + 1} + 2\cdot N$ and
$2\cdot N^{-3} + \eps\cdot N^{3} + 2\cdot N^{-C + 1}$, respectively.

The second special cases we consider deals with $\sortjob(\ell,0,j^{<}_\ell,t^{<}_\ell)$.
On $\sortmach(\ell,0,j',t')$ it has size 
$N^{2\iota(\ell,j',t')-2(\iota^*_\ell + 1)} + N^{-C + 2\iota^*_\ell-2\iota(\ell,j',t')} + \eps\cdot N^{2(\iota^*_\ell + 1)-2\iota(\ell,j',t')}$, i.e., at least $\eps N^2$ if $\iota(\ell,j',t') \neq \iota^*_\ell + 1 = \iota(\ell,j^{<}_\ell,t^{<}_\ell)$ and $1 + N^{-C - 2} + \eps$ otherwise.
On a second block machine $\sortmach(\ell,1,j',t')$, on the other hand, it has size
$N^{2\iota(\ell+1,j',t')-2(\iota^*_\ell + 1)} + N^{2\iota^*_\ell-2 \iota(\ell+1,j',t')} + \eps\cdot N^{-C + 2(\iota^*_\ell + 1)-2\iota(\ell+1,j',t')}$.
Hence, its size is at least $N^2$ if $\iota(\ell+1,j',t') \notin \set{\iota^*_\ell, \iota^*_\ell + 1}$.
If $\iota(\ell+1,j',t') = \iota^*_\ell$, we have a size of $N^{-2} + 1 + \eps\cdot N^{-C + 2}$,
and if $\iota(\ell + 1,j',t') = \iota^*_\ell + 1$, we have a size of $1 + N^{-2} + \eps\cdot N^{-C}$.
Again, note that $\iota^*_\ell = \iota(\ell + 1,j^{<}_\ell,t^{<}_\ell)$ and $\iota^*_\ell + 1 = \iota(\ell + 1,j^{>}_\ell,t^{>}_\ell)$.
The last step is again to consider the sizes of $\sortjob(\ell,0,j^{<}_\ell,t^{<}_\ell)$ on the amplifier machines $\ampmach(\ell,0)$ and $\ampmach(\ell,1)$ of the two blocks, which are 
$ N^{-3} + N^{- C + 1} + \eps\cdot N^3$ and
$ N^{-3} + N + \eps\cdot N^{- C + 3} $, respectively.

Finally, we consider the job $\sortjob(\ell,1,j^{>}_\ell,t^{>}_\ell)$.
However, it is easy to see that we can essentially use the same argumentation as in the very first case considered in this paragraph.

\proofsubparagraph*{Amplifier bridge jobs.}

The blocks of the amplifier bridge job $\ampbridgejob(\ell,q)$ are $\sblock_{3\ell + q}$ and $\sblock_{3\ell + q +1}$.
On a sorting machine $\sortmach(\ell,q,j',t')$ from the first block its size is equal to 
$\eps \cdot N^{2\iota(\ell + \ceil{\frac{q}{2}},j',t') -2\iota^*_\ell +1} + N^{- C + 2\iota^*_\ell - 1 - 2\iota(\ell + \ceil{\frac{q}{2}},j',t') } +N^{ 2\iota^*_\ell - 1 - 2\iota(\ell + \ceil{\frac{q}{2}},j',t') }> \eps N$ (due to first or last addend).
On an amplifier machine $\ampmach(\ell,q)$ from the same block it has a size of $\eps + N^{- C } + 1$.
Moreover, its size on a sorting machine $\sortmach(\ell,q + 1,j',t')$ from the second block is
$\eps \cdot N^{2\iota(\ell+1,j',t')-2\iota^*_\ell +1} + N^{2\iota^*_\ell - 1 - 2\iota(\ell+1,j',t')} +N^{-C + 2\iota^*_\ell - 1 - 2\iota(\ell+1,j',t')} > \eps N$ (due to first or second addend),
and $\eps  + 1 +N^{- C}$ on an amplifier machine $\ampmach(\ell,q + 1)$ from this block.

\proofsubparagraph*{Amplifier shift jobs.}

We consider the jobs $\ampshiftjob(\ell,0)$, $\ampshiftjob(\ell,1)$, and $\ampshiftjob(\ell,2)$ one after another.
The first $\ampshiftjob(\ell,0)$ is associated with $\sblock_{3\ell}$ and on a sorting machine $\sortmach(\ell,0,j',t')$ from this block it has size
$N^{2\iota(\ell,j',t')-2\iota^*_\ell} + \eps \cdot N^{2\iota^*_\ell - 1 - 2\iota(\ell,j',t')} +N^{2\iota^*_\ell - 1 - 2\iota(\ell,j',t')}$, i.e., at least size $N$ if $\iota(\ell,j',t') \neq \iota^*_\ell = \iota(\ell,j^{>}_\ell,t^{>}_\ell)$ and 
$1 + \eps \cdot N^{- 1} +N^{- 1}$ otherwise.
Furthermore, it has size $N^{-1} + \eps  + 1$ on the amplifier machine $\ampmach(\ell,0)$ from this block.

For the second job $\ampshiftjob(\ell,1)$, we have to consider machines from $\sblock_{3\ell + 2}$.
Its size on a sorting machine $\sortmach(\ell,2,j',t')$ from this block is
$N^{2\iota(\ell+1,j',t')-2\iota^*_\ell} + \eps \cdot N^{2\iota^*_\ell - 1 -2\iota(\ell+1,j',t')} + N^{2\iota^*_\ell - 1 -2\iota(\ell+1,j',t')}$, i.e., at least $N$ if $\iota(\ell+1,j',t') \neq \iota^*_\ell = \iota(\ell+1,j^{<}_\ell,t^{<}_\ell)$ and $1 + \eps \cdot N^{- 1 } + N^{- 1}$ otherwise.
Moreover, on the amplifier machine $\ampmach(\ell,2)$ of this block it has size $N^{-1} + \eps + 1$.

Lastly, the job $\ampshiftjob(\ell,2)$ is associated with $\sblock_{3\ell + 2}$ as well.
Its size on a sorting machine $\sortmach(\ell,2,j',t')$ from this block is 
$N^{2\iota(\ell+1,j',t') -2(\iota^*_\ell + 1)} + \eps \cdot N^{2\iota^*_\ell - 2\iota(\ell+1,j',t')} +N^{2\iota^*_\ell - 2\iota(\ell+1,j',t')}$. 
Hence its size is at least $N^2$ if $\iota(\ell+1,j',t')\notin \set{\iota^*_\ell, \iota^*_\ell + 1}$.
If $\iota(\ell+1,j',t') = \iota^*_\ell = \iota(\ell+1,j^{<}_\ell,t^{<}_\ell)$, the size is $N^{-2} + \eps + 1$, and if $\iota(\ell+1,j',t') = \iota^*_\ell + 1 = \iota(\ell+1,j^{>}_\ell,t^{>}_\ell)$, the size is $1 + \eps \cdot N^{-2} +N^{-2}$.
Lastly, we consider the the amplifier machine $\ampmach(\ell,2)$ of this block.
On this machine $\ampshiftjob(\ell,2)$ has a size of $N^{-3} + \eps \cdot N +N > N$.

\proofsubparagraph*{Private loads on sorting machines.}

We consider the private load of machine $\sortmach(\ell,q,j,t)$ from block $\sblock_{3\ell + q}$.
On a sorting machine $\sortmach(\ell,q,j',t')$ from the same block it has a size of 
$N^{2\iota(\ell+ \ceil{\frac{q}{2}},j',t') - 2\iota(\ell+ \ceil{\frac{q}{2}},j,t)} + \eps \cdot N^{2\iota(\ell+ \ceil{\frac{q}{2}},j,t) - 2\iota(\ell+ \ceil{\frac{q}{2}},j',t')} + \eps \cdot N^{2\iota(\ell+ \ceil{\frac{q}{2}},j,t) - 2\iota(\ell+ \ceil{\frac{q}{2}},j',t')}$, i.e., 
at least $\eps N^2$ if $(j,t) \neq (j',t')$ and $1 + 2\eps$ otherwise.
On the amplifier machine $\ampmach(\ell,q)$ of the respective block its size is 
$N^{2\iota^*_\ell  -1 -2\iota(\ell+ \ceil{\frac{q}{2}},j,t)} + \eps \cdot N^{2\iota(\ell+ \ceil{\frac{q}{2}},j,t) -2\iota^*_\ell +1} + \eps \cdot N^{2\iota(\ell+ \ceil{\frac{q}{2}},j,t) -2\iota^*_\ell +1} > \eps N$.
 
\proofsubparagraph*{Private loads on amplifier machines.}

We consider the private load of a machine $\ampmach(\ell,q)$ from block $\sblock_{3\ell + q}$.
On a sorting machine $\sortmach(\ell,q,j',t')$ from the same block it has a size of 
$N^{2\iota(\ell+ \ceil{\frac{q}{2}},j',t') -2\iota^*_\ell +1} + \eps \cdot N^{2\iota^*_\ell - 1- 2\iota(\ell+ \ceil{\frac{q}{2}},j',t')} + \eps \cdot N^{2\iota^*_\ell - 1- 2\iota(\ell+ \ceil{\frac{q}{2}},j',t')} > \eps N$. 
On the amplifier machine $\ampmach(\ell,q)$ of the block, on the other hand, its size is $1 + 2\eps$.

\proofsubparagraph*{Clause jobs.}

Lastly, we consider clause jobs $\clausejob(i,s)$ associated with block $\cblock$.
Its size on a clause machine $\clausemach(i',s')$ is 
$\eps \cdot N^{2(3i' + s') -2(3i + 2)} + \phi(i,s) \cdot N^{i - i'}$. 
Hence, the size is at least $\eps N$ if $i\neq i'$, and between $\eps \cdot N^{-2} + \phi(i,s)$ and $\eps + \phi(i,s)$ otherwise.
\end{proof}

\section{Conclusion}

We conclude this work with a brief discussion of possible future research directions.
There are some obvious questions that can be pursued directly building upon the presented results, i.e., a better approximation ration for \rai or even stronger inapproximability results for \rai or \rar{2}. 
Of course, an improved approximation ratio for any problem of the family would be interesting to develop. 
We would like to highlight \lrs{2}, in particular, as the in some sense easiest problem in the family without a known polynomial time approximation with ratio better than $2$.
Lastly, only very little is known regarding fixed-parameter tractable algorithms for this family of problems. 
For instance, it is open whether \rar{1} is fixed-parameter tractable with respect to the objective value.

\bibliography{rai-II.bib}

\end{document}